\numberwithin{equation}{section}
\newtheorem{theorem}{Theorem}
\newtheorem{lemma}[theorem]{Lemma}
\newtheorem{corollary}[theorem]{Corollary}
\newtheorem{definition}{Definition}
\providecommand{\href}[2]{\texttt{#2}}
\providecommand{\url}[1]{\texttt{#1}}
\newcommand{\half}{\frac{1}{2}}
\newcommand{\bea}{\begin{eqnarray}}
\newcommand{\eea}{\end{eqnarray}}
\newcommand{\beq}{\begin{equation}}
\newcommand{\eeq}{\end{equation}}
\newcommand{\nnel}{\nonumber \\ {}}
\providecommand{\eqref}[1]{(\ref{#1})}
\newcommand{\figref}[1]{Fig.~\ref{#1}}
\newcommand{\tabref}[1]{Table~\ref{#1}}
\newcommand{\secref}[1]{Section~\ref{#1}}
\newcommand{\appref}[1]{Appendix~\ref{#1}}
\newcommand{\defref}[1]{Definition~\ref{#1}}
\newcommand{\corref}[1]{Corollary~\ref{#1}}
\newcommand{\lemref}[1]{Lemma~\ref{#1}}
\newcommand{\Lemref}[1]{Lemma~\ref{#1}}
\newcommand{\thref}[1]{Theorem~\ref{#1}}
\newcommand{\Thref}[1]{Theorem~\ref{#1}}
\newcommand{\tsedef}[1]{\textsc{#1}}
\newcommand{\tsevec}[1]{\mathbf{#1}}
\newcommand{\tsemat}[1]{{\mathbf{\textsf{#1}}}}
\newcommand{\pvec}{\tsevec{p}}
\newcommand{\uvec}{\tsevec{u}}
\newcommand{\vvec}{\tsevec{v}}
\newcommand{\wvec}{\tsevec{w}}
\newcommand{\xvec}{\tsevec{x}}
\newcommand{\yvec}{\tsevec{y}}
\newcommand{\zvec}{\tsevec{z}}
\newcommand{\zerovec}{\tsevec{0}}
\newcommand{\Amat}{\tsemat{A}}
\newcommand{\Dmat}{\tsemat{D}}
\newcommand{\Lmat}{\tsemat{L}}
\newcommand{\Mmat}{\tsemat{M}}
\newcommand{\Pmat}{\tsemat{P}}
\newcommand{\Smat}{\tsemat{S}}
\newcommand{\Tmat}{\tsemat{T}}
\newcommand{\Wmat}{\tsemat{W}}
\newcommand{\Zmat}{\tsemat{Z}}
\newcommand{\unitmat}{\hbox{\textsf{1}\kern-.25em{\textsf{I}}}}
\newcommand{\Acal}{\mathcal{A}}
\newcommand{\Bcal}{\mathcal{B}}
\newcommand{\Ccal}{\mathcal{C}}
\newcommand{\Dcal}{\mathcal{D}}
\newcommand{\Ecal}{\mathcal{E}}
\newcommand{\Fcal}{\mathcal{F}}
\newcommand{\Gcal}{\mathcal{G}}
\newcommand{\Lcal}{\mathcal{L}}
\newcommand{\Mcal}{\mathcal{M}}
\newcommand{\Ncal}{\mathcal{N}}
\newcommand{\Pcal}{\mathcal{P}}
\newcommand{\Rcal}{\mathcal{R}}
\newcommand{\Scal}{\mathcal{S}}
\newcommand{\Scalhat}{\widehat{\mathcal{S}}}
\newcommand{\Tcal}{\mathcal{T}}
\newcommand{\Ucal}{\mathcal{U}}
\newcommand{\Vcal}{\mathcal{V}}
\newcommand{\Wcal}{\mathcal{W}}
\newcommand{\Xcal}{\mathcal{X}}
\newcommand{\Zcal}{\mathcal{Z}}
\newcommand{\Cbb}{\mathbb{C}}
\newcommand{\Rbb}{\mathbb{R}}
\newcommand{\Ggraph}{\mathfrak{G}}
\newcommand{\link}{\ell}
\newcommand{\linkmap}{\Lambda}
\newcommand{\linkmaptilde}{\tilde{\Lambda}}
\newcommand{\kout}{k^\mathrm{(out)}}
\newcommand{\Ccalsymm}{\mathcal{C}^\mathrm{(symm)}}
\newcommand{\Dcalsymm}{\mathcal{D}^\mathrm{(symm)}}
\newcommand{\Lcalsymm}{\mathcal{L}^\mathrm{(symm)}}
\newcommand{\Ggraphsymm}{{\Ggraph}^\mathrm{(symm)}}
\newcommand{\Hcalwc}{\mathcal{H}^\mathrm{(wc)}}
\newcommand{\Lcalwc}{\mathcal{L}^\mathrm{(wc)}}
\newcommand{\Ncalwc}{\mathcal{N}^\mathrm{(wc)}}
\newcommand{\Hcalsc}{\mathcal{H}^\mathrm{(sc)}}
\newcommand{\Lcalsc}{\mathcal{L}^\mathrm{(sc)}}
\newcommand{\Ncalsc}{\mathcal{N}^\mathrm{(sc)}}
\newcommand{\linkrev}{\ell^\mathrm{(rev)}}
\newcommand{\Scalrev}{\mathcal{S^\mathrm{(rev)}}}
\newcommand{\Crev}{C^\mathrm{(rev)}}
\newcommand{\itervectorsymbol}{w}
\newcommand{\itervector}{\tsevec{\itervectorsymbol}}
\newcommand{\itervec}[1]{\tsevec{\itervectorsymbol}{(#1)}}
\begin{document}

\renewcommand{\thefootnote}{\fnsymbol{footnote}}

\begin{center}
 {\Large\textbf{From Signed Networks to Group Graphs}
 }
 \\[\baselineskip]
 {\large \href{http://www.imperial.ac.uk/people/t.evans}{T.S.\ Evans}\footnote{ORCID:  \href{http://orcid.org/0000-0003-3501-6486}{\texttt{0000-0003-3501-6486}} }}
 \\[0.5\baselineskip]
 \href{http://complexity.org.uk/}{Centre for Complexity Science}, and \href{http://www3.imperial.ac.uk/theoreticalphysics}{Abdus Salam Centre for Theoretical Physics},
 \\
 Imperial College London, SW7 2AZ, U.K.
 \\
 6\textsuperscript{th} August, 2026 (version 3).
\end{center}

\begin{abstract}
I define a \tsedef{group graph} which encodes the symmetry in a dynamical process on a network. 
Group graphs extend signed networks, where links are labelled with plus or minus one, by allowing link labels from any group and generalising the standard notion of balance. 
I show that for processes on a balanced group graph the time evolution is completely determined by the network topology, not by the group structure.
This unifies and extends recent findings on signed networks (Tian \& Lambiotte, 2024a) and complex networks (Tian \& Lambiotte, 2024b). 
I will also relate the results discussed here to existing work such as 
the ``group labelling'' of Edelman and Saks (1979), 
the ``group graph'' of Harary, Lindstr\"{o}m \& Zetterstr\"{o}m (1982), 
a ``voltage graph'' (Gross, 1974), 
a ``gain graph'' (Zaslavsky 1989), 
and ``group synchronisation'' (Karp et al, 2003). 
I will work with a more general case where edges need not be reciprocated and the labels of reciprocated edges need not be inverses of each other.
Finally, I will review some promising applications for network dynamics and symmetry-driven modelling including status, clusterability, edges with a zero label, weak balance, unbalanced group graphs and using monoids.
\end{abstract}

\setcounter{footnote}{0}
\renewcommand{\thefootnote}{\arabic{footnote}}


\section{Introduction}

Complex networks are designed to work in unruly problems where there is little of the elegant symmetry that underpins much of science. So group theory has played no significant role in network analysis as illustrated by the lack of entries on symmetry or group theory in the index of most textbooks on complex networks. 
For example neither \citet{WF94} nor \citet{C21} discuss group theory and when they do discuss symmetry it is the context of the very limited symmetry of seen in pairs of reciprocated directed edges or in one undirected edge.
This can be contrasted with the prominent role of regular lattices in science where ideas such as translation invariance, rotation and reflection symmetries of the lattice are intimately linked to important physical properties through group theory.

However, symmetry also appears in science in terms of some ``internal space'' that is not connected to the symmetries of space and time. For example, our understanding of fundamental particles is built on symmetries of such internal spaces. Take the three types of pion discovered in 1947 and 1949. The pions have a very similar mass yet also have some striking yet systematic differences such as their electric charges of $+1$, $0$ and $-1$ (in units of the electron charge). It turns out these patterns can be understood in terms of the group $SU(2)$ as ``flavour'' symmetries in which two different types of quantum particle, the `up' and the `down' quark, are mixed by a symmetry at each space-time point i.e.\ this is a symmetry in some internal ``up-down'' space and has nothing to do with the symmetry of space-time. Such internal symmetries are still a key part of the current best theory of fundamental physics.

So the question posed in this paper is if there is a role for symmetry and group theory in network analysis, not in terms of the symmetries of the network (the equivalent of the space-time symmetry) but in terms of symmetries in processes occurring on the network, in some ``internal space'' added on top of the network. The answer is yes and such symmetries may be encoded in terms of a structure I call a \tsedef{group graph}. 

The inspiration for my construction comes from the work on conserved processes on signed networks by \citet{TL24}.
\tsedef{Signed networks} \citep{LS52,H53,R54,CH56,N61,HNC65,D67a,S68,AR68} emerged from social network analysis for contexts where relations between actors can be identified as one of three types: positive, none, or negative. Though the precise interpretation of these three types can vary depending on the study, it requires that a network is encoded with a link (a binary relation, an edge)  carrying either a positive or negative label, alongside the absence of a link. Signed networks have continued to be used in network analysis in a wide range of settings; for instance see \citet{DM96,DM09,KLB09,TB09,ST10,SLT10,WJB19,KCN19,ADRD20,TL24}. Signed networks often merit discussion in standard textbooks, for example see section 4.4 of \citet{WF94} or section 4.2 of \citet{C21}, and are of current interest as shown by a recent review \citep{DCGFVFT25}. 

One important concept for signed networks is the idea of \tsedef{structural balance}, often referred to as just \tsedef{balance}, derived from psychology \citep{H46}. This has a simple interpretation in terms of the number of negative links in any closed walk of a signed network \citep{H53,CH56}, a concept often used in the analysis of signed networks \citep{HNC65,DM96,DM09,ST10,SLT10,KCN19,ADRD20,TL24,DCGFVFT25}. I will generalise balance to the case of group graphs.

An important goal of this paper is to show how the results for a diffusion process on a balanced signed network of \citep{TL24} and for a balanced network with diffusing complex phases on the nodes \citep{TL24a} can be generalised to any internal space defined on the nodes when there is a symmetry in the bilateral interactions in this internal space.

Another aim of this paper is to connect the properties of group graphs outlined here to previous work.
The closest in style to the approach used here is that of \citet{HLZ82} who defined the same structure and used the same name. The motivation there comes from theories of social balance as encoded as signed networks \citep{H53,CH56,HNC65}. However, \citet{HLZ82} appears to have had little impact 
(at the time of writing it had only three citations on both
``\href{https://app.dimensions.ai/details/publication/pub.1009664995}{Dimensions}''
and 
``\href{https://www.scopus.com/record/display.uri?eid=2-s2.0-0020175823}{Scopus}'') 
and was only found by this author after the first version of this paper was posted. 
The oldest work on group graphs \citep{G74} sees them as a generalisation of Kirchoff's laws for electrical circuits with constant voltage and current flows, and there the group graph construction is known as a 
\tsedef{voltage graph}\footnote{For informal summaries see the \href{https://en.wikipedia.org/wiki/Voltage_graph}{voltage graph article on Wikipedia} and the \href{https://mathworld.wolfram.com/VoltageGraph.html}{voltage graph article on Mathworld}.}.
Work on voltage graphs forms part of topological graph theory and the goals of that field may explain why this previous work has not reached texts focussing on social network analysis, such as  \citet{WF94}, network science books, such as \citet{C21}, or recent work generalising signed networks \citep{TL24,TL24a,BP24,TKSL24}. The voltage graph approach also produced another branch, and another name, 
\tsedef{gain graphs}\footnote{For an informal summary see the \href{https://en.wikipedia.org/wiki/Gain_graph}{Wikipedia article on gain graphs} 
or  \href{https://people.math.binghamton.edu/zaslav/Bsg/index.html}{The Home Page of Signed, Gain, and Biased Graphs by Thomas Zaslavsky}.} 
\citep{Z89} typically applied to biased graph theory and matroid theory. Again, this gain graph viewpoint neither connected with the older social network analysis approach of \citet{HLZ82} and neither has it been connected to recent work on signed networks so links to related results in other areas will be made at appropriate points in this paper. In a similar way, the \tsedef{group labelling} of a graph due to \citet{ES79} is another language and viewpoint for the same concepts but here framed in terms of abstract mathematics. While this does not seem to have had much impact 
(two citations on ``\href{https://app.dimensions.ai/details/publication/pub.1024713995}{Dimensions}'' and five on ``\href{https://www.scopus.com/pages/publications/0039439105}{Scopus}''), 
\citet{MS97} do link this work to Harary's work on signed networks in social science \citep{H53}. The same ideas also underlie the topic of \tsedef{group synchronisation} \citep{KEES03,GK06a,S11e,AMMSS18,AF19,S21e} where, for example, we are trying to synchronise clocks or phases of nodes in a network given an estimate of the time or phase difference between nodes at the ends of an edge.

The novelty in this paper comes from the links made between all these different approaches and that the work here is for the most general case of an arbitrary directed graph and an arbitrary non-Abelian group. The literature cited here is often developed for undirected graphs or for abelian groups, or both, but the key concepts are already in the literature though appearing in many different guises.

The outline for the paper is as follows. I start in \secref{s:background} by introducing the notation and terminology used here for networks and groups. The main definitions and properties of group graphs are then developed in \secref{s:groupgraphs} including the definition of a group graph and a balanced group graph. I apply these ideas in the context of processes on networks in \secref{s:dynamics}. I will finish with a discussion in \secref{s:discussion}. There I will also examine connections with previous work on voltage graphs, group graphs, gain graphs, group synchronisation, group labelling and so forth \citep{G74,ES79,HLZ82,Z82,Z89,KEES03,GK06a,S11e}. 
This final discussion will also include a review of some of some promising applications for network dynamics and symmetry-driven modelling using group graphs including status, clusterability, the representation of edges with a zero label or weight, weak balance, unbalanced group graphs and using monoids not graphs.

\section{Background}\label{s:background}

In this section I will set out my notation and at the same time highlight some of the basic properties of groups and networks used in the rest of the paper. In particular, I will define some terminology for directed networks that is less well known.

In general I will use braces $\{ \ldots\}$ to denote a collection of elements with no order while parentheses $( \ldots )$ indicate a sequence where the elements of the collection come with a specific order. I will always write a sequence as 
\beq 
 (x_m)_{m=1}^L = (x_m, x_{m-1}, \ldots, x_2,x_1) \, .
\label{e:seqdef}
\eeq 
That is I will place the first element $x_1$ on the far right and the last element on the far left of any expression. While this does not match the left to right order in the text, it is consistent with the right to left conventions used here for matrix/vector algebra.

\subsection{Networks}\label{s:net}

A \tsedef{network} (or \tsedef{graph}) is a set of nodes (vertices) $\Ncal$ and a set of links (edges) $\Lcal$ which are pairs of nodes. 
Most of this paper is about \tsedef{directed networks} $\Dcal=(\Ncal,\Lcal)$ (also known as ``digraphs'') where links are ordered pairs of nodes denoted as the sequence $(v,u) \in \Lcal$ to indicate a link from node $u$ to node $v$ in accordance with the convention for sequences set out in \eqref{e:seqdef}.
The link $(v,u)$ in a directed network is distinct from the link $(u,v)$. \tsedef{Self-loops} are allowed such as the directed link $(v,v)$ from node $v$ back to itself. A link $(v,u)$ is said to be \tsedef{reciprocated} if there is also an link in the opposite direction, i.e.\ if $(u,v)$ is also an link in the network.   Note that the reciprocated links need not be identical in any other ways and a pair of reciprocated links may carry different weights or labels. 

This link notation does not support multilinks, where there can be more than one distinct link from a node $u$ to a node $v$. For the majority of the time I will use the notation outlined above to simplify the text.  Most results discussed here can be generalised with little trouble networks with multilinks.

A \tsedef{walk} $\Wcal(v_L,v_0)$ from node $v_0$ to node $v_L$ on a network (directed or undirected) is a sequence
of nodes, $(v_L,v_{L-1},\ldots,v_2,v_1,v_0)=(v_n)_{n=0}^{L}$ where there is a link $\link_n$ from each node $v_{n-1}$ to the next node in the sequence $v_{n}$ so that $\ell_m = ( v_{m},v_{m-1} ) \in \Lcal$ where $m\in\{1,\ldots,L\}$ ($\ell_m = \{ v_{m},v_{m-1} \} \in \Lcal$ for an undirected network).   The \tsedef{length} $L$ of a walk is the number of nodes in the walk minus one, $L(\Wcal) = |\Wcal|-1$.
More formally a walk $\Wcal$ on a directed network satisfies 
\beq
	\Wcal
	= 
	\big(
	v_n | v_n \in \Ncal, n\in\{0,1,2,\ldots,L\} 
	\, ,  
	\text{ and } 
	\link_m = (v_{m}, v_{m-1}) \in \Lcal, \, m\in\{1,2,\ldots,L\} 
	\big) 
	\, .
	\label{e:walkdef}
\eeq
Later I will use the notation that $\Wcal(v,u)$ is some walk starting from node $u$ and ending at node $v$. In general there are many such walks but I will show later that in special cases, some properties of group graphs only depend on the initial and final nodes in a walk.


The \tsedef{links of a walk} $\Lcal(\Wcal)$ form the sequence $( \link_m )_{m=1}^{L} = ( (v_m,v_{m-1}) )_{m=1}^{L}$ of subsequent pairs of nodes in the walk\footnote{For networks with multilinks a walk must be defined in terms of the sequence of links in the walk since a sequence of two nodes does not uniquely define an edge. However, it is straightforward to rewrite the work here in terms of walks defined using edge sequences and all the key ideas still work in this context.}. The length of a walk is also the number of links in the walk.

I can join two walks to create a new walk if the second walk $\Xcal$ starts where the first walk $\Wcal$ finished. More formally, I can concatenate two walks $\Wcal = (v_n)_{n=0}^{L}$ and  $\Xcal = (u_m)_{m=0}^{M}$ to create a new walk $\Xcal \circ \Wcal$ provided the last node of the first part of the walk $\Wcal$, shown on the on the right, is identical to the first node of the second part of the walk $\Xcal$, shown on the left, i.e.\ $v_L = u_0$. That is 
\begin{subequations}
	\begin{eqnarray}
		\Xcal \circ \Wcal
		 =
		 (u_m)_{m=0}^{M} \circ (v_n)_{n=0}^{L}
		 &=&
		 (u_M, u_{M-1}, \ldots, u_1,u_0 = v_L, v_{L-1}, \ldots, v_1,v_0  )
		 \\
		 &=&
		 (y_r )_{r=0}^{M+L}
		 \quad \text{where}
		 \quad
		 y_r
		 =
		 \begin{cases}
		 v_r & 0 \leq r \leq L \\
		 u_{r-L} & L \leq r \leq M+L
		 \end{cases}
	\end{eqnarray}
	\label{e:walkconcat}
\end{subequations}
It useful to allow a \tsedef{trivial walk} $\Wcal_v$ of length zero which is the sequence of a single node $v$, so $\Wcal_v=(v)$. 
A \tsedef{closed walk} $\Ccal$ in a network (sometimes known as a \tsedef{circuit}) is a walk in which the first and last nodes are identical. More formally
\begin{multline}
	\Ccal
	= 
	\Big(v_n | v_n \in \Ncal, n\in\{0,1,2,\ldots,L\} 
	\, ,  \;
	v_0=v_L
	\, ,  
	\\
	\text{ and } 
	\;
	\link_m = (v_{m}, v_{m-1}) \in \Lcal, \, m\in\{1,2,\ldots,L\} 
	\Big) 
	\, .
	\label{e:cwalkdef}
\end{multline}
I define a \tsedef{cycle} $\Ccal^\mathrm{(cycle)}$  to be a closed walk where all the nodes are distinct except the first and last which are equal\footnote{It is redundant to include both the first and last nodes in the notation of a cycle as occurs here. However, to avoid additional notation I will simply use the notation for a walk to represent cycles.}, i.e.\
\begin{multline}
	\Ccal^\mathrm{(cycle)}
	= 
	\Big(v_n | v_n \in \Ncal, n\in\{0,1,2,\ldots,L\} 
	\, ,  
	\\
	\text{ and } 
	\;
	v_x=v_y \; \text{iff} \;  x=y \, \text{or} \, x=0, y=L \, \text{or} \, x=L, y=0 
	\, ,  
	\\
	\text{ and } 
	\;
	\link_m = (v_{m}, v_{m-1}) \in \Lcal, \, m\in\{1,2,\ldots,L\} 
	\Big) 
	\, .
	\label{e:cycledef}
\end{multline}
Note that this also ensures that every link $\ell_m \in \Lcal(\Ccal)$ in a cycle is distinct unlike for a closed walk where links and nodes can appear more than once. It is important to note that some definitions of cycle in the literature put a lower limit on the size of a cycle. Here, cycles can be of any length including zero as the trivial walk $\Wcal_v =(v)$ from node $v$ satisfies the formal definition of a cycle \eqref{e:cycledef} with length $L=0$.

I will also need to look at two special types of network. Following \citet{HNC65,DM96,DM09}, I define a \tsedef{signed directed network} to be a directed network where every link has one of two labels: a label for a ``positive'' link --- I will use $e$ --- or a label for a ``negative'' link  --- I will use $a$. These labels are almost always represented by the numbers $+1$ or $-1$ respectively and I will also explain this second notation later. Several examples of signed networks are shown in figures below. The other type of network I will mention is a \tsedef{multilayer network} \citep{KABGMP14,B18d}, or see sec.4.2 of \citet{C21}. In the version I use here, there are several layers to the network and each nodes exists on every layer but now the links may connect the copy of a node $u$ on layer $\alpha$ to a copy of a second node $v$ lying on layer $\beta$ where the layers $\alpha$ and $\beta$ need not be the same.

\subsubsection{Connectivity in directed networks}

My main focus is on directed graphs where the connectivity between nodes (``reachability'') can be asymmetric. That is there can be a walk from node $u$ to node $v$ but this does not guarantee that there is a walk from node $v$ to node $u$. 
The definition of the concept of {balance} for signed directed networks \cite{HNC65,DM96,DM09} does not use walks and cycles, but instead uses the concepts of a \tsedef{semiwalk}  and a \tsedef{semicycle}. Essentially a semiwalk moves from node to neighbouring node, so subsequent nodes are connected by at least one link. However, unlike a walk, the link can be in either direction. 
More formally 
\begin{definition}[Semiwalk] \label{d:semiwalk}
	A \tsedef{semiwalk} $\Scal$ in a directed network $\Dcal = (\Ncal,\Lcal)$ is a sequence of nodes, $(v_n)_{n=0}^L$ 
	where there is a directed link $\link_m$ from each node $v_{m-1}$ to the next node in the sequence $v_{m}$ or, if that does not exist, there is a directed link in the opposite direction. 
	More formally a semiwalk $\Wcal$ is the sequence 
	\begin{multline}
		\Scal= \big(v_n | v_n \in \Ncal, n\in\{0,1,2,\ldots,L\}  
		\, ,  
		\\
		\text{ and } \;\;
		(v_{m}, v_{m-1})   \in \Lcal 
		\; \text{or} \; 
		(v_{m-1}, v_{m})  \in \Lcal 
		\, , \;
		m\in\{1,2,\ldots,L\} 
		\big) 
		\, .
		\label{e:semiwalkdef}
	\end{multline}
\end{definition}
The \tsedef{length} $L$ of a semiwalk is the number of nodes in the walk minus one, $L(\Scal) = |\Scal|-1$. 

The \tsedef{links of a semiwalk} $\Lcal(\Scal)$ are defined to be the sequence of links $( \link_m )_{m=1}^{L}$ used in \eqref{e:semiwalkdef} so that each link $\ell_m$ is always a link in directed graph $\Dcal=(\Ncal,\Lcal)$. That is 
\begin{align}
	\Lcal(\Scal) =& \{ \ell_m | m=1,2,\ldots,L \} \, ,
	\nonumber
	\\
	&
	\ell_m 
	=
	\begin{cases}
		( v_{m},v_{m-1} ) & \text{if} \; ( v_{m}, v_{m-1} )     \in \Lcal \\
		( v_{m-1},v_{m} ) & \text{if} \; ( v_{m},v_{m-1} ) \not\in \Lcal \;\text{and}\;  ( v_{m-1},v_{m} )\in \Lcal
	\end{cases}
	\label{e:semiwalklinks}
\end{align}
It is important to stress that the consecutive pairs of nodes $(v_m,v_{m-1})$ ($m=1,2,\ldots,L$) of a semiwalk $\Scal=(v_n)_{n=0}^L$ are \emph{not} always links on the directed network $\Dcal$. That is not all semiwalks are walks of \eqref{e:walkdef} but walks are always a special type of semiwalk. 
For example, in \figref{f:symmnetex} there is just one walk of length three from node 1 to node 4, the path $(4,3,2,1)$. However, there while there is no walk starting from node $4$ other than the trivial walk $(4)$, there are semiwalks to all other nodes such as a semiwalk $(1,2,3,4)$ of length 3 from node $1$ to node $4$.

Since a semiwalk is a sequence of nodes, I can define concatenation of semiwalks in exactly the same way as I did with walks in \eqref{e:walkconcat}. That is if $\Tcal = (u_m)_{m=0}^{M}$ and  $\Scal = (v_n)_{n=0}^{L}$ are both semiwalks where the last node $v_L$ of $\Scal$ is the first node $u_0$ of $\Tcal$ then
\bea
\Tcal \circ \Scal
=
(u_m)_{m=0}^{M} \circ (v_n)_{n=0}^{L}
&=&
(y_r )_{r=0}^{M+L}
\quad \text{where} \in \Lcal 
\quad
y_r
=
\begin{cases}
	v_r & 0 \leq r \leq L \\
	u_{r-L} & L \leq r \leq M+L
\end{cases}
\, .
\label{e:swalkconcat}
\eea

A \tsedef{closed semiwalk} is a semiwalk  where the first and last nodes are the same. A \tsedef{semicycle} is a closed semiwalk where all the nodes in the semiwalk are distinct except for the first and last nodes which are equal to each other but to no other node in the semicycle. Again, not all semicycles are cycles but cycles are always a special type of semicycle. For example $(1,3,2,1)$ in \figref{f:symmnetex}.i is a cycle and therefore a semicycle but in the opposite order, $(1,2,3,1)$, is a semicycle but not a cycle in \figref{f:symmnetex}.i.
Also, I place no restriction on the length of a semicycle so a trivial walk $(v)$ of length zero is both a cycle and a semicycle, the only ones of length zero.

\begin{figure}[htb]
	\begin{center}
		\includegraphics[width=0.6\textwidth]{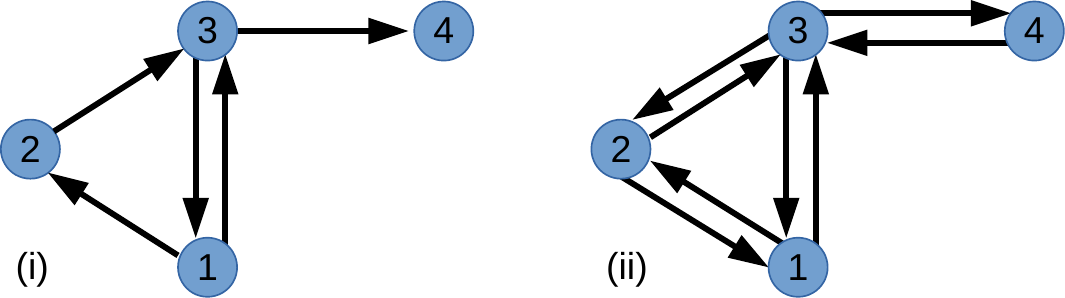}
	\end{center}
	\caption[Symmetrised network example.]{An example of (i) a directed network $\Dcal$ and (ii) its symmetrised form $\Dcalsymm$. In (i) some links, such as between nodes $1$ and $3$ are reciprocated, but most, such as between $3$ and $4$, are not. Network $\Dcal$ is not strongly connected as there are no walks from node $4$ to any other node. Rather $\Dcal$ has a single weakly-connected component, e.g.\ no walk to any other node starts from node $4$. The components of the symmeterised network $\Dcalsymm$ are always strongly connected as illustrated shown here. This strong connectivity reflects the fact that the symmetrised network are always the maximally reciprocated directed graphs so (ii) is also a directed representation of an undirected network, the undirected version of (i).}
	\label{f:symmnetex}
\end{figure}

Another way to look at semiwalks and semicycles is that they correspond to ordinary walks and cycles on the \tsedef{symmetrised directed graph} $\Dcalsymm=(\Ncal,\Lcalsymm)$ of a directed graph $\Dcal=(\Ncal,\Lcal)$, for example see p.11 and p.67 of \citet{HNC65}. The symmetrised directed graph is created from any directed graph $\Dcal$ by finding all links that are not reciprocated in $\Dcal$ and adding new links to make sure all links are reciprocated in $\Dcalsymm$. 
\begin{definition}\label{d:symm}
	The \tsedef{symmetrised directed graph} $\Dcalsymm$ of a directed graph $\Dcal=(\Ncal,\Lcal)$ 
	is defined to be the directed graph with the same nodes $\Ncal$ as the original graph $\Dcal$ but where every directed link is reciprocated.
	\begin{align}
		\Dcal=(\Ncal,\Lcal) 
		\quad    \stackrel{\mathrm{symm}}{\longmapsto} \quad
		& \Dcalsymm=(\Ncal,\Lcalsymm) 
		\nonumber
		\\
		&
		\text{where} \quad 
		\Lcalsymm
		= \big\{ (v,u ) | ( v,u )     \in \Lcal \; \text{or} \; ( u,v )     \in \Lcal \big\} 
		\, .
		\label{e:sdglinks}
	\end{align}
\end{definition}
See \figref{f:symmnetex} for an example of an unreciprocated network and its symmetrised directed graph.

By comparing the definition of allowed sequential node pairs in the semiwalk in \eqref{e:semiwalklinks} with the links in the symmetrised network \eqref{e:sdglinks}, I see that 
\begin{lemma}[Semiwalks and symmetrised networks] \label{l:swsg}
	A semiwalk $(v_n)_{n=0}^L$ on a directed network $\Dcal$ is equivalent to a walk $(v_n)_{n=0}^L$ on the symmetrised group graph $\Dcalsymm$.
\end{lemma}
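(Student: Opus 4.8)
The plan is to unfold both definitions and check that the membership conditions coincide termwise. Concretely, I would argue as follows.

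\medskip

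\noindent\textbf{Proof proposal.} First I would observe that both objects in the statement are, by construction, sequences of nodes of the \emph{same} node set $\Ncal$ indexed by $n\in\{0,1,\ldots,L\}$; the symmetrisation operation of \defref{d:symm} changes only the link set, not the node set. So it suffices to show that a given sequence $(v_n)_{n=0}^L$ satisfies the defining consecutive-pair condition for a semiwalk on $\Dcal$ (equation \eqref{e:semiwalkdef}) if and only if it satisfies the defining consecutive-pair condition for a walk on $\Dcalsymm$ (equation \eqref{e:walkdef}). I would then fix an index $m\in\{1,2,\ldots,L\}$ and compare the two local conditions: for the semiwalk the condition is $(v_m,v_{m-1})\in\Lcal$ or $(v_{m-1},v_m)\in\Lcal$; for the walk on $\Dcalsymm$ it is $(v_m,v_{m-1})\in\Lcalsymm$. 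By the defining formula \eqref{e:sdglinks} for $\Lcalsymm$, the pair $(v_m,v_{m-1})$ lies in $\Lcalsymm$ precisely when $(v_m,v_{m-1})\in\Lcal$ or $(v_{m-1},v_m)\in\Lcal$, which is exactly the semiwalk condition. Since this equivalence holds for every $m$, the two sequences of conditions are identical, and hence the two notions describe the same sequences. I would also note in passing that the induced link sequences match up under the obvious correspondence, so the equivalence is not merely one of underlying node sequences but of the combinatorial objects with their links, which is what later applications (e.g.\ transporting group labels along semiwalks) will use.

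\medskip

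\noindent I expect this to be essentially a one-line unpacking argument rather than something with a genuine obstacle; the only mild subtlety worth spelling out is the treatment of reciprocated links in $\Dcal$. If the pair $(v_m,v_{m-1})$ is already reciprocated in $\Dcal$ (both $(v_m,v_{m-1})$ and $(v_{m-1},v_m)$ in $\Lcal$), then \eqref{e:sdglinks} simply leaves it in place and the semiwalk link-selection rule \eqref{e:semiwalklinks} picks the forward link; if only the reverse link $(v_{m-1},v_m)$ is present in $\Dcal$, then $\Dcalsymm$ adds the forward link $(v_m,v_{m-1})$, and again \eqref{e:semiwalklinks} records the reverse link as the link actually traversed. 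Either way the set of admissible steps is the same, so there is no edge case that breaks the correspondence. The statement as phrased is about sequences of nodes, for which this argument is complete; if one wanted the stronger statement at the level of link sequences on a multigraph, one would instead phrase walks and semiwalks via their link sequences (as the footnotes in the excerpt indicate) and carry out the identical comparison there.

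Finally, I would present the conclusion as an iff: every semiwalk on $\Dcal$ is a walk on $\Dcalsymm$ with the same node sequence, and conversely every walk on $\Dcalsymm$ arises in this way from a semiwalk on $\Dcal$. This is exactly the bidirectional equivalence asserted in \lemref{l:swsg}.
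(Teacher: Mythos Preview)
Your proposal is correct and follows essentially the same approach as the paper: both arguments simply compare, for each consecutive pair $(v_m,v_{m-1})$, the semiwalk condition ``$(v_m,v_{m-1})\in\Lcal$ or $(v_{m-1},v_m)\in\Lcal$'' from \eqref{e:semiwalkdef} with the definition \eqref{e:sdglinks} of $\Lcalsymm$ and observe they are identical. Your write-up is in fact more careful than the paper's, which gives only the forward direction in a single sentence, whereas you spell out the iff and the reciprocated-link edge case explicitly.
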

This is simply because the condition for successive node pairs in a semiwalk $\Scal=(v_n)_{n=0}^L$ \eqref{e:semiwalklinks} is that there is a link between successive nodes in \emph{either} direction, that is one or both of $(v_{m},v_{m-1})$ and $(v_{m-1},v_{m})$ must be links in $\Dcal$. This is the condition \eqref{e:sdglinks} for a directed link to exist in both directions in the symmetrised version of the directed network $\Dcalsymm$ so the requirement for a walk, that $(v_{m},v_{m-1})$ always exists is guaranteed to be true in the symmetrised version $\Dcalsymm$ of the directed graph $\Dcal$.

Finally I use semiwalks to help define the connectivity of directed networks. I define a \tsedef{weakly connected component} to be a maximal subgraph $\Hcalwc=(\Ncalwc,\Lcalwc)$  where there is a \emph{semiwalk} between every pair of nodes in this subgraph. The maximal means that I cannot add any more nodes from $\Ncal$ to this subset $\Ncalwc$. The edge set is then all links $\Lcalwc \subseteq \Lcal$ containing nodes from $\Ncalwc$, i.e.\ this is an ``induced subgraph'', one induced from the node set $\Ncalwc$.
In a similar way I define a \tsedef{strongly-connected component} $\Hcalsc=(\Ncalsc,\Lcalsc)$ of a network to be one where there is always a \emph{walk} from every node to every other node in the component. Again I want the node set $\Ncalsc$ to be maximal (can not add any more nodes to this subset) and the links in $\Lcalsc$ are all those which connect nodes in this subset (an induced subgraph).

\subsection{Groups}\label{s:groups}

There are many texts on group theory (such as \citet{S98,J03}) and I will summarise the key results as a way to define my notation. 
A \tsedef{group} $\Gcal$ is a set of `elements' and a `group multiplication' law denoted by the symbol `$\times$' (formally a map $\times : \Gcal \to \Gcal$) that obey four axioms: closure ($g_1 \times g_2 \in \Gcal$), associativity ($g_1 \times (g_2 \times g_3) = (g_1 \times g_2) \times g_3$), identity ($\exists \; e \in \Gcal$ s.t.\ $e \times g = g \times e =g$) and inverse ($\exists \; (g)^{-1}\in \Gcal$ s.t.\ $g \times (g)^{-1}=(g)^{-1} \times g=e$) where these hold for any group elements $g,g_1,g_2,g_3 \in \Gcal$.  For simplicity, I sometimes write $g_1 g_2$ or $g_1 .g_2$ instead of $g_1 \times g_2$.  I will use lowercase Latin letters (such as $a$, $b$, $g$) to denote abstract elements in the group and I will always denote the identity element as $e$. 

It is important to note that elements do not always commute, that is $g_1 \times g_2 \neq g_2 \times g_1$ is not required. Where all elements in a group do commute, the group is called an \tsedef{abelian group} otherwise the group is known as a \tsedef{non-abelian group}. A \tsedef{finite group} contains a finite number of elements. 

I will call upon some basic properties of groups. For instance, it may be shown that the identity element $e$ is unique and that the inverse $g^{-1}$ of every element $g$ is also unique (the unique inverse may or may not be the same as $g$). The inverse of a product of the product of the inverse of the individual elements but in the opposite order, that is $(g_1\times g_2)^{-1} = (g_2)^{-1} \times (g_1)^{-1}$ for any group elements $g_1$ and $g_2$.

If I look at the set of group elements where they are all right multiplied by any one of the group elements, say $g_r$, then this new set of elements is just the set of group elements again, so if $\Gcal_r = \{g \times g_r | g \in \Gcal\}$ then $\Gcal_r=\Gcal$. This rearrangement theorem also applies to left multiplication by some fixed $g_r$.

Two finite groups will be used as examples in this paper. The smallest non-trivial group is the group of two elements $\{e,a\}$ which is abelian and is often denoted as $Z_2$. The group multiplication laws follow from the property of the identity element $e$ along with the rule that $a^2 \equiv a \times a = e$. The second example is the smallest non-abelian group, denoted here as $S_3$. The multiplication table is given in \tabref{t:S3}.
\begin{table}[htb]
	\begin{center}
	\begin{tabular}{c|cccccc}
	       & $e$    & $a$    & $a^2$  & $b$    & $ab$   & $a^2b$ \\ \hline
		   &        &        &        &        &        &        \\[\dimexpr-\normalbaselineskip+2pt]
	$e$    & $e$    & $a$    & $a^2$  & $b$    & $ab$   & $a^2b$ \\
	$a$    & $a$    & $a^2$  & $e$    & $ab$   & $a^2b$ & $b$    \\
	$a^2$  & $a^2$  & $e$    & $a$    & $a^2b$ & $b$    & $ab$   \\
	$b$    & $b$    & $a^2b$ & $ab$   & $e$    & $a^2$  & $a$    \\
	$ab$   & $ab$   & $b$    & $a^2b$ & $a$    & $e$    & $a^2$  \\
	$a^2b$ & $a^2b$ & $ab$   & $b$    & $a^2$  & $a$    & $e$    \\
    \end{tabular}
	\end{center}
	\caption{The Cayley table for the group $S_3$ which is the smallest non-abelian group. 
		This group can be seen in many different contexts such as the permutations of three objects or in the symmetries of an equilateral triangle. 
		The entry in a row labelled by element $g_r$ and column labelled by $g_c$ represents the result of $g_r \times g_c$, i.e.\ I left-multiply by the element labelling the row  and right multiply by the column label. 
		The group is shown in terms of three abstract elements $e,a,b$ known as \tsedef{generators} of the group as every element of the group can be expressed in terms of these three elements or their products. In the table I use the notation $ab$ to denote the product $a \times b$, $a^2 \equiv a \times a$ and so forth. It is useful to note that the elements of $S_3$ have the
		property that $a^nb=ba^{3-n}$ for $n=1,2$ which encodes the non-abelian behaviour in this group.}
	\label{t:S3}
\end{table}

\section{Group Graphs}\label{s:groupgraphs}

Group graphs are a generic theoretical construction seeking to generalise the binary labels found on signed networks to give a wider family of graphs carrying weights on their edges, ``link labels'', which come from a group. In this section I will give the basic definitions and look at some special cases that generalise the idea of balance in signed networks. I will first define these new group graphs in terms of abstract groups and then in terms of representations of groups. 
Here I will work with directed graphs looking at undirected graphs later in \secref{s:udgg}.

\subsection{Definition of a Group Graph}\label{s:groupgraphdef}

I will work with a directed graph (digraph) $\Dcal = (\Ncal,\Lcal)$.  
By working with the maximal reciprocal graph of \defref{d:symm}, undirected networks are included as a special case. The group $\Gcal$ is arbitrary and can have an infinite number of elements. However I will use examples from the two finite groups $Z_2$ and $S_3$. The first step is to define a \tsedef{group graph} in terms of group elements as labels on the links.

\begin{definition}[Group Graph] \label{d:grpgrph}
	A \tsedef{group graph} $\Ggraph=(\Dcal,\Gcal,\linkmap)$ is a directed graph $\Dcal$ where every link $\link\in\Lcal$ is labelled by a group element $g_\link = \linkmap(\link) \in \Gcal$ called a \tsedef{link label}.
\end{definition}
Rather than numerical values usually assigned as the weights of links in weighted network, in a group graph each directed link is labelled by an abstract group element. The group associated with any given group graph is implicit in the notation for the function $\linkmap$, that is formally $\linkmap$ is a map $\linkmap(\link): \Lcal \to \Gcal$ so that $\link \mapsto g_\link = \linkmap(\link)$. 
I will also use the notation $g_{vu}\equiv\linkmap((v,u))$ for the label on a link $\ell=(v,u)$ from node $u$ to node $v$. 
Some examples are shown in \figref{f:ggex}. In particular note that the $Z_2$ group graph is by definition a type of signed directed network as here the identity element of $Z_2$ labels positive edges while $Z_2$ element $a$ labels negative edges \citep{H53,HNC65,DM96,DM09}.

\begin{figure}[htb]
	\begin{center}
		\includegraphics[width=0.9\textwidth]{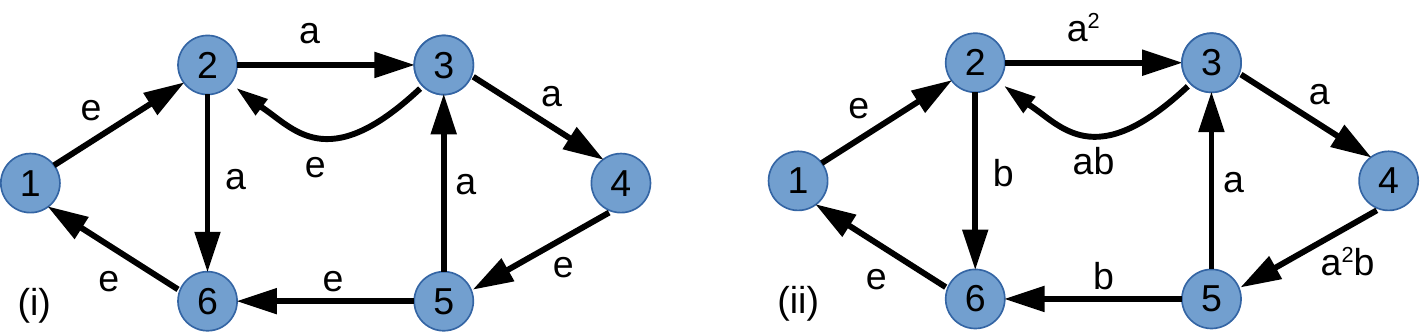}
	\end{center}
	\caption{Two examples of a group graph. In this case, both examples have the same underlying network topology, the same adjacency matrix, representing the same directed graph. The two distinct group graphs differ in the groups used to label the links. The group graph (i) on the left uses the abelian group of two elements $Z_2 = \{e,a\}$. In network (ii) on the right, the link labels come from the non-abelian group $S_3$ as described in \tabref{t:S3}. Group elements for each link are shown next to the relevant link. Both are examples of unbalanced group graphs. The example on the left is also an example of a signed network (e.g\ as defined in \citet{DM09}) where positive (negative) links are labelled by $e$ ($a$). This signed network is not balanced.}
	\label{f:ggex}
\end{figure}

\begin{definition}[The label of a walk] \label{d:walkgpelm}
	The label $\linkmap(\Wcal)$ of a walk $\Wcal=(v_n)_{n=0}^{L}$ on a group graph $\Ggraph=(\Dcal,\Gcal,\linkmap)$ is the group element given by the product of the group elements labelling the links in the walk in the order in which they appear in the walk, with the group elements from the first (last) link on the right (left). That is\footnote{Note this definition extends the domain of the function $\linkmap$ from the set of links $\Lcal$ required in the  \defref{d:grpgrph} to the set of all walks $\Wcal^\mathrm{(all)}$, which includes the set of links as the walks of length one. That is $\linkmap : \Wcal^\mathrm{(all)} \to G$, $\Wcal \mapsto g_\Wcal = \linkmap(\Wcal)$.} 
	\begin{subequations}\label{e:walklabel}
	\bea
	\linkmap(\Wcal) 
	&=& 
	\linkmap(\link_n) \times \linkmap(\link_{n-1}) \times \ldots \times \linkmap(\link_{2}) \times \linkmap(\link_1)    
	=
	g_{\link_n} g_{\link_{n-1}}  \ldots g_{\link_{2}} g_{\link_1}  \, ,  
	\\ 
	&& 
	\Wcal =(v_n)_{n=0}^{L}
	\, ,  
	\quad 
	\link_m = (v_{m},v_{m-1}) 
	\, ,
	\quad  
	m \in \{1,2,\ldots,L\}   
	\, .
	\eea
	\end{subequations}
\end{definition}
Note the order in the notation used for the elements in a walk label. That is the group element associated with the first link in the walk is placed on the far right, with the group element for each subsequent link in the walk placed to the left of group elements associated with links earlier in the walk. This order is important for non-abelian groups. While other conventions for the notation are possible, I have tried to be consistent, i.e.\ for any sequence I write the first element on the right while the last element of the sequence is on the left. For instance, my notation for links (walks of length one) is consistent with this; $(v,u)$ is the link from node $u$ to node $v$.

For example, in \figref{f:ggex} let $v_i \in \Ncal$ be the node with integer label $i$ in the figure. The walk from node $v_1$ to $v_4$ via node $v_2$ and then $v_3$ is written as the sequence $\Wcal = (v_4,v_3,v_2,v_1)$ with the first node on the right and the last node on the left. The label of this walk is $\linkmap(\Wcal)= \linkmap((v_4,v_3)) \times \linkmap((v_3,v_2)) \times \linkmap((v_2,v_1))$ and this is $\linkmap(\Wcal)= a \times a \times e = e$ for the $Z_2$ group graph (i) on the left while it is $a \times a^2 \times e = e$ for the $S_3$ example (ii) on the right. 

Finally, the definition of the label of the walk is consistent with the definition of the concatenation of two walks. That is if I have walks $\Wcal = (v_n)_{n=0}^{L}$ and  $\Xcal = (u_n)_{n=0}^{M}$ where the last node of $\Wcal$ is also the first node of $\Xcal$, i.e.\ $v_L = u_0$, then I can join these walks together to give a new walk $\Xcal \circ \Wcal$ as defined in \eqref{e:walkconcat}. What I can see is that the label $\linkmap(\Xcal \circ \Wcal)$ of the combined walk $\Xcal \circ \Wcal$ is simply the group product of the labels of the two constituent walk  in the same order (label of first part of the walk on the right, label of the last part of the walk on the right) 
\beq
 \linkmap(\Xcal \circ \Wcal) = \linkmap(\Xcal) \times \linkmap(\Wcal) \, .
 \label{e:concatwalklabel}
\eeq 
This follows from \defref{d:walkgpelm} of the label of the walk as the product of the labels of individual links (walks of length one) in the walk in the order in which the links appear in the walk.

Note that for consistency, the label $\linkmap(\Wcal_v)$ of any trivial walk $\Wcal_v$ of length zero, i.e.\ the sequence of just one node and no links, so  $\Wcal_v=(v)$, must be defined to be the identity, that is $\linkmap(\Wcal_v)=e$ for any node $v$.

\subsection{Balanced group graphs}\label{s:balance}

\subsubsection{Labels of Semiwalks}

The aim is to discuss balance on group graphs by using the properties established for labels of walks on group graphs built from the symmetrised  directed graph $\Dcalsymm$ in order to compare this to the use of semiwalks and semicycles on a group graph $\Ggraph$ built from a directed graph $\Dcal$. The main issue is how to define the label on links added when a graph is symmetrised. The idea is that whenever I add a new link to get to the symmetrised directed graph $\Dcalsymm$ I use the \emph{inverse} group element of the existing reciprocal link\footnote{The general idea is that if I move along one link from node $u$ to node $v$ then move back, from $v$ to $u$, this second step is reversing the first move.  Therefore it makes sense that in terms of the properties of the group label on reciprocated edges, I might want to look at reversing the effect of the link label on the first link, and for a group element, reversing or undoing is what the inverse of a group element does. This picture will be made more precise when looking at processes on a network in \secref{s:dynamics}.}.
The mathematical reason for this will emerge once I define balance in group graphs. 
\begin{definition}[Symmetrised Group Graph] \label{d:sgrpgrph}
	The \tsedef{symmetrised group graph} $\Ggraphsymm=(\Dcalsymm, \Gcal, \linkmap)$ of a group graph $\Ggraph=(\Dcal, \Gcal, \linkmap)$ is defined to be the group graph of the {symmetrised directed graph} $\Dcalsymm$. If a link in the original group graph $\Ggraph$ is not reciprocated, then the reciprocal link is added to $\Lcalsymm$ with label equal to the inverse of the original link. If the reciprocal of a link already exists in $\Ggraph$ then then labels are left unchanged. 
 
	That is
	\beq
	\linkmap\big((v,u)\big) =  
	\begin{cases}
		\;\; \linkmap\big((v,u)\big)            & \text{if} \; (v,u) \in \Lcal \\
		\Big[\linkmap\big((v,u)\big) \Big]^{-1} & \text{if} \; (u,v) \in \Lcal \; \text{and} \; (v,u) \not\in \Lcal \\
	\end{cases} \, .
	\eeq
\end{definition}
An example is shown in \figref{f:symmggex}.

Note that for the symmetrised group graph of a multigraph, we need to replace multiple links from the same node $u$ to the same node $v$ by a single edge.  This could only be done here if all these links had the same link label.

\begin{figure}[htb]
	\begin{center}
		\includegraphics[width=0.6\textwidth]{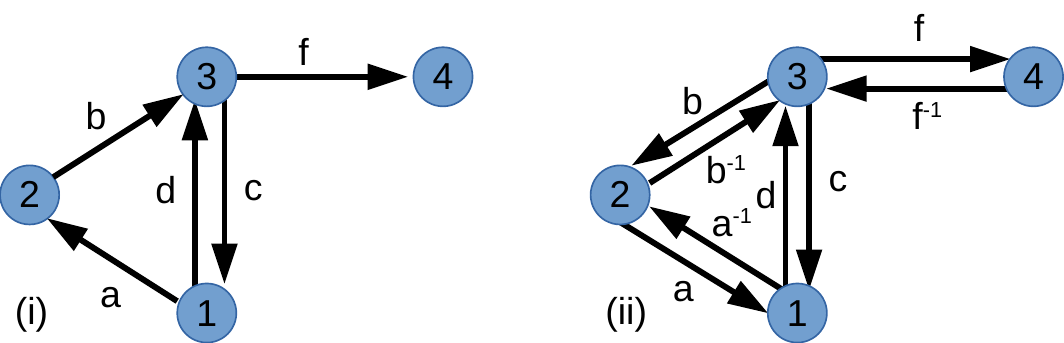}
	\end{center}
	\caption[Symmetrised group graph example.]{On the left in (i) is an example of a group graph and on the right in (ii) is the symmetrised version. Here $a,b,c,d,f$ are link labels and are elements of some group. In the original group graph (i) only the links between nodes $1$ and $3$ are reciprocated and these carry their labels into the symmetrised group graph (ii). All other links are not reciprocated so in the symmetrised group graph extra links are added to ensure every link is reciprocated. The link labels of these new links in (ii) are always the inverse group elements of the link in the original group graph (i).}
	\label{f:symmggex}
\end{figure}

\begin{definition}[Semiwalk Labels] \label{d:semiwalksgg}
	A label of a semiwalk $\Scal$ on a group graph is equal to the label of the equivalent walk on the symmetrised group graph. 
	\begin{subequations}\label{e:swalklabel}
		\bea
		\linkmap(\Scal) 
		&=& 
		\linkmap(\link_n) \times \linkmap(\link_{n-1}) \times \ldots \times \linkmap(\link_{2}) \times \linkmap(\link_1)    
		=
		g_{\link_n} g_{\link_{n-1}}  \ldots g_{\link_{2}} g_{\link_1}  \, ,  
		\\ 
		&& 
		\Scal =(v_n)_{n=0}^{L}
		\, ,  
		\quad 
		\link_m = (v_{m},v_{m-1}) \in \Lcalsymm
		\, ,
		\quad  
		m \in \{1,2,\ldots,L\}   
		\, .
		\eea
	\end{subequations}
\end{definition}
What this means is that if I look at a semiwalk $\Scal = (v_n)_{n=0}^L$ on a group graph $\Ggraph=(\Dcal, \Gcal, \linkmap)$ then the labels used in $\Lambda(\Scal)$ using \eqref{e:walklabel} come from those links $\ell_m=(v_m,v_{m-1})$ which are present in the original directed network $\Dcal$. If such a link is not in the original graph then, and only then, do I use the label derived from the reciprocated edge, i.e.\ $[\Lambda((v_{m-1},v_{m}))]^{-1}$, which has been added in to the link $\Lcalsymm$ in the symmetrised group graph. So if a link is reciprocated in the original group graph, there is no guarantee that the link labels of reciprocated links are inverses of each other. For instance, in \figref{f:ggex} the reciprocated links between nodes $2$ and $3$ do not satisfy this condition $\Lambda((v_3,v_2)) \neq [\Lambda((v_2,v_3))]^{-1}$ in either example network. 

I will often encounter semiwalks and semicycles whose label is the identity element. These may be referred to as \tsedef{identity semiwalks} and \tsedef{identity cycles}.

A useful concept is the reverse semiwalk.
\begin{definition}[Reversed semiwalks] \label{d:rsemiwalk}
	A reversed semiwalk $\Scalrev$ of a semiwalk $\Scal=(v_n)_{n=0}^L$ on a group graph $\Ggraph=(\Dcal, \Gcal, \linkmap)$ is the semiwalk defined by the sequence of nodes in reverse order, that is $\Scalrev= (v_{L-m})_{n=0}^L$ .
\end{definition}
A key point of difference between walk and semiwalks is that walks on a group graph can not always be reversed because a walk can pass through unreciprocated links. However, semiwalks may pass along unreciprocated links in either direction, so a reverse semiwalk always exists on a group graph if a semiwalk exists. More formally
\begin{lemma}[Existence of reversed semiwalks] \label{l:rsemiwalk}
	A reversed semiwalk $\Scalrev= (v_{L-m})_{n=0}^L$ of a semiwalk $\Scal=(v_n)_{n=0}^L$ on a group graph $\Ggraph=(\Dcal, \Gcal, \linkmap)$ always exists.
\end{lemma}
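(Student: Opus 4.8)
The plan is to verify directly that the reversed node sequence satisfies \defref{d:semiwalk}. The substance of that definition is a condition imposed on each consecutive pair of nodes, and the key point is that this condition is \emph{symmetric} in the two nodes of the pair: for $m\in\{1,\dots,L\}$ one requires $(v_m,v_{m-1})\in\Lcal$ \emph{or} $(v_{m-1},v_m)\in\Lcal$, i.e.\ that a directed link exists in at least one of the two directions between $v_{m-1}$ and $v_m$. This is a predicate on the \emph{unordered} pair $\{v_{m-1},v_m\}$, and reversing a sequence merely relabels the collection of consecutive unordered pairs it contains, leaving that collection unchanged as a set.

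Concretely I would write $u_n := v_{L-n}$ for $n\in\{0,1,\dots,L\}$, so $\Scalrev=(u_n)_{n=0}^L$. For each $m\in\{1,\dots,L\}$, the consecutive pair $(u_{m-1},u_m)=(v_{L-m+1},v_{L-m})$ involves the same two nodes as the consecutive pair of $\Scal$ indexed by $m':=L-m+1\in\{1,\dots,L\}$, namely $(v_{m'},v_{m'-1})$. Since $\Scal$ is a semiwalk, $(v_{m'},v_{m'-1})\in\Lcal$ or $(v_{m'-1},v_{m'})\in\Lcal$; and because $\{v_{m'-1},v_{m'}\}=\{u_{m-1},u_m\}$, this says precisely that $(u_m,u_{m-1})\in\Lcal$ or $(u_{m-1},u_m)\in\Lcal$. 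Hence $\Scalrev$ meets the requirement of \defref{d:semiwalk} at every step, so it is a semiwalk; since $\Scal$ was assumed to exist, $\Scalrev$ exists.

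An essentially equivalent route, which I would also mention because it is illuminating, uses \lemref{l:swsg}: the semiwalk $\Scal$ on $\Dcal$ is the same node sequence as a walk on the symmetrised graph $\Dcalsymm$, in which by \defref{d:symm} every link is reciprocated; a walk on a graph with all links reciprocated can always be traversed backwards, so the reversed node sequence is again a walk on $\Dcalsymm$, i.e.\ a semiwalk on $\Dcal$. This viewpoint also makes clear why the analogue fails for ordinary walks on $\Dcal$, as noted in the text preceding the lemma: there a reversed consecutive pair need not be a link of $\Dcal$.

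There is no real obstacle here: the proof is a one-line observation about the symmetry of the semiwalk condition, and the only thing needing a moment's attention is the index bookkeeping $m\mapsto L-m+1$ that matches consecutive pairs of $\Scalrev$ with those of $\Scal$. I would be careful to claim only what is stated, namely the \emph{existence} of $\Scalrev$; the relation $\linkmap(\Scalrev)=[\linkmap(\Scal)]^{-1}$, which holds on the symmetrised group graph by \defref{d:semiwalksgg} together with the reverse-of-a-product rule for inverses, is a separate statement that I would leave to wherever the paper takes it up next.
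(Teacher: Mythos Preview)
Your proposal is correct and matches the paper's approach: the paper offers only the one-line informal justification preceding the lemma (``semiwalks may pass along unreciprocated links in either direction, so a reverse semiwalk always exists''), and your direct verification from \defref{d:semiwalk}, together with the alternative via \lemref{l:swsg}, is exactly a rigorous unpacking of that remark. The paper gives no formal proof beyond this, so your careful index bookkeeping and your separation of the existence claim from the later label-inverse statement are welcome additions rather than departures.
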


\subsubsection{Balance for group graphs}

I can now define balance for a group graph using semiwalks imitating the approach used for signed networks. Once defined, I will then look at the properties of balanced group graphs, aiming for \thref{t:propbgg} below which summarises the key properties.

\begin{definition}[Balanced group graph] \label{d:bggsw}
	A group graph is \tsedef{balanced} when the label of any semiwalk only depends on the initial and final nodes in the walk, i.e.\ it is independent of the route taken. That is
	\beq 
	  g_{vu} = \linkmap(\Scal(v,u)) = \linkmap(\Scal'(v,u)) \in \Gcal
	\eeq
	for any semiwalks $\Scal(v,u)$ or $\Scal'(v,u)$ from node $u$ to node $v$.
\end{definition}
Two examples of balanced group graphs are shown in \figref{f:bggex}.
The notation $\Scal(v,u)$ indicates a semiwalk that starts from node $u$ and ends at node $v$ but it can be one of many such semiwalks that may exist. This type of path independence is often a key property or an important requirement in many other physical problems. In these cases the ``effort'' of moving between two points is the same whatever path is chosen and is often encoded by the use of conservative fields and potentials as in electromagnetism. Balance in group graphs is reproducing this property but in terms of processes that are moving through a group as I will show more precisely in \secref{s:dynamics}.

\begin{figure}[htb]
	\begin{center}
		\includegraphics[width=0.9\textwidth]{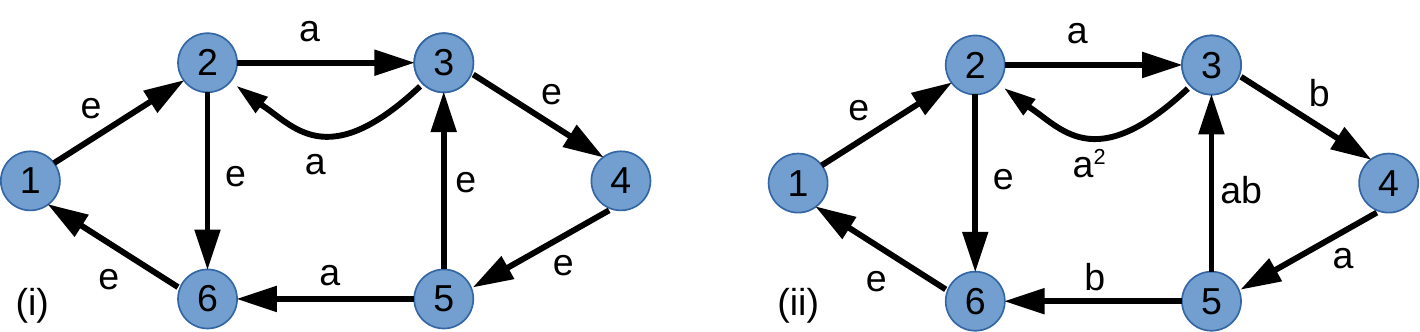}
	\end{center}
	\caption{Two examples of balanced group graphs. These use the same topology as in \figref{f:ggex} but different link labels. The group graph (i) on the left has links lying in $Z_2$ while in (ii) on the right links lie in $S_3$. The numbers inside the blue circles are used to identify nodes in examples in the text. Group elements for each link are shown next to the relevant link. The example on the left is a balanced signed network where positive (negative) links are labelled with $e$ ($a$).}
	\label{f:bggex}
\end{figure}

Balanced group graphs have a number of useful properties.
\begin{lemma}[Balance Symmetrised Group Graphs] \label{l:bsgrpgrphsw}
	A group graph is \tsedef{balanced} iff the symmetrised group graph is balanced. 
\end{lemma}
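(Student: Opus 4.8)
My plan is to collapse both sides of the equivalence onto a single statement --- ``all walks between a fixed pair of nodes on $\Ggraphsymm$ carry the same label'' --- and then read off the ``iff''. The work is entirely in unwinding the definitions consistently; there is no hard combinatorics or group theory involved.

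The key preliminary observation is that symmetrising a group graph built on an already maximally reciprocated digraph does nothing. Since every link of $\Dcalsymm$ is reciprocated, \defref{d:symm} gives $(\Dcalsymm)^\mathrm{(symm)} = \Dcalsymm$, and the ``labels left unchanged for already-reciprocated links'' clause of \defref{d:sgrpgrph} gives $(\Ggraphsymm)^\mathrm{(symm)} = \Ggraphsymm$. Combining this with \lemref{l:swsg} applied to $\Dcalsymm$, every semiwalk on $\Ggraphsymm$ is a walk on $\Ggraphsymm$ and, by \defref{d:semiwalksgg}, its semiwalk label coincides with its walk label. Feeding this into \defref{d:bggsw}, the statement ``$\Ggraphsymm$ is balanced'' becomes equivalent to: for all $u,v\in\Ncal$, every walk from $u$ to $v$ on $\Ggraphsymm$ has the same label.

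Next I treat ``$\Ggraph$ is balanced''. By \defref{d:bggsw} this asserts that the label of a semiwalk of $\Ggraph$ depends only on its endpoints; by \defref{d:semiwalksgg} that label is, by construction, the label of the corresponding walk on $\Ggraphsymm$; and by \lemref{l:swsg} applied to $\Dcal$ the map ``semiwalk on $\Dcal$ $\mapsto$ walk on $\Dcalsymm$'' is an endpoint-preserving bijection of node sequences. Hence ``$\Ggraph$ is balanced'' is equivalent to precisely the same statement as above: for all $u,v\in\Ncal$, every walk from $u$ to $v$ on $\Ggraphsymm$ has the same label. The two reductions are literally identical, which proves the lemma.

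The one point worth spelling out is that the phrase ``label of a semiwalk'' must denote the \emph{same} group product in both reductions. This is guaranteed by \defref{d:semiwalksgg}: in either reading the factors are the original link labels of $\Dcal$ wherever the traversed link lies in $\Dcal$, and the inverse of the reciprocal link's label otherwise --- and, because $(\Ggraphsymm)^\mathrm{(symm)} = \Ggraphsymm$, re-symmetrising $\Ggraphsymm$ never introduces new factors or alters existing ones. I do not anticipate any genuine obstacle; once this bookkeeping is made explicit, the proof is a one-line comparison of the two unwound definitions.
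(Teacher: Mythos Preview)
Your argument is correct and follows essentially the same route as the paper: both reduce balance of $\Ggraph$ and of $\Ggraphsymm$ to the single statement ``semiwalk labels depend only on endpoints,'' using \defref{d:semiwalksgg} and \lemref{l:swsg} to identify semiwalks on $\Ggraph$ with walks (hence semiwalks) on $\Ggraphsymm$ carrying identical labels. Your version is more explicit --- in particular you spell out the idempotence $(\Ggraphsymm)^\mathrm{(symm)}=\Ggraphsymm$, which the paper leaves implicit --- but the underlying logic is the same.
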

This follows from \defref{d:semiwalksgg} which defines the label of every semiwalk on the original group graph $\Ggraph$ to be the same of the equivalent a walk, so also a semiwalk, on the symmetrised group graph $\Ggraphsymm$. Since balance is defined in terms of these semiwalk labels and these are the same for the same semiwalk on $\Ggraph$ and $\Ggraphsymm$, the result for balance in \lemref{l:bsgrpgrphsw} follows.

Note the issue with the lack of definition of a symmetrised group graph in a multigraph, when there are multiple links in the same direction between the same pair of nodes but with different link labels, means that \lemref{l:bsgrpgrphsw} still works for multigraphs as it is implicit that the lemma requires the existence of a symmetrised group graph to hold.

\begin{lemma}[Closed walks in balanced group graphs] \label{l:cswbggid}
	Any closed semiwalk $\Ccal$ in a balanced graph has a label equal to the identity element. 
	\beq
	\linkmap(\Ccal) =e \, .
	\label{e:cwalkbggid}
	\eeq
\end{lemma}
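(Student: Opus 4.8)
The plan is to read off the statement directly from \defref{d:bggsw}. In a balanced group graph the label of a semiwalk is a function of its two endpoints only. A closed semiwalk $\Ccal=(v_n)_{n=0}^{L}$ has $v_0=v_L$, so it is in particular a semiwalk from $v_0$ to $v_0$; hence its label must coincide with the label of \emph{any} semiwalk from $v_0$ to $v_0$. The trivial walk $\Wcal_{v_0}=(v_0)$ is such a semiwalk — a semiwalk of length zero, which satisfies the formal definition of \defref{d:semiwalk} vacuously since it contains no consecutive node pairs to constrain — and by the convention fixed just after \defref{d:walkgpelm} its label is the identity, $\linkmap(\Wcal_{v_0})=e$. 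Therefore $\linkmap(\Ccal)=\linkmap(\Wcal_{v_0})=e$, which is \eqref{e:cwalkbggid}.

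Concretely I would proceed in three short steps: (i) observe that a closed semiwalk is, by definition, a semiwalk whose initial and final nodes are the same node $v_0$; (ii) apply \defref{d:bggsw} with the node pair $(v_0,v_0)$, comparing $\Ccal$ against the trivial semiwalk $\Wcal_{v_0}$; (iii) invoke the identity-label convention for trivial walks to conclude. The only point deserving an explicit sentence is that $(v_0)$ really does qualify as a (degenerate) semiwalk from $v_0$ to $v_0$, so that \defref{d:bggsw} is genuinely applicable to it.

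A self-contained alternative, avoiding any appeal to the trivial walk, uses concatenation: $\Ccal\circ\Ccal$ is again a closed semiwalk based at $v_0$, so balance gives $\linkmap(\Ccal\circ\Ccal)=\linkmap(\Ccal)$, whereas the concatenation property \eqref{e:concatwalklabel} — which carries over from walks to semiwalks because by \defref{d:semiwalksgg} a semiwalk label on $\Ggraph$ is just the corresponding walk label on $\Ggraphsymm$ — gives $\linkmap(\Ccal\circ\Ccal)=\linkmap(\Ccal)\times\linkmap(\Ccal)$. Hence $\linkmap(\Ccal)^{2}=\linkmap(\Ccal)$, and left-multiplying by $[\linkmap(\Ccal)]^{-1}$ yields $\linkmap(\Ccal)=e$.

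There is essentially no serious obstacle: the result is an immediate corollary of the definition of balance together with the identity-label convention for trivial walks. The only care required is bookkeeping — ensuring that "closed semiwalk" is treated as "semiwalk from a node to itself", so that the path-independence clause of \defref{d:bggsw} can be applied, and, in the alternative argument, checking that the label-of-concatenation identity is available for semiwalks and not merely for walks.
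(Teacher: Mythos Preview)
Your proposal is correct. Your \emph{alternative} argument via concatenation is precisely the paper's proof: the paper forms $\Ccal'=\Ccal\circ\Ccal$, uses balance to get $\linkmap(\Ccal')=\linkmap(\Ccal)$ and \eqref{e:concatwalklabel} to get $\linkmap(\Ccal')=\linkmap(\Ccal)^2$, then cancels. Your \emph{primary} argument, comparing $\Ccal$ directly against the trivial semiwalk $\Wcal_{v_0}$ via \defref{d:bggsw}, is a genuinely shorter route that the paper does not take; it works because the paper has already fixed $\linkmap(\Wcal_v)=e$ as a convention just after \defref{d:walkgpelm}, so you can simply invoke that rather than rederive it. The trade-off is that your main argument leans on the trivial-walk convention being in place, whereas the concatenation argument is self-contained and in fact \emph{re-establishes} $\linkmap(\Wcal_v)=e$ as a by-product (which is how the paper's proof handles the length-zero case separately).
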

This follows because for any closed semiwalk $\Ccal$ I may create longer semiwalks by going around that closed semiwalk $\Ccal$ any number of times, i.e.\ follow the closed semiwalk $\Ccal' = \Ccal \circ \Ccal \circ \ldots  \circ \Ccal$. If this closed semiwalk did not have a label equal to the identity, this would immediately generate different labels depending on the number of times the semiwalk $\Ccal'$ went round the closed semiwalk $\Ccal$. Since $\Ccal'$ is also a semiwalk between the same nodes as $\Ccal$, in a balanced group graph these two closed semiwalks $\Ccal'$ and $\Ccal$ must have the same label. The only way to satisfy all these constraints is if $\linkmap(\Ccal) =e$. More formally I have the following.
\begin{proof}
	Consider a non-trivial closed semiwalk $\Ccal = (v, \ldots, v)$, a semiwalk of length one or more starting and ending at some node $v$. I can concatenate this to give a \emph{distinct} semiwalk $\Ccal' = \Ccal\circ \Ccal$ by \eqref{e:walkconcat}. This will also start and end at node $v$. Using \eqref{e:concatwalklabel} then gives for the labels of the semiwalks that $\linkmap(\Ccal') = \linkmap( \Ccal) \times \linkmap( \Ccal)$. Since the group graph is balanced, and the semiwalks $\Ccal$ and $\Ccal'$ are between the same nodes, they must by definition have the same group label in a balanced group graph so that $\linkmap( \Ccal) =\linkmap(\Ccal')$. Given that all group elements have an inverse, I may left multiply by $(\linkmap( \Ccal))^{-1}$ to give that $\linkmap(\Ccal) =e$ as required.
	\\
	The trivial walk $\Wcal_v=(v)$ of length zero from any node $v$ is also a closed semiwalk. I have that $\Wcal_v \circ \Wcal_v = \Wcal_v$ so that this concatenation does not give a distinct closed semiwalk. However, from the definition of the label of a walk in \eqref{e:concatwalklabel}, I have that $\Lambda(\Wcal) \times \Lambda(\Wcal_v) = \Lambda(\Wcal)$. Left multiplying by $(\Lambda(\Wcal))^{-1}$ gives $\linkmap(\Wcal_v)=e$ for this remaining case.
\end{proof}


There are some other special cases of \lemref{l:cswbggid} worth noting. First, a semicycle, as defined in \eqref{e:cycledef}, is a special case of a closed semiwalk so the label of all semicycles in a balanced group graph is always the identity. The simplest non-trivial example of a cycle and semicycle is a self-loop, a link $(v,v)$ starting and ending at the same node $v$. The application of \lemref{l:cswbggid} then shows that the label of an link in a self-loop in a balanced group graph must be the identity, $\linkmap((v,v)) =e$. 

The next simplest closed semiwalks to consider are those of length two. These can be a walk which goes twice around a self-loop or these are semiwalks which visit a distinct neighbouring node and then return.  In either case, these closed semiwalks have the form $(v,u,v)$ where at least one of $(u,v)$ or $(v,u)$ must be links in the original directed graph $\Dcal$, $(u,v)\in \Lcal$ or $(v,u) \in \Lcal$ (here $u=v$ is allowed but not necessary). Again, the first and last nodes are the same so this semiwalk is a closed semiwalk. The label of this closed semiwalk in a balanced group graph is the identity by \lemref{l:cswbggid} so $\linkmap(v,u) \times \linkmap(u,v)=e$ and thus $\linkmap(v,u) = [\linkmap(u,v)]^{-1}$. That is the label of any connected node pairs in the symmetrised graph must be the inverse of each other in a balanced group graph. As I use this result later, I will summarise it as follows.
\begin{corollary}[Labels of reciprocated node pairs in balanced group graphs] \label{c:reciplabel}
	The label of any linked node pairs in a balanced group graph must be the inverse of each other. 
	\beq
	\linkmap((u,v)) = \big(\linkmap((u,v))\big)^{-1} \quad \text{if} \quad (u,v) \; \text{or} \;  (v,u) \in \Lcal \, .
	\eeq
\end{corollary}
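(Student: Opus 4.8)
The plan is to read this off directly from \lemref{l:cswbggid} applied to a length-two closed semiwalk. First I would dispose of the degenerate case in which only one of the two directed links is present in the original graph $\Dcal$: if $(v,u)\in\Lcal$ but $(u,v)\notin\Lcal$ (or symmetrically), then by \defref{d:sgrpgrph} the label of the reciprocal link in $\Ggraphsymm$ is \emph{defined} to be $\big[\linkmap((v,u))\big]^{-1}$, so the claimed identity holds by construction and there is nothing to prove. The substantive case is therefore a link already reciprocated in $\Dcal$, where both labels are genuine link labels of $\Ggraph$ and carry no a priori relation to one another.

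For a single argument covering both cases at once, I would consider the semiwalk $\Scal=(u,v,u)$ of length two. This is a legitimate semiwalk on $\Ggraph$: the hypothesis that at least one of $(u,v)$, $(v,u)$ lies in $\Lcal$ guarantees, via \eqref{e:sdglinks}, that both lie in $\Lcalsymm$, so the consecutive node pairs of $\Scal$ are links of the symmetrised graph as required by \defref{d:semiwalksgg}. Since $\Scal$ starts and ends at $u$ it is a closed semiwalk, so \lemref{l:cswbggid} gives $\linkmap(\Scal)=e$. On the other hand, by \defref{d:semiwalksgg} the label of $\Scal$ is the ordered product of the labels of its two links, $\linkmap(\Scal)=\linkmap((u,v))\times\linkmap((v,u))$ (the labels being those carried by $\Ggraphsymm$). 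Equating the two expressions gives $\linkmap((u,v))\times\linkmap((v,u))=e$, and uniqueness of inverses in a group then yields $\linkmap((u,v))=\big[\linkmap((v,u))\big]^{-1}$, which is the assertion.

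There is essentially no obstacle here, since the result is an immediate specialisation of \lemref{l:cswbggid}; only two small points deserve a line of care. The first is the self-loop case $u=v$: then the hypothesis forces $(v,v)\in\Lcal$, and applying \lemref{l:cswbggid} to the length-one closed semiwalk $(v,v)$ already yields $\linkmap((v,v))=e$, which is consistent with, and stronger than, the corollary. The second is purely expository: I would stress that in the reciprocated case the two labels in the identity are both original link labels of $\Ggraph$, so the corollary is a genuine constraint imposed by balance rather than a restatement of \defref{d:sgrpgrph}.
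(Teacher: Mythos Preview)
Your proposal is correct and follows essentially the same approach as the paper: consider the length-two closed semiwalk $(u,v,u)$, apply \lemref{l:cswbggid} to conclude its label is $e$, and read off that $\linkmap((u,v))\times\linkmap((v,u))=e$. Your additional discussion of the degenerate case (only one direction present in $\Dcal$) and the self-loop case is more explicit than the paper's treatment but adds no new ideas.
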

For example, in the two examples in \figref{f:ggex}, the reciprocated links between nodes $2$ and $3$ are not labelled by a group element and its inverse so these group graphs cannot be balanced. On the other hand, the same pair of reciprocated links in the balanced groups graphs of \figref{f:bggex} do satisfy this criteria. 

The same reasoning means that multiedges in the same direction in a balanced group graph must all have the same label.

Note that equivalent work on groups graphs under the group labelling name \cite{ES79}, voltage graph name \citet{GA73}, the gain graph name \citet{Z89}  or for the topic of graph synchronisation always demand that reciprocated edges carry link labels which are inverses of one 
another\footnote{For instance, for voltage graphs this has been required since the first paper, see condition C1 on page 943 of \citet{GA73}. For a group labelled graph see equation (2) of \cite{AF19}. For gain graphs this condition is in the first paragraph of section 5 in \citet{Z89}. The condition appears not to have been relaxed as shown (at the time of writing) in the following informal summaries: \href{https://en.wikipedia.org/wiki/Voltage_graph}{voltage graph article on Wikipedia}, 
the \href{https://mathworld.wolfram.com/VoltageGraph.html}{voltage graph article on Mathworld},
and the \href{https://en.wikipedia.org/wiki/Gain_graph}{Wikipedia article on gain graphs}.}.

While necessary for a balanced group graph, this is neither a necessary restriction for a general group graph nor a desirable one. For instance, in social networks, it is quite possible for a pair of people to report links with each other in some survey but they could have different view about the positive or negative nature of that association. Here, the link labels on a group graph are unconstrained.

Before I state and prove the properties of balanced group graphs I need to establish one property of reverse semiwalks in a balanced group graph. 
\begin{lemma}[Reversed semiwalks label on balanced group graphs] \label{l:rsemiwalkbgg}
	The label $\linkmap(\Scalrev)$ of a reversed semiwalk $\Scalrev$ of a semiwalk $\Scal$ in a balanced group graph is the inverse of the label on the semiwalk $\Scal$, that is  $\linkmap(\Scalrev) = [\linkmap(\Scal)]^{-1}$.
\end{lemma}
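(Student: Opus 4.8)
The plan is to derive this as a short consequence of the closed-semiwalk result \lemref{l:cswbggid} together with the concatenation rule for labels, \eqref{e:concatwalklabel} (applied in the semiwalk setting of \eqref{e:swalkconcat}). By \lemref{l:rsemiwalk} the reversed semiwalk $\Scalrev$ exists, and if $\Scal=(v_n)_{n=0}^L$ runs from $v_0$ to $v_L$, then $\Scalrev$ runs from $v_L$ back to $v_0$. Hence the last node of $\Scal$ coincides with the first node of $\Scalrev$, so the concatenation $\Scalrev\circ\Scal$ is a well-defined semiwalk, and it is a \emph{closed} semiwalk since it begins and ends at $v_0$.

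Now apply \lemref{l:cswbggid} to this closed semiwalk in the balanced group graph: its label is the identity, $\linkmap(\Scalrev\circ\Scal)=e$. On the other hand \eqref{e:concatwalklabel} gives $\linkmap(\Scalrev\circ\Scal)=\linkmap(\Scalrev)\times\linkmap(\Scal)$. Equating the two expressions yields $\linkmap(\Scalrev)\times\linkmap(\Scal)=e$, and since every group element has a unique inverse, right-multiplying by $[\linkmap(\Scal)]^{-1}$ gives $\linkmap(\Scalrev)=[\linkmap(\Scal)]^{-1}$, which is the claim. One line should handle the degenerate case: if $\Scal$ is the trivial walk $\Wcal_{v}$, then $\Scalrev=\Wcal_{v}$ as well, and $\linkmap(\Scalrev)=e=e^{-1}=[\linkmap(\Scal)]^{-1}$, so the identity holds trivially.

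There is essentially no hard part here; the statement is a direct corollary of machinery already in place. The only points needing care are (i) confirming that the concatenation of two semiwalks is again a semiwalk, which is exactly \eqref{e:swalkconcat}, and that $\Scalrev\circ\Scal$ is genuinely closed; and (ii) that \lemref{l:cswbggid} covers \emph{all} closed semiwalks including those of length zero, so no edge case is missed. A more computational alternative would use \corref{c:reciplabel}: in the symmetrised group graph each link traversed by $\Scalrev$ carries the inverse label of the corresponding link of $\Scal$, and since $\Scalrev$ uses these links in reverse order the walk-label product telescopes to $[\linkmap(\Scal)]^{-1}$ via $(g_1 g_2\cdots g_n)^{-1}=g_n^{-1}\cdots g_1^{-1}$; but the closed-semiwalk argument is cleaner and avoids re-deriving the link-level structure of $\Scalrev$.
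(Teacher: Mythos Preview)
Your proof is correct and takes a genuinely different route from the paper's. The paper proceeds link-by-link: it writes out $\linkmap(\Scal)$ and $\linkmap(\Scalrev)$ explicitly as products over the links in the symmetrised group graph, invokes \corref{c:reciplabel} to get $\linkmap(\linkrev_m)=[\linkmap(\link_m)]^{-1}$ for each reciprocated pair, and then uses $(g_1\cdots g_L)^{-1}=g_L^{-1}\cdots g_1^{-1}$ to assemble the result. This is exactly the ``computational alternative'' you sketch and set aside in your final paragraph. Your main argument instead works globally: form the closed semiwalk $\Scalrev\circ\Scal$, hit it with \lemref{l:cswbggid}, and read off the inverse relation from the concatenation rule. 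Your route is shorter and more conceptual, and it bypasses \corref{c:reciplabel} entirely, going straight from \lemref{l:cswbggid} to the conclusion; the paper's route, on the other hand, makes the link-level structure of the reversed semiwalk explicit, which is useful if one later wants to track what happens edge-by-edge. Both are valid and non-circular, since \lemref{l:cswbggid} is established before either \corref{c:reciplabel} or this lemma.
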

This follows since the group elements of a semiwalk come from the group multiplication of the link labels of links in the corresponding walk on the symmetrised group graph, \defref{d:semiwalksgg}, in the order in which the links appear in that walk. The group inverse of this product is simply the link label of the reciprocated link in the symmetrised group graph but with the order of the (reciprocated) links in the reverse order since $(g_1\times g_2)^{-1} = (g_2)^{-1} \times (g_1)^{-1}$ for any group elements $g_1$ and $g_2$. The reciprocated links are guaranteed to carry the inverse label of the original link only because the symmetrised group graph is balanced so \corref{c:reciplabel} applies.
\begin{proof}
	Consider a semiwalk $\Scal=(v_n)_{n=0}^L$ which corresponds to a sequence of directed links $\Lcal(\Scal) = (\ell_m)_{m=1}^L$ where $\ell_m=(v_m, v_{m-1}) \in\Lcalsymm$ on the symmetrised group graph (these may or may be links on the original group graph). By \defref{d:semiwalksgg} I have that the label for this semiwalk is 
	\bea
	\linkmap(\Scal) 
	&=& 
	\linkmap(\link_L) \times \linkmap(\link_{L-1}) \times \ldots \times \linkmap(\link_{2}) \times \linkmap(\link_1)    
	\, .
	\label{e:swalklabel2}
	\eea
	In the same way the reverse semiwalk $\Scalrev=(v_{L-n})_{n=0}^L$  has label
	\bea
	\linkmap(\Scal) 
	&=& 
	\linkmap(\linkrev_1) \times \linkmap(\linkrev_{2}) 
	\times \ldots \times 
	\linkmap(\linkrev_{L-1}) \times \linkmap(\linkrev_{L-1})    
	\label{e:swalkrevlabel2}
	\eea 
	where $\linkrev_m=(v_{m-1}, v_{m}) \in\Lcalsymm$ is the reciprocated link of $\ell_m$ in the symmetric group graph. Since the symmetrised group graph is balanced by \lemref{l:bsgrpgrphsw} I have that $\linkmap(\linkrev_1) =[\linkmap(\link_1)]^{-1}$ by \corref{c:reciplabel}.  Thus, by using that $(g_2.g_1)^{-1}=(g_1)^{-1}.(g_2)^{-1}$ I find that
	\bea
	\linkmap(\Scal) 
	&=& 
	[\linkmap(\link_1)]^{-1} \times [\linkmap(\link_{2})]^{-1} 
	\times \ldots \times 
	[\linkmap(\link_{L-1})]^{-1} \times [\linkmap(\link_{L-1}) ]^{-1}
	\label{e:swalkrevlabel3}
	= [\linkmap(\Scal) ]^{-1} \, .
	\eea 	
\end{proof}

\subsection{Node Labels}\label{s:nodelabels}

So far I defined group graphs by assigning a group element to each link. When a group graph is balanced, it turns out there is also a natural way to label the nodes with a group element. 
Deducing the node labels given the link labels is the main goal when working with the group synchroisation approach to group graphs \citep{KEES03,GK06a,S11e,AMMSS18,AF19,S21e}.

The result for node labelling can be summarized by the following lemma. 
\begin{lemma}[Node labels in a balanced group graphs] \label{l:nodelabel}
	For a balanced group graph, the nodes may be labelled by group elements in a way that is consistent with the link labels. That is I may assign $g_v\in \Gcal$ for all $v \in \Ncal$ s.t.\ $g_v = \linkmap(\Scal(v,u)) g_u$ for any semiwalk $\Scal(v,u)$ from node $u$ to node $v$. 
\end{lemma}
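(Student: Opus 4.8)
The plan is to construct the node labelling one weakly connected component at a time, with balance entering at exactly one point: it is what makes the construction well defined.

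First I would fix a weakly connected component $\Hcalwc=(\Ncalwc,\Lcalwc)$ of $\Dcal$, choose a reference node $r\in\Ncalwc$, and set $g_r:=e$ (any fixed group element would do, which shows the labelling is unique up to a global right multiplication per component). For any $v\in\Ncalwc$ there is, by the definition of weak connectivity, a semiwalk from $r$ to $v$; and by \lemref{l:rsemiwalk} semiwalks can always be reversed, so this poses no directionality problem. I then define $g_v:=\linkmap(\Scal(v,r))$ for some such semiwalk $\Scal(v,r)$ from $r$ to $v$.

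The key step is well-definedness of $g_v$. If $\Scal(v,r)$ and $\Scal'(v,r)$ are two semiwalks from $r$ to $v$, then since $\Ggraph$ is balanced, \defref{d:bggsw} gives $\linkmap(\Scal(v,r))=\linkmap(\Scal'(v,r))$, so $g_v$ depends only on $v$ (given the choices of $r$ and $g_r$), not on the route. This is the only place the hypothesis of balance is used, and it is the crux of the argument.

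It then remains to verify the stated consistency relation for an \emph{arbitrary} semiwalk, not merely one through $r$. Let $\Scal(v,u)$ be any semiwalk from $u$ to $v$; then $u$ and $v$ lie in the same weakly connected component, so both have been assigned labels relative to the same reference $r$. Pick any semiwalk $\Tcal(u,r)$ from $r$ to $u$, so $g_u=\linkmap(\Tcal(u,r))$. The concatenation $\Scal(v,u)\circ\Tcal(u,r)$ is a semiwalk from $r$ to $v$, so by well-definedness and the concatenation rule \eqref{e:concatwalklabel},
\[
g_v=\linkmap\big(\Scal(v,u)\circ\Tcal(u,r)\big)=\linkmap(\Scal(v,u))\times\linkmap(\Tcal(u,r))=\linkmap(\Scal(v,u))\,g_u ,
\]
which is exactly the required identity. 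Repeating this for every weakly connected component completes the proof; nodes in distinct components impose no cross-constraints, since by definition no semiwalk joins them. The main obstacle is really just recognising that ``balanced'' is precisely the property needed for the assignment $g_v:=\linkmap(\Scal(v,r))$ to be unambiguous, and being careful that the consistency relation is checked for all semiwalks $\Scal(v,u)$ — handled cleanly by concatenation through the reference node.
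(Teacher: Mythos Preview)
Your proof is correct and follows essentially the same approach as the paper: work component by component, pick a reference node, define $g_v$ via the label of a semiwalk from the reference, and use balance to ensure well-definedness. Your version is in fact slightly more complete, since you explicitly verify the consistency relation $g_v=\linkmap(\Scal(v,u))\,g_u$ for arbitrary $u,v$ via concatenation through the reference node, whereas the paper's formal proof only establishes well-definedness from the root and defers the general consistency check to the discussion following the proof.
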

One way to prove this lemma is by giving an explicit method to construct these node labels.
\begin{proof}
	I can treat each weakly-connected component independently. The algorithm given here is simply repeated for each weakly connected component. So without loss of generality I assume our network $\Dcal$ has a single weakly connected component. 
	
	Pick any one source node $r \in \Ncal^\mathrm{(source)} \subseteq \Ncal$ as the root node. This root node is assigned the group element $g_r \in \Gcal$ as the node label for $r$. This can be any element of the group.
	
	Now assign the node labels to be the group elements $g_v = g(\Scal(v,r)) g_r$ where $\Scal(v,r)$ is any semiwalk from $r$ to $v$. Such a semiwalk always exists  for all nodes $v$ by definition of a weakly-connected component and a semiwalk. The group label of any semiwalk depends only on the initial and final nodes in a balanced group graph.  Thus $g_v$ is fixed and unique given $g_r$.
\end{proof}

The choice of the root node is arbitrary. To illustrate this, let me choose an alternative root node $s$. I choose the node label for this second root node $s$ to be $g_s = \linkmap(\Scal(s,r)) g_r$ where $\Scal(s,r)$ is any semiwalk from $r$ to $s$ and at least one will always exist in a weakly connected component. Then the label for any node $v$ is $g_v = \linkmap(\Scal(v,s)) g_s$ when constructed from root node $s$ but this is identical to the label assigned by using root node $r$ since $g_v = \linkmap(\Scal(v,s)) \linkmap(\Scal(s,r)) g_r = \linkmap(\Scal(v,s) \circ \Scal(s,r)) g_r = \linkmap(\Scal(v,r)) g_r$. That is, I end up with the same set of node labels whichever source or sink node I choose as the root node.

A corollary of \lemref{l:nodelabel} is that all link labels may be written in terms of the node labels of the nodes in that link. 
\begin{corollary}[Link labels from node labels in a balanced group graph] \label{c:lldecomp}
	For a balanced group graph, the label of a link $(v,u)$ is given by $g_v.(g_u)^{-1}$ where $g_v$ and $g_u$ are the node labels of nodes $v$ and $u$.
	\beq
	\linkmap\big((v,u)\big) = g_v.(g_u)^{-1} \;\; \forall \; (v,u) \in \Lcal \, .
	\label{e:lldecomp}
	\eeq
\end{corollary}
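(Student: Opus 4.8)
The plan is to observe that the identity \eqref{e:lldecomp} is just the special case of \lemref{l:nodelabel} applied to a walk of length one. First I would fix any link $(v,u)\in\Lcal$ and note that the sequence $(v,u)$ is itself a walk from node $u$ to node $v$, hence in particular a semiwalk from $u$ to $v$, with label $\linkmap\big((v,u)\big)$ by \defref{d:walkgpelm} (a single-link walk). This uses nothing about balance yet; it is purely the observation that every link is a semiwalk.

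Next I would invoke \lemref{l:nodelabel}, which holds precisely because the group graph is balanced: it gives node labels $g_w\in\Gcal$ for all $w\in\Ncal$ with the property $g_v = \linkmap(\Scal(v,u))\, g_u$ for \emph{any} semiwalk $\Scal(v,u)$ from $u$ to $v$. Taking $\Scal(v,u)$ to be the one-link semiwalk $(v,u)$ yields $g_v = \linkmap\big((v,u)\big)\, g_u$. Since $\Gcal$ is a group, $g_u$ has an inverse, so right-multiplying by $(g_u)^{-1}$ gives $g_v\,(g_u)^{-1} = \linkmap\big((v,u)\big)$, which is exactly \eqref{e:lldecomp}. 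As $(v,u)$ was an arbitrary link, this holds for all $(v,u)\in\Lcal$.

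There is essentially no obstacle here: the only point worth stressing is the one flagged above, namely that a link is a legitimate semiwalk so that \lemref{l:nodelabel} applies to it directly; this is what makes the corollary immediate. (One could also remark, as a sanity check, that the formula is consistent with \corref{c:reciplabel} — if both $(v,u)$ and $(u,v)$ are present then $\linkmap\big((u,v)\big)=g_u(g_v)^{-1}=\big(g_v(g_u)^{-1}\big)^{-1}=\big[\linkmap\big((v,u)\big)\big]^{-1}$ — and with \lemref{l:cswbggid}, since the label of a closed semiwalk $(v_L,\dots,v_0=v_L)$ then telescopes to $g_{v_L}(g_{v_0})^{-1}=e$, but these remarks are not needed for the proof itself.)
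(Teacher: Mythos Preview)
Your proof is correct and matches the paper's approach exactly: the paper justifies the corollary in a single sentence, noting that it ``follows from the way the node labels were constructed in the proof of \lemref{l:nodelabel} as a link is simply a semiwalk of length one.'' You have spelled out this one-line argument in detail, including the explicit right-multiplication by $(g_u)^{-1}$, but the idea is identical.
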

This follows from the way the node labels were constructed in the proof of \lemref{l:nodelabel} as a link is simply a semiwalk of length one.

Another corollary of \lemref{l:nodelabel} is that for each weakly connected component of a balanced group graph, there are many sets of distinct node labels $\{g_v\}$ as there are elements in the group. This follows since I can choose any group element for any one root node which then fixes the node labels\footnote{More formally, consider the example where the root node $r$ is a source node. I left multiply all the node labels by some fixed element $g$. I know that the original node labels are $g_v = \linkmap(\Wcal(v,r)) \times g_r$ for any node $v$ so now I have new node labels $g^\prime_v = g_v \times g = \linkmap(\Wcal(v,r)) \times g_r \times g $ that is $g^\prime_v = \linkmap(\Wcal(v,r)) \times g^\prime_r$. So by taking $g$ through every elements of the group, each value of $g$ generates a unique set of node labels by the rearrangement theorem. Thus I have exactly $|G|$ versions of the node labels, one for each element in the group.} for all remaining nodes. 
However, for any given set of node labels $\{g_v\}$, not every element of the group need appear as shown in the $S_3$ example of \figref{f:bggexVL}. 

\begin{figure}[htb]
	\begin{center}
		\includegraphics[width=0.9\textwidth]{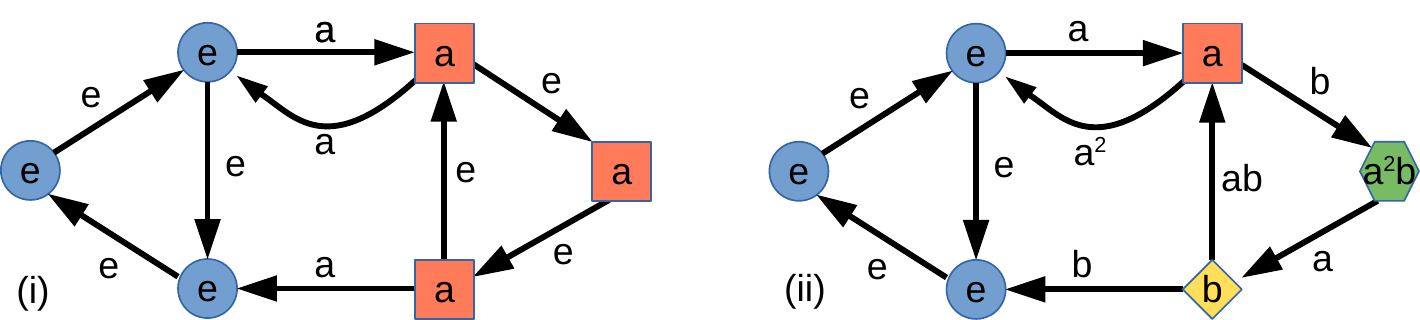}
	\end{center}
	\caption{The two balanced group graphs of \figref{f:bggex} redrawn to show the node labels $g_v \in \Gcal$ inside the node symbols. The node shape and colour indicates membership of one particular block of the node partition which respects the group structure. Note that for the $S_3$ graph on the right, there are only four blocks in the partition, $\{\Bcal_e,\Bcal_a,\Bcal_b,\Bcal_{a^2b}\}$, only four different group elements appear as node labels out of the possible six elements of $S_3$. 
	}
	\label{f:bggexVL}
\end{figure}

\begin{figure}[htb]
	\begin{center}
		\includegraphics[width=0.9\textwidth]{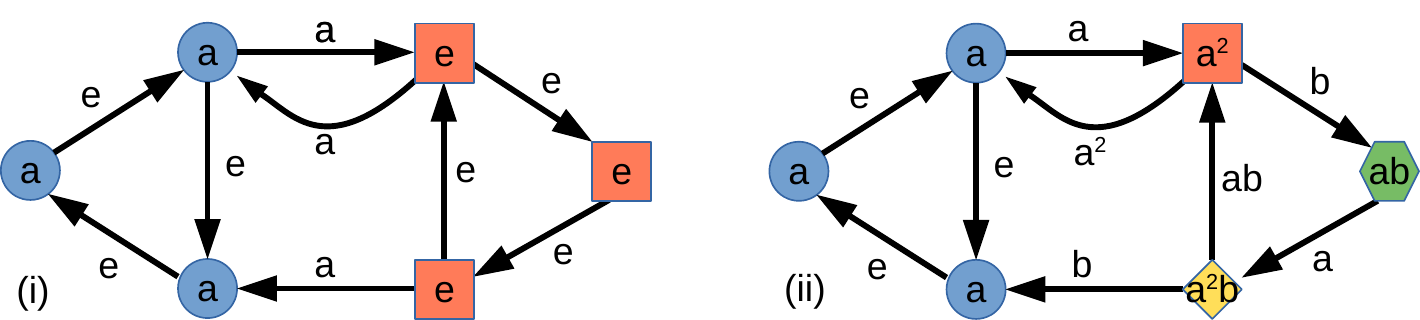}
	\end{center}
	\caption{An equivalent set of node labels for the balanced group graphs of \figref{f:bggexVL}. The node labels $g_v \in \Gcal$ inside the node symbols. The node shape and colour indicates membership of one particular block of the node partition which respects the group structure and this block structure is unchanged. 
	}
	\label{f:ggcnstntexVLrl}
\end{figure}

These node labels derived from the link labels\footnote{In fact any arbitrary assignment of a group element to each node defines a partition of the nodes. However, the node labels derived from the link labels in a balanced group graph are not simply any permutation of the group elements (of which there are $|\Gcal|!$) but one of $|\Gcal|$ specific rearrangement of the group elements. So node labels derived from link labels reflect a very specific structure in a balanced group graph.} give a partition $\Pcal$ of the nodes in any group graph. That is all nodes with the same label $g$ are placed into one block $B_g$ so that
\beq 
\Pcal = \{ \Bcal_g \, | \, g \in \Gcal, \; \Bcal_g \neq \emptyset \} \, , \quad \Bcal_g = \{ v \, | \, v_g = g \} 
\, .
\label{e:nodelabelpartition}
\eeq

What is perhaps more interesting are the following properties for \tsedef{identity semiwalks}, i.e.\ semiwalks whose label is the identity element.
\begin{lemma}[Nodes with the same node label linked by identity semiwalks.]\label{l:nltoidsw}
	If two nodes $u,v$ in a balanced group graph are in the same block $\Bcal_g$ of the node partition $\Pcal$ then the label of any semiwalk $\Scal(v,u)$ from $u$ to $v$ is the identity, $g_{vu}=\linkmap(\Scal(v,u))=e$.
\end{lemma}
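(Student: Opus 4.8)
The plan is to read the statement off directly from \lemref{l:nodelabel} and the definition of the node partition in \eqref{e:nodelabelpartition}, with no machinery beyond the group axioms. First I would dispose of the trivial case: if there is no semiwalk from $u$ to $v$ at all — which can happen even when $u$ and $v$ carry the same node label, since they may lie in different weakly-connected components — then there is nothing to prove, so I may assume that at least one semiwalk $\Scal(v,u)$ from $u$ to $v$ exists.

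Next I would invoke the node-labelling property established in \lemref{l:nodelabel}. Applied to the semiwalk $\Scal(v,u)$ it gives $g_v = \linkmap(\Scal(v,u))\, g_u$. By hypothesis $u,v \in \Bcal_g$, so by \eqref{e:nodelabelpartition} we have $g_u = g_v = g$. Substituting, $g = \linkmap(\Scal(v,u))\, g$, and right-multiplying both sides by the inverse $g^{-1} \in \Gcal$ (which exists by the inverse axiom) yields $\linkmap(\Scal(v,u)) = e$.

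Finally, since \lemref{l:nodelabel} asserts $g_v = \linkmap(\Scal(v,u))\, g_u$ for \emph{any} semiwalk $\Scal(v,u)$ from $u$ to $v$ — the node labels being well defined precisely because the graph is balanced, so all such semiwalks share a single label — the conclusion $g_{vu} = \linkmap(\Scal(v,u)) = e$ holds for every choice of semiwalk, which is exactly the claim. I do not anticipate any real obstacle here; the only points deserving a word of care are the vacuous case above and the fact that, because $\Gcal$ need not be abelian, one must cancel $g$ on the correct (right) side, consistent with the ordering convention used in \lemref{l:nodelabel} and \defref{d:walkgpelm}.
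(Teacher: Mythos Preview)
Your proposal is correct and follows essentially the same route as the paper: invoke \lemref{l:nodelabel} to get $g_v = \linkmap(\Scal(v,u))\,g_u$, use $g_u=g_v=g$ from the block hypothesis, and cancel $g$ on the right. The paper's argument is a one-line version of this; your added remarks on the vacuous case and on cancelling on the correct side are sound but not needed for the paper's level of detail.
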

This follows from \lemref{l:nodelabel} where $g_v = \linkmap(\Scal(v,u)) g_u$ but in this case I have that $g_v = g_u$ so I then see that $\linkmap(\Scal(v,u)) = e$. Note that this also works for a trivial walk of length zero where $u=v$. 

The converse is also true.
\begin{lemma}[Identity semiwalks link nodes in the same block]\label{l:indentitywalk}
	If a semiwalk $\Scal(v,u)$ runs from node $u$ to node $v$ in a balanced group graph has a label $\linkmap(\Scal(v,u))$ equal to the identity element $e$  then the two nodes are in the same block $\Bcal_g$ where $g=g_u=g_v$.
\end{lemma}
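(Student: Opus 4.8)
The plan is to obtain this statement as an immediate consequence of \lemref{l:nodelabel} (node labels in a balanced group graph), which supplies the essential input. First I would observe that, since a semiwalk $\Scal(v,u)$ from $u$ to $v$ is assumed to exist, the nodes $u$ and $v$ lie in the same weakly connected component; hence the node-labelling construction in the proof of \lemref{l:nodelabel} assigns well-defined group elements $g_u,g_v\in\Gcal$ to both of them, and the partition $\Pcal$ of \eqref{e:nodelabelpartition} is defined.

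Next, the defining property of those node labels, as stated in \lemref{l:nodelabel}, is that $g_v=\linkmap(\Scal'(v,u))\,g_u$ for \emph{any} semiwalk $\Scal'(v,u)$ from $u$ to $v$; by the path independence that is the content of balance (\defref{d:bggsw}), this holds in particular for the specific semiwalk $\Scal(v,u)$ in the hypothesis. Substituting $\linkmap(\Scal(v,u))=e$ gives $g_v=e\,g_u=g_u$, so $u$ and $v$ carry the same node label, say $g:=g_u=g_v$. By the definition of $\Pcal$ in \eqref{e:nodelabelpartition}, both $u$ and $v$ then lie in the block $\Bcal_g$, which is the claim.

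I do not anticipate any real obstacle: this is essentially the converse packaging of \lemref{l:nltoidsw}, run off the same identity $g_v=\linkmap(\Scal(v,u))\,g_u$. The only points worth a sentence of care are (i) that this identity is guaranteed for the given semiwalk and not merely for some distinguished one, which is exactly where balance enters, and (ii) the degenerate case $u=v$ handled via the trivial walk $\Wcal_v=(v)$, for which $\linkmap(\Wcal_v)=e$ and the conclusion $g_v=g_v$ is vacuous. I would close by noting that \lemref{l:nltoidsw} and this lemma together show that two nodes lie in the same block of $\Pcal$ if and only if they are joined by an identity semiwalk.
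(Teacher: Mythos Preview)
Your proposal is correct and follows essentially the same approach as the paper: invoke \lemref{l:nodelabel} to get $g_v=\linkmap(\Scal(v,u))\,g_u$, substitute the hypothesis $\linkmap(\Scal(v,u))=e$ to conclude $g_u=g_v$, and appeal to the definition \eqref{e:nodelabelpartition} of the blocks. Your version is slightly more explicit about why the node labels are defined and about the trivial case $u=v$, but the argument is the same.
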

Again from \lemref{l:nodelabel} I have that $g_v = \linkmap(\Scal(v,u)) g_u$ but now $\linkmap(\Scal(v,u))=e$ and so I deduce that $g_v = g_u$. From the definition of the blocks $\Bcal_g$ of the node-label partition in \eqref{e:nodelabelpartition}, I know that these two nodes lie in the same block. 

As a special case, I can just consider node pairs connected by one directed link or a pair of reciprocated links where the link labels are the identity element. There is therefore a semiwalk of length one between the pair of connected nodes in both directions. Such nodes in a balanced group graph will only connect nodes in the same block making these blocks easy to identify. So I can consider an \tsedef{identity group graph} $\Ggraph_e$ which is the original group graph $\Ggraph$ where only identity links are retained\footnote{That is the link set $\Lcal_e$ for the {identity group graph} $\Ggraph_e = (\Gcal_e=(\Ncal,\Lcal_e),G,\linkmap)$ of any group graph $\Ggraph = (\Gcal,G,\linkmap)$ may be defined to be $\Lcal_e = \{ \ell | \ell \in \Lcal, \linkmap(\ell)=e \}$. The identity group graph $\Ggraph_e$ may be defined for balanced and unbalanced group graphs and usually more than one group graph is mapped to any one identity group graph.} with all other links removed. 

The nodes $\Acal_i$ in the $i$-th weakly connected component of the resulting identity group graph $\Ggraph_e$  would form a natural partition of the nodes of a group graph.   
In some cases these components $\{\Acal_i\}$ form a sub-partition of the original node label partition, i.e.\ it is possible that the are two distinct nodes $u$ and $v$ with the same label $g=g_u=g_v$ but which are in different components of the identity group graph, $g_u \in \Acal_i$ and $g_v \in \Acal_j$ for some $i \neq j$. This is a generalisation of the idea that some signed networks are ``clusterable'' even if they are not balanced \citep{H53,D67,KCN19}. This is because nodes can be in the same partition $\Bcal_g$ if they have a identity walk between them but the links in that walk need not be labelled by the identity. So there need not be any walk between nodes in each block $\Bcal_g$ that lies entirely on links labelled by the identity, i.e.\ nodes in one block may not be connected on the identity group graph. As components $\Acal_i$ are defined by semiwalks on identity links only, components $\Acal_i$ may correspond to smaller blocks of a partition. Put more formally $\Acal_i \subseteq \Bcal_g \subseteq \Ncal$ where $g=g_v$ for any $v \in \Acal_i$, or
\beq
\Bcal_g = \bigcup_{i \in I_g} \Ccal_i \, , \quad
I_g = \{i | g_v = g, \, v \in \Ccal_i\} \, .
\eeq 
A situation where the number of (non-empty) components $\Acal_i$ is greater than the number of non-empty blocks is shown in \figref{f:ggbexBVL}. In \figref{f:ggbexBVL}.i on the left, the node shape and colour indicates the three node blocks present: $\Bcal_e = \{v_1,v_2,v_4,v_6\}$ containing nodes labelled by $e$ (blue circles), $\Bcal_a = \{v_3\}$ containing nodes labelled by $a$ (red square) and $\Bcal_b = \{v_5\}$ containing nodes labelled by $b$ (yellow diamond). One of blocks $\Bcal_e$, is not weakly connected by any walk along identity links, links whose label is $e$. So block $\Bcal_e$ can be split into two smaller components, $\Acal_{126}=\{v_1,v_2,v_6\}$ and $\Acal_{4}=\{v_4\}$ (shown in \figref{f:ggbexBVL}.ii on the right as blue circles and green hexagons respectively) defined by nodes which are weakly connected only by walks on identity links, i.e.\ walks on the identity group graph $\Ggraph_e$ shown in \figref{f:ggbexBVL}.ii. So here I have $\Bcal_e = \Acal_{126} \cup \Acal_4$, $\Bcal_a = \Acal_3$ and $\Bcal_b = \Acal_5$. Then $\{\Acal_{126},\Acal_3,\Acal_4,\Acal_5\}$ forms a sub-partition of the node partition $\{\Bcal_e,\Bcal_a,\Bcal_b\}$. This is an example of a balanced group graph that is clusterable. Note that if the original group graph was not balanced it can still be clusterable. For instance, changing the label $ab$ on the link from node $5$ to node $4$ in \figref{f:ggbexBVL}.i to any other group element creates an unbalanced group graph but one that is still clusterable as it has the same identity group graph $\Ggraph_e$ as the group graph on the right. 
\begin{figure}[htb]
	\begin{center}
		\includegraphics[width=0.9\textwidth]{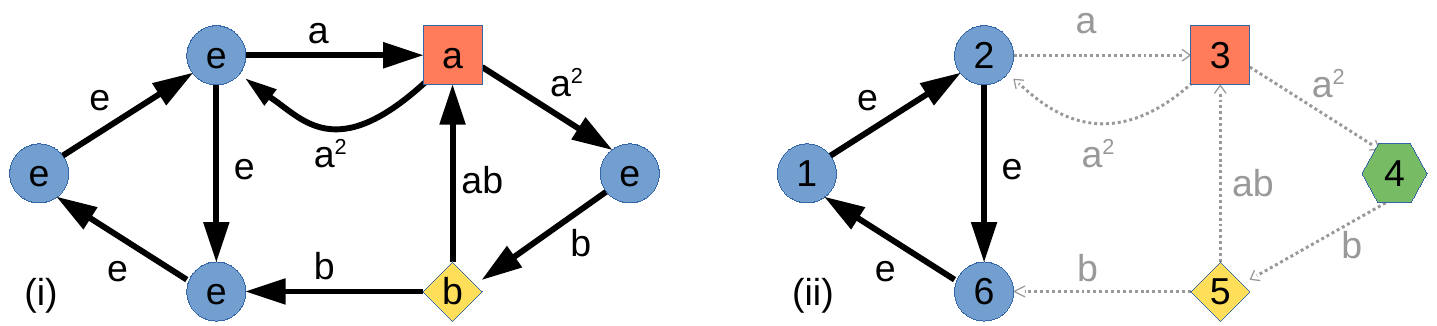}
	\end{center}
	\caption{On the left in (i) a balanced $S_3$ group graph is shown while in (ii) on the right, the corresponding identity group graph $\Ggraph_e$ is shown, that is the same group graph but where only links labelled by the identity $e$ are retained.  Links are shown as black solid arrows, thin grey dashed lines represent links that have been removed. In (i) on the left the labels inside each node symbol give the node label $g_v \in \Gcal$ while in (ii) on the right the labels inside each node symbol give the index $i$ f each node $v_i$. 	
		The node shape and colour indicates membership of one particular block of a node partition which respects the group structure. 
	}
	\label{f:ggbexBVL}
\end{figure}
A rather trivial corollary is that I can always create the identity group graph $\Ggraph_e$ from any group graph $\Ggraph$ whether it is balanced or not. The identity group graph is, trivially, a balanced group graph.

Finally, for a balanced group graph I can define the link labels of a balanced group graph from any given set of node labels, something which contains the converse of \lemref{l:nodelabel}.
\begin{lemma}[From node labels to balanced group graph]\label{l:nltocgg}
	Given a set of node labels, $\{g_v\}$, for a directed network $\Dcal=(\Ncal,\Lcal)$, where $g_v \in \Gcal$ for some group $\Gcal$ and $v \in \Ncal$,  a balanced group graph $\Ggraph=(\Dcal,\Gcal,\linkmap)$ can be defined by assigning the link labels to be $\linkmap((v,u)) = g_u \times (g_v)^{-1}$ for all links $(v,u) \in \Lcal$. 
\end{lemma}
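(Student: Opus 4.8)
The plan is to take the formula in the statement as the \emph{definition} of the link-labelling map $\linkmap$ and then check directly that the resulting group graph meets the balance criterion of \defref{d:bggsw}; the argument is a single telescoping product. Since balance is a statement about semiwalk labels and, by \defref{d:semiwalksgg}, these are read off the \emph{symmetrised} group graph, I will work with the labelling convention $\linkmap((v,u)) = g_v\times(g_u)^{-1}$ for $(v,u)\in\Lcal$ (matching \corref{c:lldecomp}), keeping in mind that the order of the product will matter once $\Gcal$ is non-abelian.

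First I would pin down the link labels that symmetrisation assigns. If $(v,u)\in\Lcal$ but $(u,v)\notin\Lcal$, then by \defref{d:sgrpgrph} the added reciprocal link gets $\linkmap((u,v)) = [\linkmap((v,u))]^{-1} = [g_v(g_u)^{-1}]^{-1} = g_u(g_v)^{-1}$, which is exactly the node-label formula evaluated on $(u,v)$. If $(v,u)$ and $(u,v)$ are both already present, the formula was used on each, and since $g_u(g_v)^{-1} = [g_v(g_u)^{-1}]^{-1}$ they are automatically mutual inverses, so \defref{d:sgrpgrph} leaves them alone. Hence \emph{every} link $(w,z)$ of $\Lcalsymm$ carries the label $g_z\times(g_w)^{-1}$. (The multigraph caveat recorded before \lemref{l:bsgrpgrphsw} is met automatically here, because this label depends only on the endpoints, so parallel links in the same direction necessarily share it.)

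Next I would take an arbitrary semiwalk $\Scal = (v_n)_{n=0}^{L}$ from $u=v_0$ to $v=v_L$ and compute its label via \defref{d:semiwalksgg}: with $\ell_m=(v_m,v_{m-1})\in\Lcalsymm$ and the label of the first link placed on the right, the previous step gives $\linkmap(\ell_m)=g_{v_m}(g_{v_{m-1}})^{-1}$, so $\linkmap(\Scal) = (g_{v_L}g_{v_{L-1}}^{-1})(g_{v_{L-1}}g_{v_{L-2}}^{-1})\cdots(g_{v_1}g_{v_0}^{-1})$. Each adjacent pair $g_{v_{m-1}}^{-1}g_{v_{m-1}}$ cancels, leaving $\linkmap(\Scal) = g_{v_L}(g_{v_0})^{-1} = g_v(g_u)^{-1}$, which depends only on the endpoints $u$ and $v$. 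Therefore any two semiwalks between the same two nodes carry the same label, which is precisely the condition of \defref{d:bggsw}; so $\Ggraph$ is balanced, proving the lemma. As consistency checks, the trivial walk $\Wcal_v$ gives $g_v(g_v)^{-1}=e$, and running the construction of \lemref{l:nodelabel} on this $\Ggraph$ recovers the family $\{g_v\}$ for a suitable choice of root-node label.

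I do not expect a serious obstacle: the lemma is essentially the converse of \lemref{l:nodelabel}/\corref{c:lldecomp}. The one point I would be careful about is the multiplication order in the non-abelian case — the telescoping works precisely because \defref{d:walkgpelm} and \defref{d:semiwalksgg} put the first link on the right, so the consecutive factors $\cdots(g_{v_m}g_{v_{m-1}}^{-1})(g_{v_{m-1}}g_{v_{m-2}}^{-1})\cdots$ line up to cancel in the middle; with the mirror-image labelling rule the same cancellation occurs after reversing every product, so I would state this once rather than repeat the computation. No per-weakly-connected-component reduction is needed here, since the labelling formula is defined globally on all of $\Ncal$.
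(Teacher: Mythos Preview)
Your proposal is correct and follows essentially the same telescoping argument as the paper: compute the label of an arbitrary semiwalk, cancel the intermediate $g_{v_{m-1}}^{-1}g_{v_{m-1}}$ factors, and conclude via \defref{d:bggsw}. You are in fact more careful than the paper's own proof, which jumps straight to the telescoping without explicitly checking (as you do) that the symmetrisation of \defref{d:sgrpgrph} assigns the node-label formula $g_w(g_z)^{-1}$ to \emph{every} link of $\Lcalsymm$, including the added reciprocals; you also correctly flag and resolve the discrepancy between the order $g_u(g_v)^{-1}$ stated in the lemma and the order $g_v(g_u)^{-1}$ actually used in the paper's proof and in \corref{c:lldecomp}.
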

\begin{proof}
	Suppose I have a semiwalk $\Scal = (v_n)_{n=0}^L$ from node $v_0$ to node $v_L$. The label of the walk is then given by
	\bea
	\linkmap(\Scal) 
	&=&
	\linkmap(v_L,v_{L-1}) \times \linkmap(v_{L-1},v_{L-2}) \times \ldots \times \linkmap(v_{2},v_{1}) \times \linkmap(v_{1},v_{0})    
	\\ 
	&=&
	g(v_L) g(v_{L-1})^{-1} . g(v_{L-1}) g(v_{L-2})^-1 
	\ldots 
	g(v_{2})g(v_{1})^{-1} . g(v_{1})g(v_{0})^{-1}    
	\\ 
	&=&
	g(v_L) g(v_{0})^{-1}    
	\eea
	where $g(v) \equiv g_v$. Therefore the label of any semiwalk only depends on the node labels of the source and target nodes. Hence the label of any semiwalk is the same whatever semiwalk is taken between two nodes. Hence these link labels $\linkmap((v,u)) =  g_v(g_u)^{-1}$ do define a balanced group graph by \defref{d:bggsw}.
\end{proof}
Note this contains the converse of \lemref{l:nodelabel}, namely that nodes can be labelled in a balanced group graph. However, this also shows that \emph{any} set of node labels can be used to define link labels of the form $\linkmap((v,u)) = g_v(g_u)^{-1}$ of \lemref{l:nltocgg} that then gives an associated balanced group graph. So this gives me a simple way to construct balanced group graphs. 

\begin{figure}[htb]
	\begin{center}
		\includegraphics[width=0.9\textwidth]{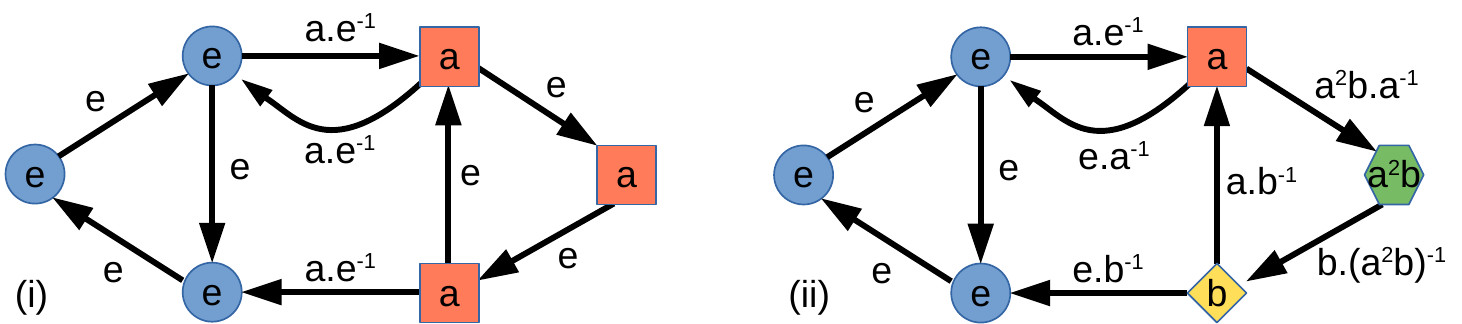}
	\end{center}
	\caption{The two balanced group graphs of \figref{f:bggex}. These have been redrawn to show the node labels $g_v \in \Gcal$ inside the node symbols and the link labels rewritten in the form $g_v(g_u)^{-1}$ for links from node $u$ to node $v$. The node shape and colour indicates membership of one particular block of the node partition which respects the group structure. Note that for the $S_3$ graph in (ii) on the right, there are only four blocks in the partition, only four different group elements appear as node labels out of the possible six elements of $S_3$. 
	}
	\label{f:ggcnstntexVL2}
\end{figure}

\subsubsection{Summary of balanced group graph properties}\label{s:summarybgg}

I have worked through the definition and properties of group graphs in one particular order, starting with one particular definition, \defref{d:bggsw}. However, for many of properties the converse is also true and so there are other starting points. So, motivated by the properties of balanced signed networks, I will now show the following theorem to summarise some of the previous results in a more concise manner. This is a direct translation of Theorem 13.2 on p 342 in \citet{HNC65} for balanced signed networks.
\begin{theorem}[Properties of balanced group graphs]\label{t:propbgg}
	The following statements are equivalent for any balanced group graph $\Ggraph=(\Dcal, \Gcal, \linkmap)$.
	\begin{enumerate}
		\item \label{i:swalk} For every pair of nodes, all semiwalks joining them have the same label.
		\item \label{i:node} The set of nodes $\Ncal$ of a group graph $\Ggraph$ can be partitioned into $|\Dcal|$ blocks (some of which may be empty) such that every node in a block $\Ncal_g$ carries a label $g \in \Gcal$ and every link $\ell$ from a node with label $g$ to a node with label $g'$ has a label  $\linkmap(\ell)=(g')^{-1}g$. 
		\item \label{i:cswalk} Every closed semiwalk $\Ccal$ of $\Ggraph$  has an identity label, $\linkmap(\Ccal)=e$.
	\end{enumerate}
\end{theorem}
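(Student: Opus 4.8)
The plan is to prove the three conditions equivalent by closing the cycle of implications $\ref{i:swalk}\Rightarrow\ref{i:node}\Rightarrow\ref{i:cswalk}\Rightarrow\ref{i:swalk}$. Condition~\ref{i:swalk} is exactly \defref{d:bggsw}, so once the cycle is closed all three are equivalent characterisations of balance. Two of the three arrows are essentially already in hand from the preceding lemmas, so the real content is the last arrow, $\ref{i:cswalk}\Rightarrow\ref{i:swalk}$, which I will have to supply while taking care to avoid a circular appeal to \lemref{l:rsemiwalkbgg}.

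First, $\ref{i:swalk}\Rightarrow\ref{i:node}$. Assuming every semiwalk between a given pair of nodes has the same label, \lemref{l:nodelabel} assigns node labels $g_v\in\Gcal$ (pick a root node in each weakly connected component and transport the label along any semiwalk), and these labels induce the partition into blocks $\Bcal_g$ indexed by group elements, some possibly empty, exactly as in \eqref{e:nodelabelpartition}. \corref{c:lldecomp} then expresses each link label through the node labels of its endpoints, giving the form stated in \ref{i:node} up to the left/right-multiplication convention, which is only bookkeeping.

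Next, $\ref{i:node}\Rightarrow\ref{i:cswalk}$. This is a telescoping computation. Given node labels with each link from a label-$g$ node to a label-$g'$ node carrying label $(g')^{-1}g$, I first note that the same description survives passage to the symmetrised group graph $\Ggraphsymm$: by \defref{d:sgrpgrph} an added reciprocal link gets the inverse label, and a one-line check shows that inverse is again of the form (inverse of target label)(source label). Then for a closed semiwalk $\Ccal=(v_n)_{n=0}^{L}$ with $v_0=v_L$, the semiwalk label of \defref{d:semiwalksgg} becomes a telescoping ordered product of factors $(g_{v_m})^{-1}g_{v_{m-1}}$ in which consecutive node labels cancel, leaving $(g_{v_L})^{-1}g_{v_0}=e$.

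Finally, $\ref{i:cswalk}\Rightarrow\ref{i:swalk}$, the only genuinely new step. Let $\Scal$ and $\Scal'$ be semiwalks from $u$ to $v$ and let $\Scalrev$ be the reversed semiwalk of $\Scal$, which exists by \lemref{l:rsemiwalk}. The concatenation $\Scalrev\circ\Scal$ is a closed semiwalk at $u$, so \ref{i:cswalk} together with the concatenation identity \eqref{e:concatwalklabel} gives $\linkmap(\Scalrev)\linkmap(\Scal)=e$, hence $\linkmap(\Scalrev)=[\linkmap(\Scal)]^{-1}$; the same argument applied to the closed semiwalk $\Scalrev\circ\Scal'$ gives $\linkmap(\Scalrev)\linkmap(\Scal')=e$, hence $\linkmap(\Scal')=[\linkmap(\Scalrev)]^{-1}=\linkmap(\Scal)$, so all semiwalks from $u$ to $v$ share one label, which is \ref{i:swalk}. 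The main obstacle is precisely here: it is tempting to read off $\linkmap(\Scalrev)=[\linkmap(\Scal)]^{-1}$ from \lemref{l:rsemiwalkbgg}, but that lemma presupposes balance and would make the argument circular, so this identity must instead be obtained directly from \ref{i:cswalk} as above; the degenerate cases (trivial walks and self-loops) need a moment's attention but are already handled by \lemref{l:cswbggid}.
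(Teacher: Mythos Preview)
Your proof is correct and follows the same cycle of implications $\ref{i:swalk}\Rightarrow\ref{i:node}\Rightarrow\ref{i:cswalk}\Rightarrow\ref{i:swalk}$ as the paper. You are in fact more careful than the paper on the last arrow: where the paper simply asserts $\linkmap(\Scalrev)=[\linkmap(\Scal)]^{-1}$ (implicitly leaning on \lemref{l:rsemiwalkbgg}, which would be circular), you correctly derive this identity from condition~\ref{i:cswalk} itself via the auxiliary closed semiwalk $\Scalrev\circ\Scal$; your only slip is the throwaway reference to \lemref{l:cswbggid} for the degenerate cases, which also presupposes balance---but those cases are vacuous anyway (zero or one semiwalk between the nodes), so nothing is lost.
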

Put another way, any of these three statements can be used as a definition of a balanced group graph. 
Note that we can restate \thref{t:propbgg}.\ref{i:node} in the language of \tsedef{switching transformations} \citep{AR68,Z82,Z89,CW86,CDD21} as noted in \corref{c:balswitch} of \appref{as:switching}.
One way to see that these statements are all equivalent and to relate them to the properties derived above is as follows.
\begin{proof}
	I have already shown that \thref{t:propbgg}.\ref{i:swalk} implies\footnote{That \thref{t:propbgg}.\ref{i:node} follows from \thref{t:propbgg}.\ref{i:swalk} is implicit in \lemref{l:nodelabel} which I derived from \defref{d:bggsw} in the earlier text.} \thref{t:propbgg}.\ref{i:node}. This is because  \thref{t:propbgg}.\ref{i:swalk} is equivalent to \lemref{l:nodelabel} which I derived starting from \defref{d:bggsw} which is equivalent to \thref{t:propbgg}.\ref{i:node}. 
	
	To show that \thref{t:propbgg}.\ref{i:node} implies \thref{t:propbgg}.\ref{i:cswalk} I note that I can pick any node in the closed semiwalk to write the closed semiwalk as $\Scal=(v_n)_{n=0}^L$ where $v_L=v_0$ is the first node and last node in a (closed) semiwalk.  Then I can use \lemref{l:cswbggid} which states the label of this closed semiwalk is $g(v_L) \times[g(v_0)]^{-1}$ where $g(v) \equiv g_v$ is the node lable of node $v$. So for this closed walk I have that $g(v_L)=g(v_0)$ so must be labelled by the identity element as the first and last nodes in this closed walk will always have the same node label defined in the previous part \thref{t:propbgg}.\ref{i:node} on the symmetrised graph. 
	
	To show that \thref{t:propbgg}.\ref{i:cswalk} implies \thref{t:propbgg}.\ref{i:swalk} I can consider semiwalks from any node $u$ to a node $v$. If there is only no such semiwalk or just one, then there is nothing to show (these cases are not constrained by balance). If there is more than one distinct semiwalk I can take each possible pair of such semiwalks, say $\Scal$ and $\Scal'$. By definition, the reverse of either one of these can be used to create a closed semiwalk, for instance $\Ccal = \Scalrev \circ \Scal'$. The label of this closed semiwalk is $e$ by point (\ref{i:cswalk}) so I have that 
	\beq 
	e 
	= \linkmap(\Ccal) 
	= \linkmap(\Scalrev) \times \linkmap( \Scal')
	= [\linkmap(\Scal)]^{-1} \times \linkmap( \Scal') \, .
	\eeq
	Right multiplying by $\linkmap(\Scal)$ gives me the results I require that $\linkmap(\Scal) =\linkmap( \Scal') $\,.
	
	Now I have proved this cycle of theorems, I deduce that the other dependencies required are also true. Namely that 
	\thref{t:propbgg}.\ref{i:node} implies \thref{t:propbgg}.\ref{i:swalk},
	\thref{t:propbgg}.\ref{i:cswalk} implies \thref{t:propbgg}.\ref{i:node}, and
	\thref{t:propbgg}.\ref{i:swalk} implies \thref{t:propbgg}.\ref{i:cswalk}.	
\end{proof}

In fact I can tighten these theorems to replace ``semiwalks'' by ``semipaths'' in \thref{t:propbgg}.\ref{i:swalk} and ``closed semiwalks'' by ``semicycles'' in \thref{t:propbgg}.\ref{i:cswalk} with \citet{HNC65} using ``semipaths'' and ``semicycles'' in their statement of this theorem for signed directed networks. This is simply because semiwalks are concatenations of semipaths and semicycles, so closed semipaths are simply made up of concatenations of semicycles. These results are generic to directed networks (and, in fact, follow from equivalent properties on undirected networks) and are shown in the literature such as in \citet{HNC65}.

\subsubsection{Balance, signs and the $Z_2$ group}

Balance is one of the key concepts for signed networks. Here I will highlight that balanced signed directed networks are balanced $Z_2$ group graphs. Since $Z_2$ is one of the groups where balance is possible for an undirected group graph, this result covers both directed and undirected cases.

\Thref{t:propbgg} encoded three properties of balanced group graphs that match properties known for balanced signed networks (e.g.\ see theorem 13.2 on p 342 in \citet{HNC65}). I shall use just one of these properties to make the connection between signed networks and $Z_2$ group graphs explicit but I could have used any of the three properties. 

One definition of a balanced signed network is in terms of partitions of the nodes in a signed undirected network (theorem 3 of \citet{H53}) or for directed signed networks (e.g.\ see theorem 13.2 on page 342 in \citet{HNC65})
\begin{definition}[A balanced signed network in terms of a partition]
	A balanced signed network (directed or undirected)  is one where the nodes can be partioned into two components where the links between nodes in the same component are only positive while links between those in different clusters are always negative links.
\end{definition}
This is just a particular case of \lemref{l:nltoidsw} and its converse \lemref{l:indentitywalk} for a $Z_2$ group graph in the parity irreducible representation of $Z_2$. So a balanced signed network is just a balanced group graph where the group is $Z_2$.

\subsection{Undirected group graphs}\label{s:udgg}

So far all the group graphs have been constructed from directed networks $\Dcal$. However, there are many examples of signed networks based on undirected networks. Let $\Ucal=(\Ncal,\Lcal)$ denote an undirected network where links between node $u$ and $v$ have no direction. These undirected links are a set of two nodes denoted by braces, $\{u,v\} =\{v,u\} \in \Lcal$, and are not an ordered sequence such as is denoted by $(v,u)$. 

An undirected network is equivalent to a symmetric directed network, that is $\Dcalsymm$, where every link in the undirected network $\{u,v\} \in \Lcal$ is represented by a pair of reciprocated directed links in an equivalent directed network where $(u,v) \in \Lcalsymm$ and  $(v,u) \in \Lcalsymm$. So there is no difference between walks and semiwalks on an undirected network --- every semiwalk on an undirected walk is also a walk.

The question is are there group graphs for undirected networks? The answer is yes but there is a strong limitation the examples where balance is possible.  

The definition a group graph $\Ggraph=(\Ucal, \Gcal, \linkmap)$ on an undirected network $\Ucal$ can be constructed from   \defref{d:grpgrph} of a group graph on a directed graph if the symmetric graph representation $\Dcalsymm$ of any undirected graph is used. This means that undirected links $\{u,v\}$ are given a label $\linkmap(\{u,v\}) \in \Gcal$ in some group $\Gcal$ regardless of the direction of the equivalent reciprocated directed links in $\Dcalsymm$. If there are semiwalks in $\Dcalsymm$, walks in $\Ucal$, passing along a link $\{v,u\}\in \Lcal$ then this link will contribute a factor of $g_{vu}=\linkmap(\{u,v\})$. For any semiwalk moving along this link in the opposite direction, this link will contribute the same link label as $g_{uv}=\linkmap(\{u,v\})=\linkmap(\{v,u\})=g_{vu}$ since the order of nodes in a link do not matter for an undirected network. 

The problem is that while a group graph based on an undirected network is well defined, it can not be balanced for general link labels from some group $\Gcal$. 
Consider semicycles $(v,u,v)$ formed from any undirected edge $\{u,v\} \in \Lcal$. The label on the only edge involved is always a single group element $g_{vu}=g_{uv}$ which means these semicycles always have a label of $(g_{vu})^2$. For many group elements in most groups this is not the identity. If this $(g_{vu})^2$ is not the identity for all the link labels in the group graph then that will violate one of the properties of a balanced graph that the label for any closed semiwalk is the identity, e.g.\ as summarised in \thref{t:propbgg}.\ref{i:cswalk}. 

The only groups where \emph{every} element of the group satisfies $g^2=2$ are product groups made from $Z_2$ subgroups, $(Z_2)^n$. 
So undirected group graphs based on these product groups are as unconstrained as the directed versions.
These $(Z_2)^n$ group graphs can be thought of as representing symmetric relationships between two nodes where if there is an edge present, then there are $n$ different aspects to this one relationship, each aspect is either positive or negative.  For instance, the existence of the edge might reflect significant interaction between two individuals who are represented by two nodes in a social network. The signs might indicate when various key properties of a pair of interacting nodes are compatible, the same, similar (positive) or different (negative). Properties which could include gender, political leaning, ``compatible'' Western or Chinese astrology sign, and so forth. All of these are always defined for any comparison between individuals so the multiple signs must be used to label any link. An example is shown in \figref{f:ggudZ2Z2}.

For example, looking in \citet{CH56}, one of the oldest papers on signed networks, the authors interpret ``Heider's conception of balance'' \citep{H46} in terms of \emph{two} binary relations. The first relation is denoted by ``L'' and they describe this as concerning ``attitudes, or the relation of liking or evaluating'' while ``the second type of relation refers to cognitive unit formation, that is, to such specific relations as similarity, possession, causality, proximity, or belonging''.
This second relation is denoted by ``U''. In each case the relation can be positive or negative. This gives me a representation as a $Z_2 \times Z_2$ group graph where every edge has two signs, one for the ``U'' relation and another  sign for the ``L'' relation.

\begin{figure}[htb]
	\begin{center}
		\includegraphics[width=0.9\textwidth]{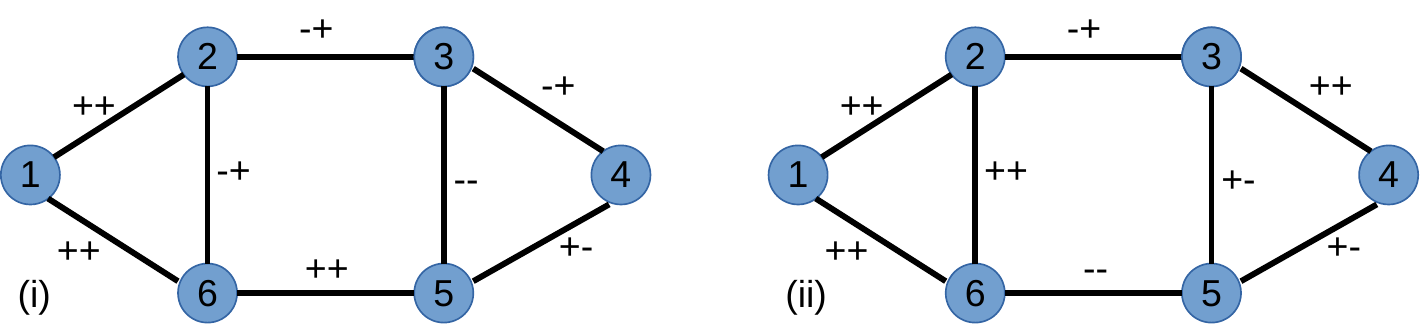}
	\end{center}
	\caption{Examples of a ``type 2'' network in the terminology of \citet{CH56}, a $Z_2 \times Z_2$ group graph. Here the existence of an edge indicates there is a relationship between two individuals (nodes). Each edge carries a label given by  an ordered pair of signs $\pm\pm$. The first sign represents compatibility under one relation (say Western astrological star sign) while the second represents compatibility under a different relation (say Chinese astrological animal). These signs are always defined for any pair of individuals so every edge has to have both signs. That is, this network is not a more general type of two-layer network in which an edge between two nodes may exist on one layer but not the other. On the left in (i) is an unbalanced example while network (ii) on the right is balanced.  
	}
	\label{f:ggudZ2Z2}
\end{figure}

However, there are also find examples of undirected and balanced group graphs based on other groups but they are more contrived and appear to be less useful. I have noted that group graphs can be constructed without using all the group elements. This fact can be used to find balanced undirected group graphs based on groups where some elements do not satisfy $g^2=2$ provided such elements are not used as link labels. Some examples of balanced undirected group graphs for $S_3$ are shown in \figref{f:ggudbal}. 
\begin{figure}[htb]
	\begin{center}
		\includegraphics[width=0.65\textwidth]{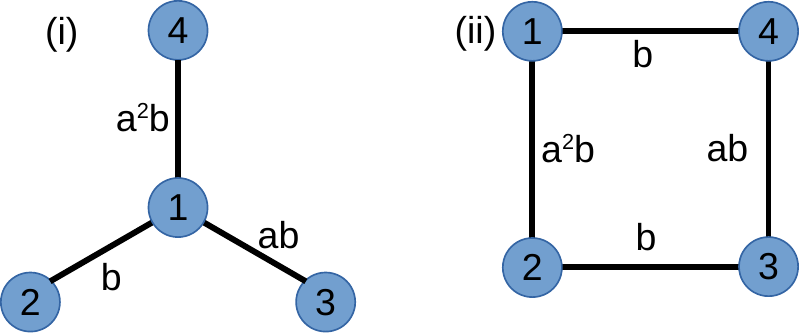}
	\end{center}
	\caption{Some examples of balanced undirected $S_3$ group graphs. The link label label must be an element where $g^2=e$ which for $S_3$ is the identity $e$ and the three reflection elements: $b$, $ab$ and $a^2b$. These four elements do not form a group on their own as seen by the need for a third group element, $a$, needed to write down these four elements of $S_3$.   
	}
	\label{f:ggudbal}
\end{figure}
It is unclear if such balanced undirected group graphs when based on a limited set of group elements are useful or not.

\section{Dynamics on group graphs}\label{s:dynamics}

One of the motivations for defining group graphs is to extend the results of \citet{TL24a} for processes on signed networks and \citet{TL24} for processes on networks with edges labelled by complex numbers. To do this, I first need to move away from abstract groups to look at representations of groups.

\subsection{The representation of groups}\label{s:grouprep}

So far I have used the labels $e,a,b$ etc to denote the elements of the group in an abstract sense. However, when looking at symmetries in certain processes on a network the group elements will appear as matrices. That is I will use a \tsedef{representation} of the symmetry group in terms of a set of $d$-by-$d$ matrices\footnote{In principle I can use other structures to represent the group but I will limit ourselves to linear processes in real- and complex vector spaces and hence I will only consider matrix representations.} 
$\{\Dmat(g)| g \in \Gcal \}$. The key property of this matrix representation is that when multiplying these matrices together they reflect the group multiplication properties so if $g_2 \times g_1 = g_3$ then $\Dmat(g_2)\Dmat(g_1)=\Dmat(g_3)$. 

In a $d$-dimensional representation, group elements $g$ are represented by $d$-by-$d$ representation matrices $\Dmat(g)$ which act on $d$-dimensional vectors that live in a $d$-dimensional vector space $\Rcal$ (usually real or complex numbers). If the matrices in the representation are all distinct, so $\Dmat(g_1) = \Dmat(g_2)$ if and only if $g_1=g_2$,  then the representation is \tsedef{faithful} otherwise the representation is \tsedef{unfaithful}.

Finally a representation is \tsedef{reducible} if there is a similarity transformation in terms of an invertible matrix $\Smat$ such that $\Smat^{-1} \Dmat(g)\Smat$ is block diagonal for all $g \in \Gcal$. Any representation where this transformation does not exist is called an \tsedef{irreducible representation} (often abbreviated as an \tsedef{irrep}). 
There is a well established machinery to find the decomposition of any given representation into its irreducible representations. Likewise the irreducible representations of most groups of interest are well known. Here, I will focus on two groups: $Z_2$ and $S_3$.

One of the best known examples of a group graph is a signed network. A signed network has two types of link, usually labelled by $+1$ and $-1$. These can be interpreted as the matrices from a one-dimensional matrix representation of $Z_2$ where $\Dmat(e) = +1$ (identity) and $\Dmat(a) = -1$. This is the only faithful irreducible representation of $Z_2$ and it is called the \tsedef{parity} irreducible representation.

A two-dimensional representation of $S_3$ is
\begin{eqnarray}
	\begin{array}{ccc}
		\Dmat(e)= 
		\begin{pmatrix}
			1 & 0 \cr\noalign{\vskip6pt}
			0 & 1 \cr
		\end{pmatrix},\hfill
		&
		\Dmat(a)   = {\textstyle{1\over2}}
		\begin{pmatrix}
			-1  & -\sqrt{3} \cr\noalign{\vskip6pt}
			\sqrt{3} &       -1  \cr
		\end{pmatrix}
		,\hfill 
		&
		\Dmat(a^2) = 
		{\textstyle{1\over2}}
		\begin{pmatrix}
			-1  & \sqrt{3} \cr\noalign{\vskip6pt}
			-\sqrt{3} &      -1  \cr
		\end{pmatrix}
		, \hfill
		\\
		\noalign{\vskip18pt}
		\Dmat(b) =
		\begin{pmatrix}
			-1 & 0 \cr\noalign{\vskip6pt} 
			0 & 1 \cr
		\end{pmatrix},
		&
		\Dmat(a b) =
		{\textstyle{1\over2}}
		\begin{pmatrix}
			1 & -\sqrt{3} \cr\noalign{\vskip6pt}
			-\sqrt{3} &       -1  \cr
		\end{pmatrix}, 
		&
		\Dmat(a^2 b) = 
		{\textstyle{1\over2}}
		\begin{pmatrix}
			1  & \sqrt{3} \cr\noalign{\vskip6pt}
			\sqrt{3} &      -1  \cr
		\end{pmatrix}
	\end{array}
	. \hfill
	\label{e:S3d2}
\end{eqnarray}
These matrices are symmetry transformations of an equilateral triangle centred at the origin in a two-dimensional plane when using orthogonal coordinates.  Matrices representing elements $b$, $ab$ and $a^2b$ are reflections (so have determinant $-1$) and these three elements form one class of $S_3$. The matrix for $a$ represents a rotation by $120^\circ$ so the element $a^2$ is here represented by a $240^\circ$ rotation. Together $a$ and $a^2$ elements form a second class of elements in $S_3$. The last class exists in all groups and that is the class containing just the identity element $e$, here represented by the two-dimensional unit matrix. These matrices form a faithful representation of $S_3$ and, apart from similarity transformations (i.e.\ changes of coordinate system), this is the only faithful irreducible representation of $S_3$ as well as being the only two-dimensional irreducible representation of $S_3$.

There are many other representations. For instance, I could use another irreducible representation of $S_3$, the unfaithful one-dimensional irreducible representation where $\Dmat(e)=\Dmat(a)=\Dmat(a^2)=+1$ and $\Dmat(b)=\Dmat(ab)=\Dmat(a^2b)=-1$. In this case all the extra structure of the group $S_3$ has been lost and this is effectively working in terms of a faithful representation of a $Z_2$ subgroup of $S_3$. In such a case it is better to think of this signed network as a faithful representation of a $Z_2$ unless the full symmetry properties of $S_3$ appear elsewhere in the problem.

\subsection{The representation of group graphs}

Groups are normally encountered as a representation so it makes sense to define a group graph in this context.
\begin{definition}[Representation of a Group Graph]
	A \tsedef{representation of a group graph} is a Group Graph $\Ggraph$ where the group elements for the links lie in one representation $D$ of the group where each link $\link$ is labelled by $D(g_\link)$.
\end{definition}


I will only consider matrix representations so I have a $d$-by-$d$ matrix $D(\linkmap(\link))$ representing the group element $\linkmap(\link)$ for every link $\link$ in the group graph. I will use the notation $\Dmat(\link) \equiv \Dmat(g_\link) \equiv \Dmat(\linkmap(\link))$ to indicate the matrix $\Dmat$ used to represent the label of a link $\ell$, where $\{\Dmat(g) | g \in \Gcal\}$ is the matrix representation of the group\footnote{Formally I am mixing two maps in my notation. First a map from the set of links to a matrix, $D: \Lcal \to \Rcal \times \Rcal$ where $\link \mapsto \Dmat(\link)$. Then I have the usual notation for a matrix representation of a group where $D: \Gcal \to \Rcal \times \Rcal$ with $g_\link \mapsto \Dmat(g_\link) \equiv \Dmat(\linkmap(\link))$. The context should make it clear which map is being used.}.

An example of $Z_2$ group graph in the parity representation, i.e.\ a traditional signed network, is shown on the left in \figref{f:ggrepex}.  
On the right of \figref{f:ggrepex} I shown an $S_3$ group graph lying in the two-dimensional representation \eqref{e:S3d2}. 

\begin{figure}[htb]
	\begin{center}
		\includegraphics[width=0.9\textwidth]{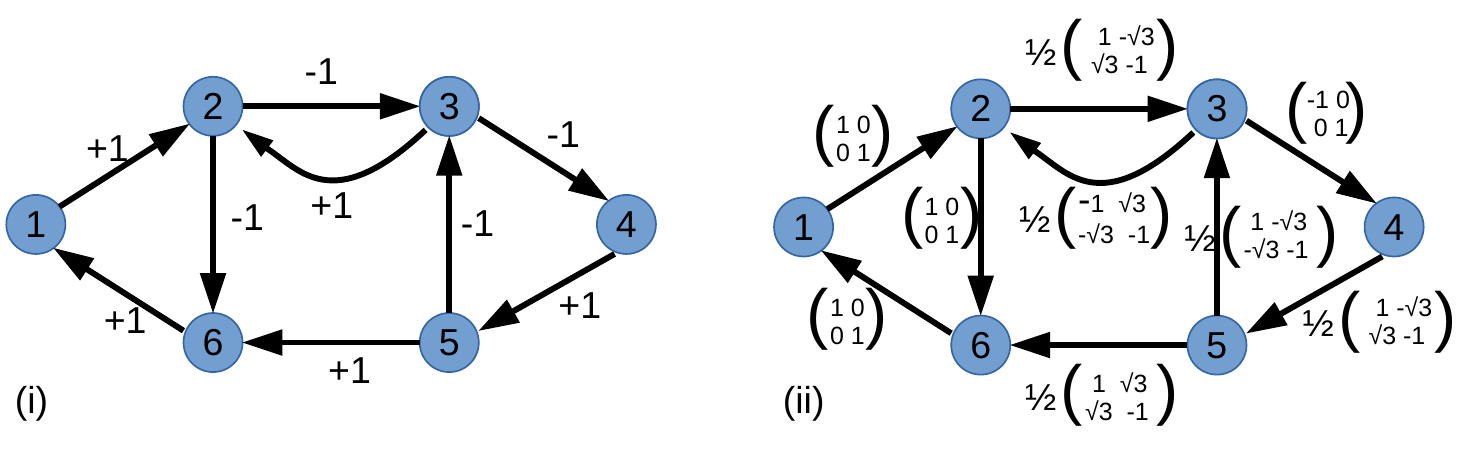}
	\end{center}
	\caption{Two examples of a matrix representation of the group graphs in \figref{f:ggex}. On the left in (i) I use the non-trivial one-dimensional irreducible representation of $Z_2$ where $D(e)=+1$ and $D(a)=-1$ to give a a matrix representation of the group graph on the left in \figref{f:ggex}. This is an unbalanced signed network as defined by \citet{DM09} where positive (negative) links are labelled with $+1$ ($-1$). In (ii) on the right is a two-dimensional matrix representation of the $S_3$ group graph shown on the right of \figref{f:ggex}. Group elements for each link are shown next to the relevant link. This matrix representation of $S_3$ is given in \eqref{e:S3d2}. 
	The numbers inside the blue circles are used to identify nodes in examples in the text.   } 
	\label{f:ggrepex}
\end{figure}

The concepts and results for group graphs developed earlier with abstract group elements all have a natural generalisation into the context of a representation of a group graph given that matrix multiplication in the representation reflects the group structure.

\subsection{The dynamics on balanced group graphs}

To show how group graphs can give useful insights, I will consider a some dynamical processes on a group graph $\Ggraph=(\Dcal,\Gcal,\linkmap)$. Here the topology described by the underlying directed network $\Dcal$ plays the role of `geographical' space, much as a regular lattice or continuous Euclidean space describe the space in which physical processes, such as diffusion of dye on a piece of paper, occur. At each node $v$ on the network $\Dcal$, at each point in this space, I will also add an ``internal space'' $\Rcal$. Each node $v$ carries a value in this space. I will assume points in this internal space can  be represented by $d$ values. So ultimately I have values $w_{vi}(t)$ for each node $v$ at time $t$ where this represents the $i$-th coordinate in the $d$-dimensional internal space $\Rcal$. For instance, perhaps there are two types of dye molecules so $w_{vi}(t)$ represents the density of dye type $i$ at node $v$ at time $t$. These values could be complex as used in quantum mechanics. So $w_{vi}(t)$ might be the amplitude of a quantum state at node $v$ at time $t$ expressed in terms of two basis states labelled by $i$, say $i = 1$ is a spin up eigenstate and $i=2$ is a spin down eigenstate.
I will look at some very general examples of how these values evolve where the evolution is given by a linear equation of the form
\bea
 \itervec{t+1}  &=& \Wmat \, \itervec{t} \, . 
 \label{e:cggrwsimple}
\eea
The matrix $\Wmat$ is an $N.d$-by-$N.d$ matrix with entries $W_{vj,ui}$ linking the value of the $i$-th coordinate of node $u$ at time $t$ to the value of the $j$-th coordinate of node $v$ at time $t+1$. In the context of a complex network, this matrix entry will be zero unless there is an edge from $u$ to $v$ in the directed network $\Dcal$.

The symmetry described by the group $\Gcal$ is in the way the dynamical process moves values through the internal space $\Rcal$, and this will be the representation space $\Rcal$ mapped by the symmetry matrices of the group. Specifically, it is assumed that the transformation matrices $\Wmat$ (where $\Wmat : \Ncal \times \Rcal \to \Ncal \times \Rcal$) can be factorised into two parts, where
\beq
 W_{vj,ui}= M_{vu} D_{ij}(v,u) \, .
 \label{e:Wfactor}
\eeq
This is not quite a direct product of two matrices because the second part, the matrix $\Dmat$, is a representation of the group element $g_{vu} = \linkmap((v,u))$, the link label of the corresponding link.

The first part $\Mmat$ of the transformation matrix $\Wmat$ in \eqref{e:Wfactor} is based solely on the topology of the network. This is a matrix $M_{vu}$ ($\Mmat: \Ncal \to \Ncal$)  which is zero when there is no link from $u$ to $v$.  The simplest example would be the adjacency matrix $\Amat$ of the underlying network $\Gcal$. This type of process on a network is the basis of eigenvalue centrality and related centrality measures. Another example would be a diffusion process where $M_{vu}= A_{vu}/s^\mathrm{(out)}_u$ where $s^\mathrm{(out)}_u = \sum_v A_{vu}$. For instance this is used to construct the PageRank centrality measure.

The second part $\Dmat(\ell)$ of the transformation matrix $\Wmat$ changes with each link $\ell$ in the network. 
Here the link labels $\Dmat(g)$ are $d$-dimensional matrices $\Dmat(g)$ acting on a $d$-dimensional vector space $\Rcal$ (usually a real or complex vector space). The matrices $\Dmat(u,v) \equiv \Dmat(g_{uv})$ represent the link labels, the group elements $g_{uv}=\linkmap((u,v)) \in \Gcal$ for an link $(u,v)$ from node $v$ to node $u$. 
So these matrices $\Dmat(\ell)$ mix the node values within the representation space alone, ($\Dmat(\ell): \Rcal \to \Rcal$).

It is worth noting that this process can be viewed as occurring on a multilayer network. In this case, nodes appear on all $d$-layers with $w_{vi}(t)$ representing the value at node $v$ on layer $i$. However, changing the basis used in the representation space $\Rcal$ will also correspond to choosing new layers in the network.

The central idea is that balance in a group graph means that the dynamics of the process is completely controlled by the topology of the network and the mixing in the representation space, between layers, plays no role. This idea is captured by the following theorem.
\begin{theorem}[Spectral equality for balanced Group Graphs]\label{t:speq}
	A linear dynamical process process $\itervec{t+1}  = \Wmat \, \itervec{t}$ in the node-representation space $\Ncal \times \Rcal$ of a balanced group graph when the factorisation of $W_{vj,ui}= M_{vu} D_{ij}(v,u)$ is present is equivalent to a linear process on the group graph in the identity representation.
\end{theorem}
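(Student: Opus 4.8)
The plan is to prove the theorem by a single explicit change of basis: a block-diagonal similarity transformation on the node--representation space $\Ncal \times \Rcal$ that acts only within the internal space $\Rcal$ at each node and that ``gauges away'' every link-label matrix $\Dmat(v,u)$, leaving behind the purely topological matrix $\Mmat$. Concretely, I would conjugate $\Wmat$ by the $Nd$-by-$Nd$ block-diagonal matrix $\Smat$ whose $v$-th diagonal block is the representation matrix $\Dmat(g_v)$ of the node label $g_v$, and show that the conjugate is $\Mmat \otimes \Imat_d$, i.e.\ the dynamical matrix of the same group graph in the identity representation (where every link label is sent to the identity matrix).

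First I would invoke balance. Choosing node labels $g_v \in \Gcal$ as in \lemref{l:nodelabel} (working on each weakly connected component separately, which is harmless since $\Wmat$ is block-diagonal with respect to components), \corref{c:lldecomp} gives $\linkmap((v,u)) = g_v (g_u)^{-1}$ for every link $(v,u) \in \Lcal$. Passing to the matrix representation $D$ and using that $D$ is a group homomorphism, the link-label matrices factorise, $\Dmat(v,u) = \Dmat(g_v)\,\Dmat(g_u)^{-1}$, so the $(v,u)$ block of $\Wmat$ is $M_{vu}\,\Dmat(g_v)\,\Dmat(g_u)^{-1}$ with $M_{vu}$ a scalar. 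Then I would compute the conjugated matrix block by block: since $(\Smat^{-1})_{vv} = \Dmat(g_v)^{-1}$ and scalars commute through,
\[
(\Smat^{-1}\Wmat\Smat)_{vu} = \Dmat(g_v)^{-1}\,M_{vu}\,\Dmat(g_v)\,\Dmat(g_u)^{-1}\,\Dmat(g_u) = M_{vu}\,\Imat_d ,
\]
the representation matrices telescoping to the $d$-dimensional identity. Hence $\Smat^{-1}\Wmat\Smat = \Mmat \otimes \Imat_d$, which is precisely the process on the same network with every link label replaced by the identity element, acting identically and independently on each of the $d$ internal coordinates. Because $\Smat$ is a genuine similarity transformation (and one confined to $\Rcal$ at each node), $\Wmat$ and $\Mmat\otimes\Imat_d$ have the same spectrum, which is the asserted equivalence and the ``spectral equality'' of the title.

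The step I expect to be the real content --- everything else being bookkeeping with the homomorphism property of $D$ --- is the first one: the existence of a globally consistent assignment of node labels $\{g_v\}$ realising $\linkmap((v,u)) = g_v (g_u)^{-1}$. This is exactly where balance enters and is unavailable for an arbitrary group graph; for a balanced one it is supplied by \lemref{l:nodelabel} together with \corref{c:lldecomp}. I would also record two minor points: the conclusion is insensitive to the $|\Gcal|$-fold ambiguity in the node labels on each component, since replacing $g_v$ by $g_v g$ for a fixed $g$ multiplies $\Smat$ on the right by $\Imat_N \otimes \Dmat(g)$, changing the conjugating matrix but not the identity $\Smat^{-1}\Wmat\Smat = \Mmat \otimes \Imat_d$; and faithfulness of the representation $D$ is nowhere used, only that $D$ is a homomorphism, so the argument covers unfaithful representations as well.
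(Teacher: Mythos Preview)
Your proposal is correct and is essentially the same proof as the paper's: both invoke \lemref{l:nodelabel} and \corref{c:lldecomp} to factor $\Dmat(v,u)=\Dmat(g_v)\Dmat(g_u)^{-1}$, then perform the node-by-node change of basis $w_v \mapsto \Dmat(g_v)^{-1}w_v$ to reduce the evolution to $\Mmat$ alone. You package this as a block-diagonal similarity $\Smat^{-1}\Wmat\Smat=\Mmat\otimes\Imat_d$, while the paper writes the identical computation in explicit index notation, defining new node variables $x_{vk}$ and verifying $x_{vk}(t+1)=\sum_u M_{vu}x_{uk}(t)$; the content and the key step are the same.
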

\begin{proof}
Start from the simple homogeneous dynamical process \eqref{e:cggrwsimple} defined on the nodes in $\Ncal$ and on the $d$-dimensional representation space $\Rcal$. Given the factorisation of \eqref{e:Wfactor} this process can be written as 
\bea
	\itervector_{ui}({t+1})  &=& \sum_{v\in\Ncal} \sum_{j=1}^d M_{uv} D_{ij}(u,v) \itervector_{vj}({t}) \, .
	\label{e:cggrwsimple2}
\eea
Using the decomposition of link labels in terms of the node labels in a balanced group graph from \corref{c:lldecomp}, $\linkmap\big((v,u)\big) = g_v.(g_u)^{-1}$, gives
\bea
  \itervector_{ui}({t+1})  
  &=& 
  \sum_{v\in\Ncal} \sum_{j=1}^d M_{uv} \sum_{k=1}^d [\Dmat(u)]_{ik} [\Dmat^{-1}(v)]_{kj} \itervector_{vj}({t}) 
  \\
  &=& 
  \sum_{v\in\Ncal} \sum_{j=1}^d \sum_{k=1}^d [\Dmat(u)]_{ik} M_{uv}  [\Dmat^{-1}(v)]_{kj} \itervector_{vj}({t}) \, .
  \label{e:cggrwsimple3}
\eea
Left multiply by the inverse of the representation of the label of node $u$, $\Dmat^{-1}(u)$, to find that
\bea
\sum_{k=1}^d  [\Dmat(u)]_{ik} \itervector_{uk}  ({t+1})
&=& 
\sum_{v\in\Ncal} \sum_{j=1}^d M_{uv} \sum_{k=1}^d [\Dmat(u)]_{ik} [\Dmat^{-1}(v)]_{kj} \itervector_{vj}({t}) 
\\
&=& 
\sum_{v\in\Ncal} M_{uv} \sum_{j=1}^d   [\Dmat^{-1}(v)]_{ij} \itervector_{vj}({t}) \, .
\label{e:cggrwsimple4}
\eea
This then gives
\bea
x_{uj}(t+1)
&=& 
\sum_{v\in\Ncal} M_{uv} x_{vj} (t)
\label{e:xdef}
\eea
where new node $x_{ui}$ values at each node $u$ for layer $i$ are defined as
\bea
 x_{ui}(t) &=& \sum_{k=1}^d  [\Dmat(u)]_{ik} \itervector_{uk}({t})  \, .
\label{e:netrwvec}
\eea
\end{proof}
This shows that on a balanced group graph, the simple process on the nodes and in the $d$-dimensional group representation space $\Rcal$ of \eqref{e:cggrwsimple} is completely described by the same process on the network itself with each of the $d$-dimensions of the representation space decoupling and each behaving in the same way, i.e.\ equivalent to using in the trivial representation $\Dmat(g) = \unitmat$. Working in the $\xvec$ coordinates of \eqref{e:cggrwsimple3} means there is no mixing in the $j$ index of the $\Rcal$ representation space, no mixing of values in the different layers. All the properties of this process come from the topology of the network and the additional group structure adds nothing to the evolution on this network.

There is a second way to look at processes on balanced group graphs by exploiting the concept that many group representations are reducible. By using a similarity transformation, a  change of coordinates in the representation space $\Rcal$, the representation matrices $\Dmat(g)$ can be transformed into a block diagonal form. Each block is a copy of one of the irreducible representations of the group and each block represents a subspace of the full representation space, a subspace which is not mixed by the process on the nodes with other parts of the space. 

So consider a similarity transformation, a $d$-by-$d$ matrix $\Smat$, where 
\beq
\Dmat^\prime(g) =
\Smat^{-1} \Dmat(g) \Smat
\label{e:simtrans}
\eeq
In group theory, it is straightforward to show that the set of matrices $\{\Dmat^\prime(g)\}$ is a representation of the same group $G$ if $\{\Dmat(g)\}$ is a representation. Unless I am working with an irreducible representation, I can always find a transformation $\Smat$ which makes sure that all the $\Dmat^\prime$ matrices are block diagonal, with each block a separate irreducible representation.

The generic evolution equation \eqref{e:cggrwsimple2} can then be rewritten as follows
\bea
\sum_i [\Smat^{-1}]_{ki} w_{ui}({t+1})
&=& 
\sum_i [\Smat^{-1}]_{ki} \sum_{v\in\Ncal} \sum_{j=1}^d M_{uv} D_{ij}(u,v) w_{vj}({t}) 
\\
&=& 
\sum_{v\in\Ncal} \sum_{j=1}^d M_{uv} [\Smat^{-1} \Dmat(u,v) \Smat \Smat^{-1}]_{kj} w_{vj}({t}) 
\\
&=& 
\sum_{v\in\Ncal} \sum_{j=1}^d M_{uv} [\Dmat^\prime(u,v) ]_{kj}  \big( \sum_m [\Smat^{-1}]_{jm} w_{vm}({t}) \big)
\label{e:cggirrep1}
\eea
So then
\bea
y_{ui}({t+1})
&=& 
\sum_{v\in\Ncal} \sum_{j=1}^d M_{uv} D^\prime_{ij}(u,v)  y_{vj}({t}) 
\label{e:cggirrep2}
\eea
where the process is now described in terms of new variables $\yvec$ where
\bea
y_{ui}({t})
&=&
\sum_j [\Smat^{-1}]_{ij} w_{ui}({t}) \, .
\label{e:cggirrep3}
\eea
In terms of these $\yvec$ variables, I am now working in blocks linked by the symmetry and which, also by symmetry, cannot mix outside the blocks of the $\Dmat^\prime(g)$ matrices. For a general group with irreducible representations which are not one dimensional (i.e.\ non-abelian groups) such as $S_3$, such blocks are cannot be simpler than the factorisation into $d$ independent processes as given by the node label transformation to $\xvec$ in \eqref{e:xdef}. However, for abelian groups such as my $Z_2$ examples, here the irreducible representations are all one-dimensional and so this can be as informative as the node label transformation to $\xvec$ variables in \eqref{e:xdef}.

\subsection{A wider family of dynamical processes}\label{s:widerdp}

The results on dynamical systems of the form given in \eqref{e:cggrwsimple} and \eqref{e:Wfactor} showed that for balanced group graphs the group transformations had no effect on the evolution of the process which was controlled by the topology of the network alone, as encapsulated in \thref{t:speq}. I also showed how, with a restriction to the group of generalised stochastic matrices, processes based on a network Laplacian can be included in these results.  However these results, especially \thref{t:speq}, hold for a wider range of linear processes.

In many network processes the evolution equation is of the form 
\beq
\itervec{t+\Delta t}  = \itervec{t}  + \mu \Wmat \, \itervec{t}
\label{e:cggrwsimple1a}
\eeq
where $\Delta t$ is a constant setting the time scale for the process. The constant $\mu$ is a simple rescaling of the $\Wmat$ matrix. For instance, so far I have looked at discrete time updates where, without loss of generality, I set $\Delta t =1$. However if I take $\Delta t \to 0$ then the evolution equation may be written as a process in continuous time where $d \itervec{t} /dt   = \mu' \Wmat \, \itervec{t}$ with $\mu' = \mu/\Delta t$ and a finite $\mu'$ (requiring $\mu$ to scale in proportion to $\Delta t$) is now the rate constant for the process.

Sticking with the discrete version in \eqref{e:cggrwsimple1a}, 
one can repeat the previous analysis to see this extra term $\itervec{t}$ on the right-hand side does not change results. Another way to see this is to rewrite \eqref{e:cggrwsimple2} using the form \eqref{e:Wfactor} for $\Wmat$ to give 
\beq
w_{vj}(t+\Delta t) 
    =  \sum_{u \in \Ncal} \sum_{i=1}^d \delta_{vu}D_{ji}(v,v)w_{vj}(t)  
    +  \mu \sum_{u \in \Ncal} \sum_{i=1}^d  M_{vu} D_{ji}(v,u)  w_{ui}(t)
    \label{e:someeqn}
\eeq
That is this extra term when compared to \eqref{e:cggrwsimple}, the first term on the right-hand side of  \eqref{e:someeqn}, can be interpreted as equivalent to adding a self-loop of weight one and link label the identity $e$ to every node in the original balanced group graph $\Ggraph$. As self-loops must be labelled by the identity \lemref{l:cswbggid} in a balanced group graph, I am merely constructing a new balanced group graph $\Ggraph'$ where I have added these self-loops to the original directed graph $\Dcal$ and the evolution equation is again in the original form \eqref{e:cggrwsimple} with \eqref{e:Wfactor} where the matrix $\Mmat$ is replaced by $\Mmat^\prime$ given by
\beq
\Mmat^\prime =  \unitmat + \mu \Mmat \, .
\label{e:Wfactor2}
\eeq 

I can take this idea further. The \tsedef{centre} of a group $\Zcal(\Gcal)$ is the set of group elements which commute with all other group elements, so  $\Zcal(\Gcal) = \{ z | zg=gz \forall g \in G\}$. So consider an $N.d$-by-$N.d$ matrix $\Zmat = \unitmat \times \Dmat(z)$ with $Z_{vj,ui} = \delta_{uv} D_{ji}(z)$ where $\Dmat(z)$ represents any element $z$ from the centre of the group, $z \in \Zcal(\Gcal)$. If $\Wmat$ is of the form \eqref{e:Wfactor} then it follows that for any constant $\mu$
\beq
\itervec{t+1}  = \left( \sum_{z \in \Zcal} \mu_z \Zmat(z) \right) \itervec{t} + \Wmat \, \itervec{t}
\label{e:wevolz}
\quad
\Rightarrow
\quad
\xvec(t+1)  = \left( \sum_{z \in \Zcal} \mu_z \Dmat(z) \right)  \xvec(t) + \Mmat \, \xvec(t)
\eeq
where $\xvec$ is defined as before in \eqref{e:netrwvec}. This includes the common case $\itervec{t+1}  = \mu \itervec{t} + \Wmat \, \itervec{t}$ where $\Zmat$ is the unit matrix $Z_{vj,ui} = \delta_{uv} D_{ji}(e)$ since the identity is always an element of the centre of any group.

It is also straightforward to move from discrete time update \eqref{e:wevolz} to continuous time evolution as follows
\bea
\frac{\itervec{t+\Delta t} - \itervec{t} }{\Delta t }    
&=& 
\left( \sum_{z \in \Zcal} \frac{\mu_z - \delta_{ze}}{\Delta t } \Zmat(z) \right) \itervec{t} 
+ \Wmat \, \itervec{t}
\\
\Rightarrow \quad
\frac{d \xvec(t) }{d t }    
&=& 
\left( \sum_{z \in \Zcal} \mu^\prime_z \Dmat(z) \right)  \xvec(t) + \Mmat \, \xvec(t)
\eea
with some rescaling of the parameters from $\mu_z$ to $\mu^\prime_z$ as the limit $\Delta t \to 0$ is taken.

One of the simplest examples of a finite group with a non-trivial centres is the group $D_4$, the symmetries of a square in a plane which is generated by three elements $e,a,b$ and where the group can be generated by knowing that $a^4=e$, $b_2=e$ and $a^nb = b a^4-n$ for $n=1,2,3$. The centre in this case is $\Zcal = \{e,a^2\}$. If I consider a two-dimensional irreducible representation, the coordinate transformations in the plane where $a$ is a rotation by $90^\circ$ and $b$ is a reflection. In such a representation the coordinate transformations may be written as the two-dimensional identity matrix for $\Dmat(e)$ and minus the same unit matrix for $a^2$, $\Dmat(a^2) = - \unitmat$. In this case the centre elements do not add any interesting behaviour to the dynamics as the generalised form of \eqref{e:wevolz} is just adding a term $(\mu_e - \mu_{a^2})\unitmat$ to $\Mmat$ which is just a variation of the form \eqref{e:Wfactor2}.

\subsection{The network Laplacian and group graphs}\label{s:lap}

So far I have discussed network dynamics of the form $\itervec{t+1}  = \Wmat \, \itervec{t}$ given in \eqref{e:cggrwsimple} where $\Wmat$ is of the form $W_{vj,ui}= M_{vu} D_{ij}(v,u)$. 
However, the part based on network topology alone, the $\Mmat$ matrix, does not include diffusion on the group graph when it is described by the network Laplacian $\Lmat$ where $L_{vu} = \delta_{vu} \kout_v - A_{vu}$ for adjacency matrix $\Amat$ and out-degree (out-strength if weighted) $\kout_u = \sum_v A_{vu}$.  
This is studied by \citet{TKSL24} but it can be recast in the language of group graphs. 
Such processes would be written as 
\begin{align}
  w_{vj}(t+1)  
  =
  \quad & 
  w_{vj}(t+1) 
  \nnel
  &
   + \mu \left( \sum_{u \in \Ncal} \sum_{i=1}^d A_{vu} D_{ji}(v,u) w_{ui}(t) \right)
   - \mu \left( \sum_{u \in \Ncal} \sum_{i=1}^d A_{uv} D_{ij}(u,v) w_{vj}(t) \right)
     .
\label{e:gglap}
\end{align}
Here $A_{vu}$ is the adjacency matrix of the underlying network $\Dcal$, so $\Amat$ is assumed to be non-negative but not necessarily symmetric. The interpretation is that the second term on the right, multiplied by $+\mu$, represents value arriving at node $v$ from neighbouring nodes $u$. This term falls into the form \eqref{e:cggrwsimple} discussed earlier and causes no problems. It is the last term on the right of \eqref{e:gglap} which causes the problems. This last term represents value leaving node $v$ and it has the same form as second term but with indices transposed. So the flow down each edge appears twice in exactly the same form, once representing a decrease in value at the source node of an edge and a second time representing an increase in value of the target node of that edge.

Now there is a problem with \thref{t:speq} and the transformation 
\eqref{e:netrwvec} 
\bea
x_{vk}(t) &=& \sum_{j=1}^d  [\Dmat(v)]_{kj} w_{vj}(t)  \, .
\label{e:netrwvec2}
\eea
to variables $\xvec$ which removes the effect of the network topology on the dynamics.  Assuming our group graph is balanced, I left multiply \eqref{e:gglap} by the matrix $(\Dmat(v))^{-1}$ representing the node label of $v$ to find
\bea
x_{vk}(t+1)
&=&
\sum_{j=1}^d  [(\Dmat(v))^{-1})_{kj} w_{vj}(t+1)
\\
&=& 
  x_{vk}(t) 
+ \mu  \sum_{u \in \Ncal} A_{vu} )  x_{uk}(t) 
\nnel
&&
- \mu \left( \sum_{u \in \Ncal} A_{uv} 
				\sum_{i=1}^d \sum_{j=1}^d   
               [ \Dmat(u)(\Dmat(v))^{-1} ]_{ij} 
               [(\Dmat(v))^{-1}         ]_{kj}
               w_{vj}(t) \right)
               \, .
\label{e:gglap2}
\eea
The algebraic problem in the last term is that the expression is not a simple product of matrices and a vector as illustrated by the index $j$ appearing three times. Also the inverse matrix representing the node label $v$ appears twice. I can not use the properties of the representation of the inverse of a group and the structure of link and node labels in a balanced graph to simplify this last term. 

So the conclusion is that this generalisation \eqref{e:gglap} to a group graph of a dynamical process described by the network Laplacian does not satisfy the criteria of \thref{t:speq}, since \eqref{e:gglap} is not of the right form, nor the original theorem be extended to this Laplacian form by using the same algebra.

However, all is not lost. In \eqref{e:gglap2}, there is a big simplification if the column sums of all the link label matrices $\Dmat(\ell)$ in a group graph are constant. I will assume this column sum is 
always one to simplify the discussion.  What is required is that the matrices on links which appear in the group graph are all \tsedef{generalised stochastic matrices}. That is they are $d$-by-$d$ real matrices whose columns sum to one and which have an inverse but whose entries are \emph{not} restricted to lie between zero and one inclusive. In \appref{as:stochmat}, I show that these generalised stochastic matrices form a group.
So if $(\Dmat(u,v) = (\Dmat(u))^{-1}\Dmat(v)$ has column sums equal to one, then 
\bea
x_{vk}(t+1)
&=&
\sum_{j=1}^d  [(\Dmat(v))^{-1})_{kj} w_{vj}(t+1)
\\
&=& 
x_{vk}(t) 
+ \mu  \sum_{u \in \Ncal} A_{vu} )  x_{uk}(t) 
- \mu \left( \sum_{u \in \Ncal} A_{uv} 
           \sum_{j=1}^d   [(\Dmat(v))^{-1}         ]_{kj} w_{vj}(t) 
     \right)
\\
&=& 
x_{vk}(t) 
+ \mu  \sum_{u \in \Ncal} A_{vu} )  x_{uk}(t) 
- \mu \left( \sum_{u \in \Ncal} A_{uv}  x_{vk}(t) 
\right)
\eea
which gives
\bea
x_{vk}(t+1)
&=& 
x_{vk}(t) 
- \mu  \sum_{u \in \Ncal} \left(\kout_v \delta_{uv}  - A_{vu} \right)  x_{uk}(t) \, .
\label{e:gglap3}
\eea
That is, if the link labels are invertible generalised stochastic matrices (so their columns sum to one) then the dynamics of a balanced group graph reduce to those without any group structure but which is controlled simply by the network Laplacian and the network topology. So the simple result of \thref{t:speq} has been generalised to processes described by a network Laplacian only if the matrices used to represent the group elements labelling the links of the group graph come from the group of invertible generalised stochastic matrices (see \appref{as:stochmat} for definition of this group). 

Note that most problems would require `physical' stochastic matrices, that is one where the columns sum to one \emph{and} all entries lie between zero and one inclusive, but where invertibility is not required. In general, a set of such stochastic matrices does not satisfy the criteria for a group.  The main problem is that the inverse of a typical stochastic matrix may not exist or, if it does, then it is not a stochastic matrix. For example, consider the set $\Scal(2)$ of two-dimensional stochastic matrices $\Smat(a,b)$ which can be used to define physical diffusion processes. That is if acting on a vector $\pvec$ where $0 \leq p_i \leq 1$ and $\sum_i p_i =1$ then $\pvec' = \Smat(a,b) \pvec$ satisfies the same criteria.
\beq
\Scal(2) = \{ \Smat(a,b) | 0 \leq a,b \leq 1 \} \, , \quad
\Smat(a,b) =
 \begin{pmatrix}
	    a & (1-b) \\
	(1-a) & b
 \end{pmatrix}
 \, .
\eeq
These matrices form a representation of a \tsedef{monoid} not a group as they satisfy all the axioms of a group except for the existence of an inverse in $\Scal(2)$ for every element. 
For instance closure is expressed by $\Smat(a_1,b_1)\Smat(a_2,b_2)=\Smat(a,b)$ with $a=a_1a_2+(1-b_1)(1-a_2)$ and $b=b_1b_2+(1-a_1)(1-b_2)$ (a proof of closure is in \appref{as:stochmat}).
The inverse matrix of an element in $\Scal$ may not exist or, even if it does, the inverse matrix may not be in the set $\Scal$ i.e.\ it is not a `physical' stochastic matrix since
\beq
\big( \Smat(a,b) \big)^{-1} =
\frac{1}{1-a-b}
  \begin{pmatrix}
	   -b & (1-b) \\
	(1-a) & -a
 \end{pmatrix}
 \, .
\eeq
The column sums are one so this satisfies that part of the requirement to be a stochastic matrix. However, 
the inverse does not exist at all for any elements where $1=a+b$. If $a+b<1$ and $0 \leq a,b \leq 1$ then the inverse matrix exists but it has negative values on the diagonal and these are not in the set of physical two-dimensional stochastic matrices $\Scal(2)$.

The trick here is to realise that I need the inverse matrix representing every link label to exist as it is used in some of the proofs about balanced group graphs\footnote{Alternatively, one might consider thinking about consistent group graphs of \appref{as:cnsstgg} not balanced group graphs.}. However, these inverse elements need not be a link label on any of the links in a balanced group graph, so they do not appear in the expressions for the dynamical process. A group graph need not have every element of the group appearing on a link in the group graph. So one way to proceed is to work with the group defined under matrix multiplication by the set of $d$-by-$d$ invertible generalised stochastic matrices matrices $\Scalhat(d)$ where matrix entries are real, every matrix column sums to one and all matrices are invertible. This is shown to be a group in \appref{as:stochmat}.

So to summarise, dynamical processes on a group graph when described by a network Laplacian do not satisfy our earlier theorem unless the link label matrices $\Dmat(\ell)$ come from the group of invertible generalised stochastic matrices  $\Scalhat(d)$ defined in \appref{as:stochmat}. For physical purposes, it seems likely that link labels will come from a subset of invertible stochastic matrices (those with entries between zero and one) but the group graph must be analysed in terms of the full group of invertible generalised stochastic matrices.

\subsection{An example of dynamics on a balanced signed network}

To illustrate the connection between signed graphs and $Z_2$ group graphs I will look at an example based on \figref{f:bggex} but using a representation that makes contact with the work of \citet{TL24} on balanced signed networks.

The group $Z_2$ is an abelian group of two elements: the identity $e$ and $a$ where $a^2=e$. The only faithful irreducible representation is $D(e)=+1$ and $D(a)=-1$. So a $Z_2$ group graph is equivalent to a \tsedef{signed network} \citep{H53,CH56,HNC65,WF94,DM09,ST10,SLT10,D17b,KCN19,C21,TL24a,TL24}. 
However, in \citet{TL24} a diffusion process on two layers is used so here I will work in a two-dimensional representation space $\Rcal$. 

So the process will be defined in terms of the values $w_{vi}(t)$ at node $v$ on layer $i$ at time $t$ where $\itervec{t+1}  = \Wmat \, \itervec{t}$ as defined in \eqref{e:cggrwsimple} and \eqref{e:Wfactor}. The topology of the connections between nodes (regardless of layers) is given in \figref{f:bggex}. I will look at a diffusion process on this network where the transition matrix for the process is $T_{vu} = A_{vu} / k^\mathrm{(out)}_u$ with adjacency matrix $\Mmat$ and the out-degree of a node $v$ given by $k^\mathrm{(out)}_v=\sum_u A_{vu}$. For the $Z_2$ network shown in \figref{f:bggex} the matrices are
\beq
\Amat =
\begin{pmatrix}
	0 & 0 & 0 & 0 & 0 & 1 \\
	1 & 0 & 1 & 0 & 0 & 0 \\
	0 & 1 & 0 & 0 & 1 & 0 \\
	0 & 0 & 1 & 0 & 0 & 0 \\
	0 & 0 & 0 & 1 & 0 & 0 \\
	0 & 1 & 0 & 0 & 1 & 0 \\
\end{pmatrix}
\, , \quad
\Tmat =
\begin{pmatrix}
	0 & 0   & 0   & 0 & 0   & 1 \\
	1 & 0   & 1/2 & 0 & 0   & 0 \\
	0 & 1/2 & 0   & 0 & 1/2 & 0 \\
	0 & 0   & 1/2 & 0 & 0   & 0 \\
	0 & 0   & 0   & 1 & 0   & 0 \\
	0 & 1/2 & 0   & 0 & 1/2 & 0 \\
\end{pmatrix}
\, .
\label{e:z2extopology}
\eeq

Values can move between the two layers in one of two ways, depending on the label of the link \figref{f:bggex}. For positive links (labelled by $e$), values will remain on the same layer as they pass along a link. For negative links (labelled by $a$), values will switch layers. This is a two-dimensional representation of a balanced group graph where
\beq
\Dmat(e) =
\begin{pmatrix}
	1 & 0 \\
	0 & 1
\end{pmatrix}
\, ,
\quad
\Dmat(a) =
\begin{pmatrix}
	0 & 1 \\
	1 & 0
\end{pmatrix}
\, .
\label{e:z2rep2d}
\eeq

For the example in \figref{f:bggex} in the representation given here, the value at time $t$ at node $2$ on the first layer, say $A$ ($i=1$) is divided equally between nodes $3$ and $6$ for time $(t+1)$ but it is passed to the value of node $3$ on the second layer, say $B$ ($i=2$), since link $(3,2)$ is negative while the value from node $2$ on layer $A$ contributes to the value of node $6$ on the same layer $A$. 


Using the node labels from \figref{f:bggexVL} and using the transformation from $\itervector$ node values to $\xvec$ node values in \eqref{e:netrwvec}, the new node variables are
\begin{align}
	x_{1\alpha} (t+1) &= w_{1A}(t)\, , \; &
	x_{2\alpha} (t+1) &= w_{2A}(t)\, , \; &
	x_{3\alpha} (t+1) &= w_{3B}(t) \, , \;&
	\nnel
	x_{4\alpha} (t+1) &= w_{4B}(t) \, , \; &
	x_{5\alpha} (t+1) &= w_{5B}(t) \, , \; &
	x_{6\alpha} (t+1) &= w_{6A}(t) \, , \; &
	\label{e:z2xalpha}
\end{align}
on a new layer labelled $\alpha$ and the remaining variables lie on a second new layer labelled by $\beta$ where
\begin{align}
	x_{1\beta} (t+1) &= w_{1B}(t)\, , \; &
	x_{2\beta} (t+1) &= w_{2B}(t)\, , \; &
	x_{3\beta} (t+1) &= w_{3A}(t) \, , \; &
	\nnel
	x_{4\beta} (t+1) &= w_{4A}(t) \, , \; &
	x_{5\beta} (t+1) &= w_{5A}(t) \, , \; &	
	x_{6\beta} (t+1) &= w_{6A}(t) \, . \; &
	\label{e:z2xbeta}
\end{align}

Note that new labels $\alpha$ and $\beta$ are needed to indicate the new layers in terms of these $\xvec$ variables.
As \eqref{e:netrwvec} shows, the diffusion process on the balanced $Z_2$ group graph shown in \figref{f:bggex} in the representation defined in \eqref{e:z2rep2d}, occurs independently on two new layers, $\alpha$ and $\beta$ and in both cases is completely controlled by the topology of the underlying graph described in \eqref{e:z2extopology}. 
Of course, this factorisation into new layers based on node label transformations of \eqref{e:netrwvec} is not so trivial with larger groups or networks.

However, I can also use this example to show how to use a similarity transformation \eqref{e:simtrans} to rewrite everything in terms of the irreducible representations, in terms of blocks which are not mixed as demanded by the symmetry of the problem. In this case, a one useful similarity transformation is 
\beq
\Smat = \frac{1}{\sqrt{2}} 
\begin{pmatrix}
	1 & 1 \\
	1 & -1
\end{pmatrix}
\eeq
since
\beq
\Dmat^\prime(e) =
\Smat^{-1} \Dmat(e) \Smat^{-1} 
=
\begin{pmatrix}
	1 & 0 \\
	0 & 1
\end{pmatrix}
\, ,
\quad
\Dmat^\prime(a) =
\Smat^{-1} \Dmat(a) \Smat^{-1} 
=
\Dmat(a) =
\begin{pmatrix}
	1 & 0 \\
	0 & -1
\end{pmatrix}
\, .
\eeq
This shows that the two-dimensional representation \eqref{e:z2rep2d} is equivalent to two one-dimensional representations: the $D^\prime_{11}(g)$ entries are the identity representation and the $D^\prime_{22}(g)$ entries are the parity irreducible representation. For general groups there are powerful tools which allow me to find these block diagonal forms but it is relatively easy to guess the answer for this simple $Z_2$ case. 
This similarity transformation is leaves the process defined in terms of a different set of variables $y_{ui}(t)$ as defined in \eqref{e:cggirrep3} with variables defined in terms of two new layers different from those used with the $x_{ui}(t)$ variables of \eqref{e:z2xalpha} and \eqref{e:z2xbeta}. The new variables on the first new layer $\alpha$ are simply the sum of the values on the original two layers $A$ and $B$, that is 
\beq
y_{i\alpha} (t+1) = \frac{1}{\sqrt{2}} \big( w_{iA}(t) + w_{iA}(t) \big) \, .
\eeq
The values on the second layer $\beta$ are simply the difference of the values on the two original layers but with an additional minus sign for the nodes labelled by $a$. The node labelling is not unique so I will use the node labels shown for in \figref{f:bggexVL} for the balanced $Z_2$ group graph. This gives the values on the new second layer $\beta$ as
\begin{align}
	y_{1\beta} (t) &= \frac{1}{\sqrt{2}} \big( + w_{1A}(t) - w_{2B}(t) \big) \, , \; &
	y_{2\beta} (t) &= \frac{1}{\sqrt{2}} \big( + w_{2A}(t) - w_{2B}(t) \big) \, , \; &
	\nnel
	y_{3\beta} (t) &= \frac{1}{\sqrt{2}} \big( - w_{3A}(t) + w_{3B}(t) \big) \, , \; &
	y_{4\beta} (t) &= \frac{1}{\sqrt{2}} \big( - w_{4A}(t) + w_{4B}(t) \big) \, , \; &
	\nnel
	y_{5\beta} (t) &= \frac{1}{\sqrt{2}} \big( - w_{5A}(t) + w_{5B}(t) \big) \, , \; &
	y_{6\beta} (t) &= \frac{1}{\sqrt{2}} \big( + w_{6A}(t) - w_{6B}(t) \big) \, . \; &
\end{align}
Again, the dynamics of these two $y$ variables for each node, one for layer $\alpha$ and one for layer $\beta$, is completely independent of the other set of $y$ variables and the evolution is completely controlled by the topology of the network. This model and the transformations to $y$ variables are exactly as discussed in \citet{TL24} for a balanced signed network.

Finally, note that the example posed in \citet{TL24} uses a diffusion process based on the Laplacian matrix not on the PageRank matrix as used here in \eqref{e:z2xbeta}. However, the matrices used as link labels \eqref{e:z2rep2d} are not general stochastic matrices but are a special subset of unitary invertible stochastic matrices which form a group in their own right. The fact the columns of the link label matrices sum to one and they are invertible mean that the results of \secref{s:lap} apply and the analysis shown here will still work for diffusion described by a Laplacian.

\section{Discussion}\label{s:discussion}

\subsubsection*{The generalisation of signed networks and balance}

In this paper I have generalised the concept of a signed network to what I call (following \citet{HLZ82}) a \tsedef{group graph}, a structure which combines a network to describe a topology that has no symmetry, along with a symmetry group that describes features in the processes occurring on the nodes and links of that network. In the language of this paper, a signed network is a $Z_2$ group graph.

The definition of a group graph given here is very similar to, but not exactly the same as, the existing concepts of 
a ``current graph'' \citep{GA73}, 
a ``voltage graph'' \citep{G74,Z82}, 
``group labelling'' of \citet{ES79}, 
the ``group graph'' of \citet{HLZ82}, 
a ``gain graph'' \citep{Z89}, 
a ``G-gain graph'' \citep{CDD21}, 
and the ``group synchronisation'' problem \citep{KEES03,GK06a,S11e,AF19}. 
The main difference is that the definition used in this paper has no restrictions on the labels of the links of a group graph including no restrictions on the directions, reciprocity or undirected nature of links. 
Typically, the literature works with a graph which is undirected but it behaves as if it was what I call a symmetric directed graph (maximally reciprocated) where reciprocated edges are always balanced, that is they carry link labels which are inverses of each other. 
This can be seen in many places such as in condition C1 on page 943 of \citet{GA73} on voltage graphs, in equation (2) of \cite{AF19} for group labelled graph, and in the first paragraph of section 5 in \citet{Z89} for gain graphs\footnote{At the time of writing, this can also be seen in informal summaries such as 
the \href{https://en.wikipedia.org/wiki/Voltage_graph}{voltage graph article on Wikipedia}, 
the \href{https://mathworld.wolfram.com/VoltageGraph.html}{voltage graph article on Mathworld},
and the \href{https://en.wikipedia.org/wiki/Gain_graph}{Wikipedia article on gain graphs}.}.

The language and approach used here and in the earlier work can be quite different. In particular, the use of adjacency matrices taking values from an algebra defined over a group in \citet{CDD21}, rather than the usual real or complex numbers, is a different but striking approach to the generalisation of networks with binary values (simple networks), real non-negative values (weighted networks), negative and positive reals (signed graphs), complex values and so forth.
This approach to adjacency matrices is summarised in \appref{as:adjmat} though this representation is not suitable for networks with multiedges.

Some of the restrictions seen in earlier work only require minor adjustments to extend them to the general case considered here (e.g.\ allowing for self-loops) but some of these extensions, for instance for networks with directed edge or multiedges, are important in real applications. 
The most important generalisation contained here is that I want to allow for cases where there are reciprocated but unbalanced edges. 
For example, there is no reason why two people connected in a social network should have the same opinion of each other, i.e.\ reciprocated edges need not be balanced. 
A node may estimate the time difference between its clock and that of a neighbouring node. However that estimate will not agree with the neighbour's estimate of the time difference unless the two neighbours engage in further exchanges to agree on a single consistent estimate of the time difference in their clocks.
If the focus is on balance then  balanced reciprocated edges is a necessary restriction for any balanced group graph by  \corref{c:reciplabel}. So one place where this work differs from earlier work is to allow for \corref{c:reciplabel} to be false, a case of unbalanced group graphs not covered in the literature.

If one focuses on balanced group graphs then most of the differences between the work here and earlier work are relatively minor or become irrelevant.  That is for a balanced group graph reciprocated edges must carry inverse group elements of each other as expressed in \corref{c:reciplabel}. Most of these earlier works  discuss balance \citep{G74,HLZ82,Z82,Z89,CDD21}  (all using this same term) so all of the definitions and results given here regarding balance can be found in a very similar form if very different language somewhere in these earlier works, though not all lemmas are found in every previous work.

In terms of processes on a network, the earlier work has not made the connection between balanced graphs and processes on graphs.
Here I have shown explicitly how to generalise the results of \citet{TL24} for signed undirected networks (i.e.\ for a $Z_2$ group graph) and \citet{TL24a} for undirected networks with complex weights (i.e.\ a $U(1)$ group graph) to a context with arbitrary groups on a group graph. Namely, that for balanced group graphs then for many typical processes on a group graph the group structure does not effect the dynamics and it is purely the topological structure of the graph that controls the dynamics. I showed that these results hold for an arbitrary linear process described by $\itervec{t+1}= \Wmat \, \itervec{t}$ of \eqref{e:cggrwsimple} provided I could factorise the transition matrix $\Wmat$ into the form where $W_{vj,ui} = M_{vu}D_{ji}(g_{vu})$ of \eqref{e:Wfactor}. Here the matrix $\Mmat$ is an $N$-by-$N$ matrix ($N$ is the number of nodes) and $M_{vu}$ is zero when there is no link from $u$ to $v$ so $\Mmat$ is only controlled by the topology of the network. 
Examples of $\Mmat$ include the adjacency matrix and the PageRank matrix.
I also found there is a way to use this formalism for a generalisation of a diffusion processes represented by the network Laplacian, as seen in \citep{TKSL24}, but then it may be more useful to think of the link labels as a representation of a monoid or semigroup rather than a group.

The results here regarding the trivial nature of processes on balanced networks, i.e.\ we can transform into coordinates where the link labels are all the identity element, are related to existing results. A balanced graph is `switching equivalent' (see \defref{d:switching}) to a group graph with identity labels on every link, something discussed for signed networks in \citet{AR68} and \citet{Z82}, and for gain graphs in \citet{Z89}. 
In the same way, lemma 5.1 of \citet{CDD21} states that, at least for a ``finite dimensional faithful unitary representation'', the spectrum the G-gain graph consists of copies of the spectrum of underlying graph.

I have shown in \secref{s:udgg} that this work can be applied to undirected networks but then enforcing balance becomes a stronger constraint. When the group is a product group of the type $(Z_2)^n$, i.e.\ where there are $n$ signs for $n$ different binary relations associated with every link, then all the work on balanced group graphs applies without change. Should the undirected group graph have labels from another group, then only labels from a subset of elements where $g^2=e$ can be used as link labels. 

Finally, working with some processes on networks also poses problems for the imposition of balance on a group graph. In particular, if the process is diffusion described by the Laplacian matrix, then \secref{s:lap} shows that the results for balanced group graphs only apply if the link labels are restricted to a set of invertible stochastic matrices so that the group is the group of generalised invertible stochastic matrices (see \appref{as:stochmat} for details on these matrices).

\subsubsection*{Non-unitary group representations}

The examples so far have worked with unitary representations of groups where $\Dmat (g) \Dmat^\dagger (g) = \unitmat$ for any group element $g \in \Gcal$. Indeed, all representations of finite groups are unitary representations.  However, in this work I do not use the unitary property and the results hold for non-unitary representations.

The simplest example of a group graph using a non-unitary representation is the group of positive real numbers $\{ \exp(\alpha ) \, | \, \alpha \in \Rbb\}$ under multiplication, a representation of the non-compact $U(1)$ group. 
The link labels $\Dmat(\Lambda(\ell)) = \exp(\alpha_\ell)$ now represent an amplification or suppression of the values flowing along the network in a process described in \eqref{e:cggrwsimple}. However, my results still apply, that is for a balanced group graph I need to have the label of the path of any closed semiwalk equal to the identity, at is $\Lambda(C) =1$ for all closed semiwalks $C$. So in this non-unitary case, balance means that any closed semiwalk must balance the amplification and damping factors as you go round a cycle. For instance, the link labels on reciprocal links must be $\exp(\alpha)$ and $\exp(-\alpha)$ for some $\alpha \in \Rbb$. 

Alternatively, we can work with the exponents $\alpha_\ell\in \Rbb$ as the labels, the representation $\Dmat(\ell)$ of the group elements $\Lambda(\ell)$, on each link.  Then the group `multiplication' is simply addition of real numbers. This is the representation we might use if we were looking at synchronisation of clocks at each node but all each node has is the information on the time difference $\alpha_\ell$ with neighbours in an undirected network. Balance is when each node can set a single time that is consistent with all the time differences with its neighbours. This was the original motivation for the ``group synchronisation'' approach to group graphs \citep{KEES03,GK06a,S11e}. 

\subsubsection*{Weighted signed graphs as voltage and as status}

The example of link labels drawn from the group of real numbers under addition, $\Gcal = (\Rbb,+)$, also gives us the connection to the example that explains why these group graphs are widely known as voltage graphs in the literature \citep{G74,Z82}.  
In this interpretation, the node labels $g_v$ are the voltages of a node in electrical circuit. 
The label $g_{vu}$  of a directed link $(v,u)$ from node $u$ to node $v$ is the difference in the voltages $g_v$ ($g_u$) of node $v$ ($u$) so $g_{vu} = g_v \times (g_u)^{-1} = g(v)-g(u)$.
For any walk, the label of the walk is simply equal to the sum of the voltage differences of links in the walk. 

When the currents and voltages of an electrical circuit are constant in time, then Kirchoff's laws apply. 
This includes the fact that the sum of voltage differences round any closed loop in the circuit has to be zero, i.e.\ the label of any closed semiwalk is the identity. So only a balanced group graph with labels drawn from the group of real numbers under addition obeys Kirchoff's laws and so only such balanced group graphs can represent a physical electrical circuit. 

An unbalanced voltage graph does not correspond to any static real physical situation. A circuit with voltages described by a set of voltage differences corresponding to an unbalanced group graph would not be stable and static.

However, there is a second interpretation of a group graph using the group of real numbers under addition, $\Gcal = (\Rbb,+)$ where unbalanced group graphs can exist and the underlying directed graph is not necessarily symmetrised. This is when we interpret the group labels in terms of status of individuals in social networks \citep{LHK10}. In this interpretation, a directed link $(v,u)$ from node $u$ to node $v$ with weight $g_{vu}$ is the difference in status, so $g_{vu} = g(v)\times(g(u))^{-1} = g(v)-g(u)$. Here a positive status difference $g_{vu}$ is the number of promotions needed by $u$ if they are to reach the same level as $v$ in the organisation. A negative status value would indicate that $u$ is above $v$ in the formal hierarchy.

If these individual levels, the node labels $g_v$, are defined by the formal hierarchy in the organisation, then this would be a balanced group graph, for example see \figref{f:status}. 
However, perhaps a social scientist has been interviewing members of an organisation and has asked how $u$ perceives the status of colleges they work with. 
Then it would be possible for $v$ to be aware of $u$ and so there can be a link between the two nodes even if $u$ was not under the direct or indirect control of $v$. 
In any case, the answers on the survey are unlikely to be perfectly aligned with the formal hierarchical structure and so the group graph of status need not be balanced as illustrated in \figref{f:status}.

Of particular note is that there is an important difference between the lack of an edge and an edge with value zero. In standard network analysis, a zero in the adjacency, an edge weight of value zero, is equivalent to having no edge. Here however, zero is the identity element of the group $e$ and indicates a significant social relationship but one where where there is no difference in the status of the individuals, quite different from a case where there is no working relationship, e.g.\ see \figref{f:status}. This ability to distinguish between no edge and an edge with zero weight is another key feature of group graphs that is missing from traditional network theory (see also \appref{as:adjmat}). 

\begin{figure}[htb]
	\begin{center}
			\includegraphics[width=0.8\textwidth]{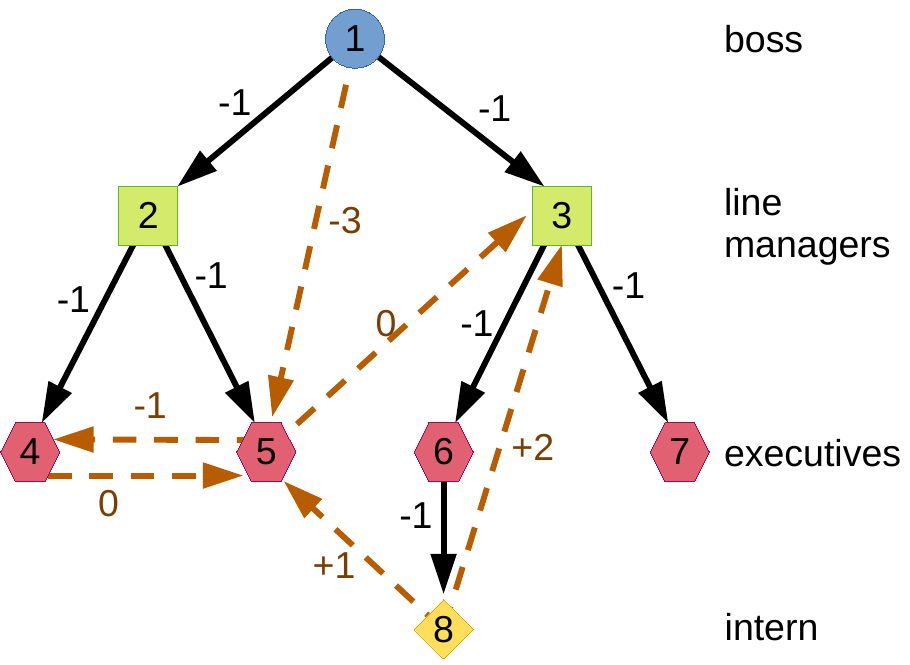}
		\end{center}
	\caption{An example of a weighted signed network representing status and perceived status in an organisation. The nodes are people in an organisation. The values on each link give the difference in status between two individuals. The solid black arrows show the formal hierarchy where links are from each supervisor to the people reporting to them.  These values are always $-1$ to indicate this is stepping down one level in the formal hierarchy. The dark orange dashed lines show perceived status differences as reported in a survey. This is a type of group graph where the link labels are elements of the abelian group defined by integers $\alpha_\ell$, the weights of each link $\ell$. The group `multiplication' rule is the addition operation on integers. Taken alone, the formal hierarchy shown with black arrows is a perfectly balanced group graph. However, if we combine that with the dark orange dashed links reported by the survey, we now have a complicated unbalanced group graph with cycles and semicycles. Note how an edge with zero weight between two individuals has some meaning which is different from the lack of an edge. A zero means there is a working relationship but it is between two people where one perceives the other to be of equal status such as how node 5 perceives node 3. Of course, the opinion of one person on another about relative status is not always reciprocated as the nodes $4$ and $5$ show.  }
	\label{f:status}
\end{figure}

\subsubsection*{Different types of negative edges}

Another use for the group graph construction is that it provides a more nuanced view of negative edges. Positive relationships in this view are always mapped to the identity. However, other elements of a group allow me to have different types of negative edge. For instance suppose I have $n$ competing blocks in a social network, friendship cliques in a school class, power blocks within a government. Suppose each of these blocks $\Bcal_m \subset \Ncal$, where $m \in \{0,1,\ldots,(n-1)\}$, is a block of a partition of the nodes in our social network. I assign a group element $a^m$ as the node label for nodes in the $m$-th block $\Bcal_m$. If I choose $a^n=e$ then our group is $Z_n$ where the elements of the group are $\{a^m | m \in \{0,1,\ldots,(n-1)\} \}$. The meaning of a link with link label $a^m$ is that it is from a node in block $p$ to a node in block $q=|m+p|_n$ so the extra information in the link labels of the negative links is that it records a more sophisticated type of inter-block relationship than allowed by the simple positive/negative labels used in signed networks. After all, the negative relationship between rival blocks may not always of the same type so why label them all as the same negative edge? An example is shown in \figref{f:weakbalance}. 
One faithful matrix representation of $Z_n$ is where $D(a^m) = \exp(2\pi i m/n)$ so I could also regard this construction in terms of opposing blocks of nodes as an example of a network with complex edge weights. 

In this case, a lack of balance could occur when people are assigning neighbours to blocks, as indicated by the link labels, in a manner that is not consistent with the block assignments made by others.

\begin{figure}[htb]
	\begin{center}
		\includegraphics[width=0.7\textwidth]{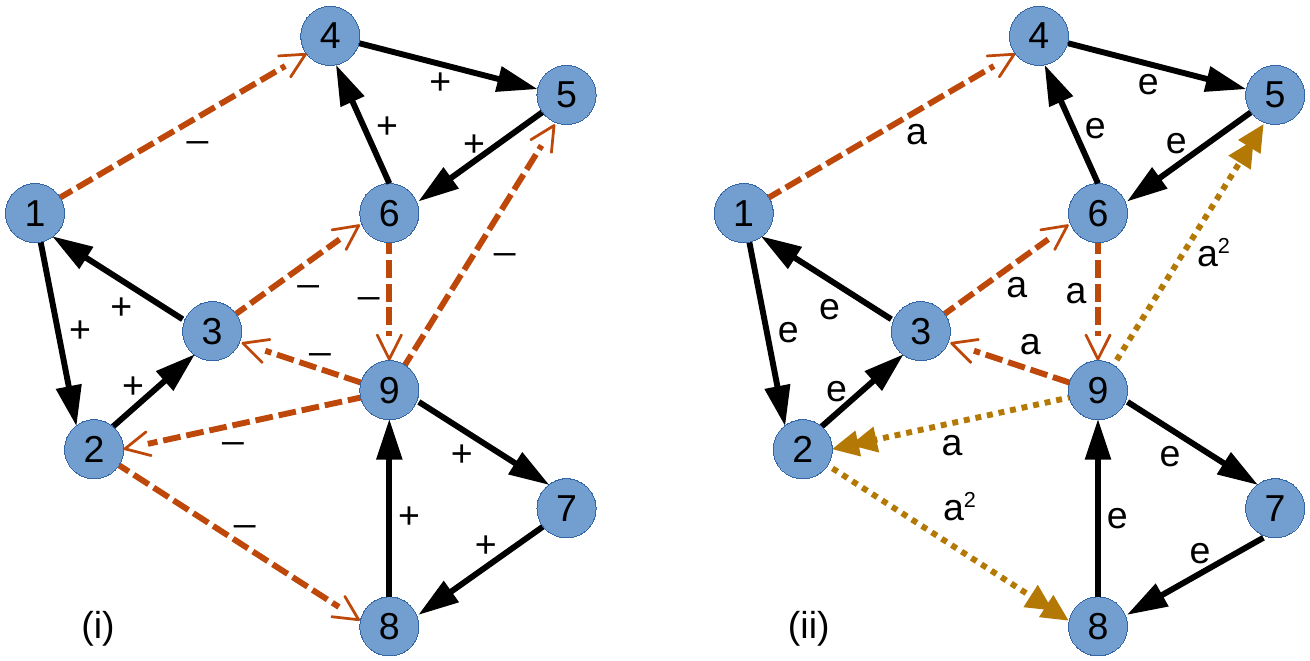}
	\end{center}
	\caption{This network has three blocks $\Bcal_m$ of nodes where the links between members of a block are always positive as denoted by $+1$ in (i) on the left and $e$ in (ii) on the right. The blocks are $\Bcal_0 = \{1,2,3\} $,  $\Bcal_1 =\{4,5,6\}$ and   $\Bcal_2 =\{7,8,9\}$. Relationships between nodes in different blocks are negative. In the signed graph (i) on the left, these have to be labelled with the same negative label $-1$. As a result, I see that I have an unbalanced signed graph in (i) because of the triangle of negative edges between nodes $3$, $6$ and $9$ in the centre. This type of signed network is sometimes said to be ``weakly balanced'' since in terms of social relations this might be considered to be a stable configuration. On the right in (ii) I show the same relations as a $Z_3$ group graph. Now I use two different labels $a$ and $a^2$ for the negative links while the positive links are labelled by $e$. For instance, node $9$ has three types of outgoing links as it has neighbours which are friends such as node $7$ as well as two different types of enemy neighbours, nodes $2$ and $5$. Here $\{e,a,a^2\}$ are elements of the group $Z_3$ with $a^3=e$ so I have a balanced $Z_3$ group graph in (ii). For instance, I may derive node labels $a^m$ for nodes in block $\Bcal_m$ in (ii) based on the link labels and the converse, these node labels may be used to give the link labels shown. The numbers inside the blue circles are used to identify nodes.}
	\label{f:weakbalance}
\end{figure}

This idea to use group graphs when there are several opposing blocks links up with the idea of \tsedef{clusterability} in signed networks and weak structural balance \citep{D67a} (see also \citet{GGLSS24} for more recent references). In this approach a signed network is described as being \tsedef{weakly balanced} as long as no closed walk has exactly one negative edge. If this is so, then the the signed network can be partitioned into a number of blocks, not just two blocks as allowed in a balanced network. As we've seen, I can always construct a group graph by assigning the same node labels to all nodes in the same clusters while maintaining distinct labels for nodes in different clusters. So it is possible to encode these cases with group graphs, e.g.\ using $Z_n$ group if there are $n$ blocks.

\subsubsection*{Beyond balanced group graphs}


When looking at signed networks, a common observation is that real systems are rarely balanced, even if there are pressures which push networks towards social balance, for example see \citet{ST10,SLT10}. 
It therefore makes sense to ask about \tsedef{unbalanced group graphs}, group graphs which are not balanced. Some properties of unbalanced group graphs are discussed in \appref{as:ubgg}.

The label for closed walks need not be the identity as it is for balanced group graphs. The label for a closed semiwalk that contains the same points can vary depending on which node is chosen as the start and end point. However, the labels given to a closed walk will always come from one conjugacy class of the group.

It is of particular interest to discuss networks that are `almost' balanced, to develop measures that show how close a network is to balance e.g.\ \cite{KCN19}. This can highlight the most efficient way to makes changes to a network in order to achieve balance or it can inform a model that tries to predict how a network will evolve over time.

The issue of measuring the lack of balance in a group graph is at the heart of the group synchronisation approach \citep{KEES03,GK06a,S11e,AF19}. In group synchronisation the aim is to find the labels $g_v$ of the nodes $v$. The link labels should be those of a balanced group graph, that is ${g}_v.({g}_u)^{-1}$ for link $(v,u)$ as given by \corref{c:lldecomp}. However, all one has are inaccurate noisy measurements $g_{vu}$ of the link label so the link labels do not give a balanced graph. The aim is to find a set of node labels $\hat{g}_v$, that will define a balanced group graph and will be a close approximation to the actual hidden values. These estimates $\hat{g}_v$ of the actual node labels ${g}_v$ are used to define link labels $\hat{g}_v.(\hat{g}_u)^{-1})$ in a balanced group graph using \corref{c:lldecomp} and then we optimise these estimated node values $\hat{g}_v$ to find ones which give link labels that are as ``close'' as possible to the measured values. 

So a major issue in group synchronisation is to find a quantitative measure of which  shows, for a given set of estimated node labels, how close the induced link labels are to the nosy measurements.  Essentially we are measuring the lack of balance in the group graph defined by the measured link labels. For instance, if the node labels $g_v$ are times of clocks which we have to find, and the link labels $g_{vu}$ are estimated time differences between nodes at the head and tail of any edge, we have a group graph based on the infinite abelian group $(\Rbb,+)$ and we could simply try to minimise $\sum_{(v,u)\in\Ecal} ( g_{vu} - \hat{g}_v.(\hat{g}_u)^{-1})^2$ to find the best estimates $\hat{g}_v$ of the node labels $g_v$ \citep{GK06a}.

For group graphs in a unitary representation, another way to measure imbalance in a group graph is to take inspiration from the ``\href{https://en.wikipedia.org/wiki/Wilson_action}{Wilson action}'' \citep{W74} used in the context of lattice gauge theories \citep{R12}. So I define a loss function $L$ where 
\beq
	L(\Ggraph,D) 
	= 
	\sum_{C \in \Ccal} 
	\mathrm{Re} 
	\Big[ \mathrm{Tr}\big(\Dmat(\Lambda(C))\big) \Big] 
	= 
	\half 
	\sum_{C \in \Ccal} 
	\left( 
	\Big[ \mathrm{Tr}\big(\Dmat(\Lambda(C))\big) \Big] 
	+
	\Big[ \mathrm{Tr}\big(\Dmat(\Lambda(\Crev))\big) \Big] 
	\right)\, .
	\label{e:wilson}
\eeq
Here I am summing over a set of set of closed semiwalks $\Ccal$. 
A useful reminder that we need to treat undirected networks as symmetric directed networks comes in the second form of the Wilson action where we can ensure a real result by summing over every cycle $C$ and its reversed partner $\Crev$ as defined in \defref{d:rsemiwalk}.

We can understand how this loss function $L$ measures the lack of balance as follows. The trace of a matrix is equal to the sum of the eigenvalues $\lambda_n$ of that matrix and there are $d$ eigenvalues for a $d \times d$ unitary matrix. All the eigenvalues of a unitary matrix have modulus one, $\lambda_n=\exp(i\theta_n)$. 
Thus the modulus of each trace can not be bigger than $d$ which happens only when all eigenvalues are $1$, i.e.\ for the unit matrix, the representation of the identity element. Hence I know that $L \leq d |\Ccal|$ where I have equality only if $\Lambda(C)=e$ for all closed semiwalks $C$ as $\Dmat(e)$ is always the unit matrix. 

It is sometimes helpful to rewrite this in the form
\beq
	 H(\Ggraph,D) 
	 = 
	 1 - 
	 \frac{1}{d |\mathcal{C}|} 
	 \sum_{C \in \Ccal} 
	 \mathrm{Re} \Big[ \mathrm{Tr} \big(\Dmat(\Lambda(C))\big)  \Big] 
	 \geq 0 
	 \, .
	 \label{e:Hdef}
\eeq   
For a group graph in a unitary representation, this $H$ has a minimum value of zero which is only achieved if the group graph is balanced. Thus $H$ provides a measure of the distance a group graph is from balance.

When the Wilson action is used in the context of lattice gauge theories, the network is invariably a hypercubic lattice (an undirected network). The obvious choice of basis there is to use closed walks of length four that takes us around each of the two-dimensional squares which are the faces of the hypercubes making up the lattice. More formally, these form a minimal cycle basis for this regular lattice.

For a general directed network there are many possible sets of closed semiwalks $\Ccal$ I could use but one sensible choice would be find a minimal set of basis semicycles $\Ccalsymm_\mathrm{b}$ out of which all other closed semiwalks could be built through concatenation. 

One necessary part of this basis set $\Ccalsymm_\mathrm{b}$ would be based on subset $\Ccalsymm_\mathrm{b2}$ of all semicycles of length two which are of the form $(u,v,u)$ where $u \neq v$, that is the set
\beq
 \Ccalsymm_\mathrm{b2} = 
 \big\{ 
 (u,v,u) | u,v, \in \Ncal, \;\; u \neq v, \;\; (u,v) \in \Lcal \; \text{or}\; (v,u) \in \Lcal   
 \big\} \, .
\eeq
The link labels of node pairs with only one edge in the original directed network $\Dcal$ are defined so they always satisfy balance so they give the maximum contribution of $d$ to the loss function $L(\Ggraph,D)$. It is the reciprocated links in the original network $\Dcal$ which can contribute to a lack of balance, e.g.\ in nodes $2$ and $3$ in \figref{f:ggex}, so these closed semicycles of length two can not be ignored. 

In a similar way, the set of self-loops $\Ccalsymm_\mathrm{b1} = \{ (u,u) | u \in \Ncal, (u,u) \in \Lcal \}$, semicycles of length one, must also be included. 

However for semicycles of length three or more, I can call upon traditional cycle analysis (unlike this paper, most other work defines cycles to be of length three or more). So for these closed semiwalks a sensible choice is to use a minimum directed cycle basis in the symmetrised directed graph $\Dcalsymm$ of \defref{d:symm} to reduce the influence of any one link label. This is because I can make any closed semiwalk of length three or more out of a combination of these cycles from the minimum cycle basis (for example see \citet{D05,BGV09,MM09a,VEE21} for information and references on closed walks and closed walk bases).
Note that you do not need to include a directed semicycle and its reverse, which is also a semicycle.

For instance, for a signed graph with $\pm1$ weights, i.e.\ a $Z_2$ group graph in the parity irrep, $L$ of \eqref{e:wilson} is simply equal to the number of closed walks $|\Ccal^{(+)}|$ with an even number of negative links (balanced closed walks),  minus the number of closed walks $|\Ccal^{(+)}|$ with an odd number of negative links (unbalanced closed walks), so $L=|\Ccal^{(+)}|- |\Ccal^{(-)}|=|\Ccal|- 2 |\Ccal^{(-)}| \leq |\Ccal^{(+)}|$ with equality only if I have a balanced signed network.

\subsubsection*{Beyond groups}

Since signed graphs are an example of a $Z_2$ group graph, it has been natural to look at how to use symmetry groups to generalise results for signed networks. However, for many processes on the graph, the type of unitary mixing often associated with groups may not always be appropriate. I have mentioned non-unitary representations associated with some infinite groups as one option. 

However, another direction would be to weaken the constraints imposed by the group structure by looking at monoids. Monoids are algebraic structures that share the same axioms as groups except that every element of a monoid does not have to have an inverse. A good example of a monoid would be a set of stochastic matrices with entries between zero and one inclusive and whose columns sum to one. One of the examples here of the $Z_2$ group used the two-dimensional representation of \eqref{e:z2rep2d}. These are also stochastic matrices but they are special cases where they are also unitary. I could also consider a link label $\Dmat(b)$ in a two-dimensional representation where every entry is $1/2$. This matrix $\Dmat(b)$ is not unitary, it has no inverse, but it is stochastic. This $\Dmat(b)$ matrix represents a process where values from each layer is shared equally between the two layers when they are transmitted along a link labelled by element $b$ in one time step. These three matrices, $\Dmat(e)$, $\Dmat(a)$ and $\Dmat(b)$, form a two-dimensional representation of a monoid with multiplication table
\beq
\begin{tabular}{c|ccc}
	       & $e$    & $a$    & $b$    \\ \hline
	       &        &        &        \\[\dimexpr-\normalbaselineskip+2pt]
	$e$    & $e$    & $a$    & $b$    \\
	$a$    & $a$    & $e$    & $b$    \\
	$b$    & $b$    & $b$    & $b$    \\
\end{tabular}
\eeq
These types of process described by stochastic matrices which are not in general reversible, hence the lack of a unitary representation, but they are often useful when representing processes occurring on a network. So monoid graphs rather than group graphs might be of interest.

We can also consider looking at structures where neither inverse nor identity elements are required, and so we just keep the associative property of a binary operation on a set of objects.  
This structure is known as a \tsedef{semi-group}, such as the set of real numbers between zero and one exclusive under multiplication $((0,1),\times)$. 
These have been discussed in the context of group synchronisation \citep{AF19}.


\subsection*{Acknowledgements}

I would also like to thank the researcher who highlighted the work on voltage graphs and gain graphs after this work was presented as part of a satellite meeting ``FRIENDS'' on signed networks at NetSci 2025. 

\subsection*{Competing interests}

The author declares that there are no competing interests.

\subsection*{Funding information} 

No funding has been received for this work.

%
%
%

\clearpage


\begin{center}
	\Large\textbf{Appendices}
\end{center}
\addcontentsline{toc}{section}{Appendices}

\appendix

\numberwithin{figure}{section}
\numberwithin{table}{section}
\numberwithin{theorem}{section}
\numberwithin{definition}{section}

\renewcommand{\thesection}{\Alph{section}}
\setcounter{section}{0}
\setcounter{definition}{0}
\setcounter{theorem}{0}
\renewcommand{\theHsection}{\Alph{section}}
\renewcommand{\theHequation}{\theHsection.\arabic{equation}}
\renewcommand{\theHfigure}{\theHsection.\arabic{figure}}
\renewcommand{\theHtable}{\theHsection.\arabic{table}}
\renewcommand{\theHtheorem}{\theHsection\arabic{theorem}}




\section{Consistent group graphs}\label{as:cnsstgg}

\subsection{Walks, consistency and group graphs}\label{as:wcgg}

When dealing with groups, the most interesting structures are those that are consistent with the group multiplication properties. A matrix representation $D$ of any abstract group is a good example of a structure where, by definition, the matrix multiplication respects the structure of the abstract group. 
This idea was used for group graphs to define balance but the network structure used was a \emph{semiwalk} not the more usual \emph{walk} or path used in network analysis. Here I will look at what happens if we define a type of group graph where the group labels on walks are consistent with the group structure. 

\begin{definition}[A consistent group graph] \label{d:consistentgg}
	A group graph is \tsedef{consistent} when the group element of any walk only depends on the initial and final nodes in the walk, i.e.\ it is independent of the route taken. 
\end{definition}

A property that follows immediately from the definition of a consistent group graph in \defref{d:consistentgg} is that there is a unique group element $g_{vu}=\linkmap(W(v,u))$ for any node pair where there is a walk $W(v,u)$ from $u$ to $v$.  Links are a special case of this.

Clearly, walks are a special case of semiwalks so any group graph that is balanced is also consistent but the converse is not guaranteed. Thus the two examples of balanced group graphs in \figref{f:bggex} are also consistent group graphs. 

It requires an explicit check but the unbalanced group graphs of \figref{f:ggex} turn out to be \tsedef{inconsistent} group graphs (i.e.\ they are not consistent group graphs).
In the $Z_2$ group graph  \figref{f:ggex}.i, the walk $(6,5,4,3,2)$ from node $2$ to node $6$ has a walk label of $e.e.a.a = e$ but the walk of length one between the same two nodes, $(6,2)$, is labelled by different $Z_2$ element, $a$. The same walks in the $S_3$ graph in \figref{f:ggex}.ii give a walk label of $b.a^2b.a.a^2=a$ for the walk $(6,5,4,3,2)$ while the walk $(6,2)$ has a label given by the $S_3$ element $b$.

On the other hand, the two examples in \figref{af:ggcnstntex} are both consistent but unbalanced group graphs.  

\begin{figure}[htb]
	\begin{center}
		\includegraphics[width=0.9\textwidth]{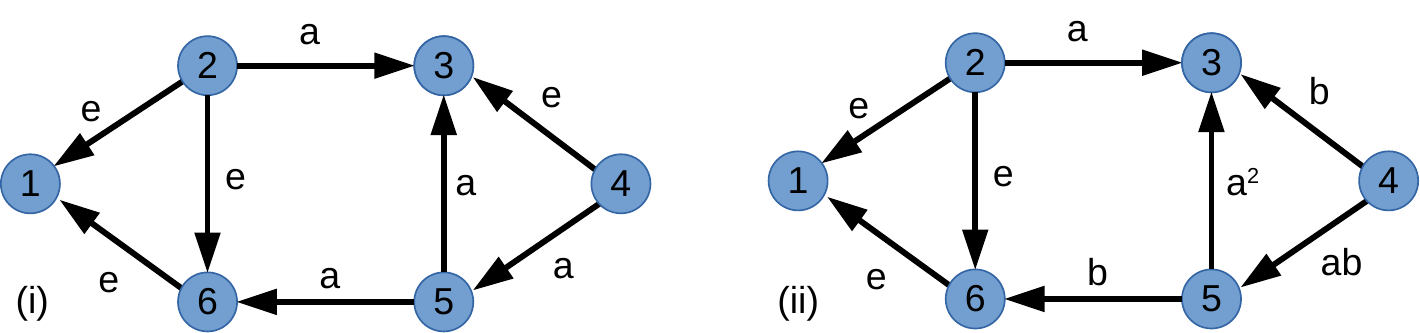}
	\end{center}
	\caption{Two examples of consistent but unbalanced group graphs. The group graph (i) on the left is has links lying in $Z_2$ 
		so this is also a signed network where positive (negative) links are labelled with $e$ ($a$). In group graph (ii) on the right links lie in $S_3$. The examples here are both consistent but unbalanced group graphs. The lack of balance can be seen in the odd number of links containing $a$ in (i) or $b$ in (ii). The numbers inside the blue circles are used to identify nodes in examples in the text. Group elements for each link are shown next to the relevant link. Both graphs are unrooted, see \defref{d:rootcomp}, since there is no path between nodes $2$ and $4$.}
	\label{af:ggcnstntex}
\end{figure}

The definition of the label of a walk was given in \defref{d:walkgpelm}. Some of the properties for semiwalks on a balanced group graph have a corresponding result for walks on a consistent group graph.

\begin{lemma}[Closed walks in consistent group graphs] \label{al:closedwalkid}
	Any closed walk $\Ccal$ in a consistent group graph has a label equal to the identity element. 
	\beq
	\linkmap(\Ccal) =e \, .
	\label{ae:closedwalkid}
	\eeq
\end{lemma}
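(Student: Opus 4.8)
The plan is to mirror the argument already used for \lemref{l:cswbggid} (closed semiwalks in balanced group graphs), now with \emph{walks} in place of semiwalks and \emph{consistency} (\defref{d:consistentgg}) in place of balance. The one fact I would lean on is the defining property of a consistent group graph --- the label of a walk depends only on its two endpoints --- together with the concatenation identity \eqref{e:concatwalklabel}. These two facts together force any closed walk to carry the identity label.

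First I would handle the non-trivial case. Let $\Ccal = (v_n)_{n=0}^L$ be a closed walk of length $L \geq 1$, so that $v_0 = v_L =: v$. Since the last node of $\Ccal$ coincides with its first node, I may concatenate $\Ccal$ with itself to form $\Ccal \circ \Ccal$ as in \eqref{e:walkconcat}; this is again a walk on $\Dcal$ (every consecutive node pair is a link of $\Dcal$, since this held in each of the two copies of $\Ccal$), and it is again closed, starting and ending at $v$. By \eqref{e:concatwalklabel}, $\linkmap(\Ccal \circ \Ccal) = \linkmap(\Ccal) \times \linkmap(\Ccal)$. But $\Ccal$ and $\Ccal \circ \Ccal$ are both walks from $v$ to $v$, so by \defref{d:consistentgg} they carry the same label, i.e.\ $\linkmap(\Ccal) = \linkmap(\Ccal) \times \linkmap(\Ccal)$. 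Left-multiplying by $[\linkmap(\Ccal)]^{-1}$, which exists because the labels are group elements, yields $\linkmap(\Ccal) = e$.

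Then I would dispose of the trivial closed walk $\Wcal_v = (v)$ of length zero separately, since for it self-concatenation merely returns $\Wcal_v$ rather than a longer walk. Here I would instead note that for any walk $\Wcal$ ending at $v$ one has $\linkmap(\Wcal) \times \linkmap(\Wcal_v) = \linkmap(\Wcal \circ \Wcal_v) = \linkmap(\Wcal)$ by \eqref{e:concatwalklabel}, and left-multiplying by $[\linkmap(\Wcal)]^{-1}$ gives $\linkmap(\Wcal_v) = e$; alternatively one simply invokes the convention fixed after \defref{d:walkgpelm} that trivial walks are labelled by the identity.

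I do not anticipate any genuine obstacle: this is the walk-analogue of an argument already carried out for semiwalks, and the only point deserving a moment's care is verifying that $\Ccal \circ \Ccal$ is genuinely a \emph{walk} on $\Dcal$ and is itself a \emph{closed} walk, so that consistency may legitimately be applied to it --- and that is immediate from the definition of concatenation together with the fact that $\Ccal$ is a closed walk to begin with.
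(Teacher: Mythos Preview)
Your proposal is correct and follows essentially the same argument as the paper: concatenate a non-trivial closed walk with itself, invoke consistency to equate $\linkmap(\Ccal)$ with $\linkmap(\Ccal)\times\linkmap(\Ccal)$, and cancel; then handle the trivial walk separately via concatenation or the standing convention. One small slip: in your trivial-walk step you say ``$\Wcal$ ending at $v$'' but the concatenation $\Wcal\circ\Wcal_v$ you write requires $\Wcal$ to \emph{start} at $v$ under the paper's conventions --- the argument goes through unchanged once that is corrected (or simply by citing the convention after \defref{d:walkgpelm}, as you note).
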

This lemma is the equivalent of \lemref{l:cswbggid} for semiwalks on balanced group graphs and both lemmas may be proved in the same way. 
\begin{proof}
	Consider a non-trivial closed walk $\Ccal = (v, \ldots, v)$, a walk starting and ending at some node $v$ of length one or more. We can concatenate this to give a distinct closed walk $\Ccal' = \Ccal\circ \Ccal$ by \eqref{e:walkconcat}. This will also start and end at node $v$ so $\Ccal'$ is also a closed walk. Using \eqref{e:concatwalklabel} then gives for the labels of the walks that $\linkmap(\Ccal') = \linkmap( \Ccal) \times \linkmap( \Ccal)$. Since the graph is consistent, and the walks $\Ccal$ and $\Ccal'$ are between the same nodes, they must have the same group label in a consistent group graph so that $\linkmap( \Ccal) =\linkmap(\Ccal')$. Given that all group elements have an inverse, we may left multiply by $(\linkmap( \Ccal))^{-1}$ to give that $\linkmap(\Ccal) =e$ as required.
	\\
	For a trivial walk $\Wcal_v=(v)$ of length zero from any node $v$ we have that $\Wcal_v \circ \Wcal_v = \Wcal_v$. From the definition of the label of a walk in \eqref{e:concatwalklabel} we have that $\Lambda(\Wcal) \times \Lambda(\Wcal_v) = \Lambda(\Wcal)$. Left multiplying by $(\Lambda(\Wcal))^{-1}$ gives $\linkmap(\Wcal_v)=e$.
\end{proof}

There are special cases of \lemref{al:closedwalkid} worth noting. Any cycle and so any self loop are labelled by the identity in a consistent group graph. Likewise, the label of any reciprocated links must be the inverse of each other in a consistent group graph. 
\begin{lemma}[Labels of reciprocated links] \label{al:reciplabel}
	The label of any reciprocated links in a consistent group graph must be the inverse of each other. 
	\beq
	\linkmap((u,v)) = \big(\linkmap((u,v))\big)^{-1} \quad \text{if} \quad (u,v) , (v,u) \in \Lcal \, .
	\eeq
\end{lemma}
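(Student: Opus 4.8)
The plan is to reduce the claim to \lemref{al:closedwalkid} by exhibiting a single explicit closed walk of length two. First I would note that if both $(u,v)$ and $(v,u)$ lie in $\Lcal$, then the node sequence $(u,v,u)$, i.e.\ $v_0=u$, $v_1=v$, $v_2=u$, satisfies \eqref{e:walkdef}: its links are $\link_1=(v_1,v_0)=(v,u)$ and $\link_2=(v_2,v_1)=(u,v)$, both of which are present precisely because the pair is reciprocated. Since its first and last nodes agree, it is a closed walk (in fact a cycle in the sense of \eqref{e:cycledef} when $u\neq v$).

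Second, I would apply \lemref{al:closedwalkid} to get $\linkmap((u,v,u))=e$, and then unpack \defref{d:walkgpelm}: the label of the walk is the product of the link labels with the first link on the right, so $\linkmap((u,v,u))=\linkmap(\link_2)\times\linkmap(\link_1)=\linkmap((u,v))\times\linkmap((v,u))$. Setting this equal to $e$ and left-multiplying by $(\linkmap((u,v)))^{-1}$ (equivalently, invoking uniqueness of the group inverse) gives $\linkmap((v,u))=(\linkmap((u,v)))^{-1}$, which is the assertion.

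I do not expect any real obstacle; the only point requiring care is the ordering convention of \defref{d:walkgpelm}, since for a non-abelian group the product $\linkmap((u,v))\,\linkmap((v,u))$ must be taken in that order (though either ordering yields the stated result, as $g^{-1}$ is a two-sided inverse). It is also worth recording the degenerate case $u=v$: a reciprocated self-loop is just a self-loop $(v,v)\in\Lcal$, and the length-one closed walk $(v,v)$ already forces $\linkmap((v,v))=e$ by \lemref{al:closedwalkid}, which is consistent with (and slightly stronger than) the statement $\linkmap((v,v))=(\linkmap((v,v)))^{-1}$. A brief parallel remark is that this is the walk-analogue of the semiwalk argument leading to \corref{c:reciplabel}, so the proof mirrors that one step for step.
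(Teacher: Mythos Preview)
Your proposal is correct and matches the paper's approach exactly: the paper introduces \lemref{al:reciplabel} as a special case of \lemref{al:closedwalkid} applied to the length-two closed walk $(u,v,u)$ formed by a reciprocated pair, precisely as you do, and the main-text analogue (\corref{c:reciplabel}) is derived by the same computation $\linkmap((u,v))\times\linkmap((v,u))=e$ that you spell out. Your handling of the ordering convention and the degenerate self-loop case are both appropriate and in line with the paper's remarks.
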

The converse of this lemma is not true. That is just because all reciprocated links carry labels which are inverse of each other does not mean a group graph is consistent.

It is important to consider how the definition of a consistent group graph, \defref{d:consistentgg}, works when there are some node pairs in the network where there is only one walk between those nodes. In such a case, requiring the group graph to be consistent provides no constraint on the group labels of the links in these paths. The most extreme application of this case is a directed tree, such as shown in \figref{af:closedwalkcnstcyex}.iii, where there will never be an alternative to any walk passing through any one link. Thus, any assignment of group labels to the links in \figref{af:closedwalkcnstcyex}.iii always gives a consistent group graph. 

However, there are cases where there is only one \emph{path} between nodes (walks made from distinct nodes) but multiple walks. For example, the case of a graph made up of a single directed closed walk, such as in \figref{af:closedwalkcnstcyex}.i, only has one path between any pair of nodes.  There are though multiple walks as a walk can go around the cycles of the three nodes many times before finishing the walk at the target node.  As there are multiple walks in this case between any node pair, the definition of a consistent group graph will put constraints on the group labels as shown by the lemmas on closed walks such as \lemref{al:closedwalkid}.

From this it also follows that the converse of \lemref{al:closedwalkid} is not true. That is, if we can show that all closed walks, including cycles, have a label equal to the identity, this is not sufficient to show that a group graph is consistent. A counter example is provided in \figref{af:closedwalkcnstcyex}.ii where there are no cycles yet consistency does constrain the group labels and we only have consistency if $ba=c$.

\begin{figure}[htb]
	\begin{center}
		\includegraphics[width=0.45\textwidth]{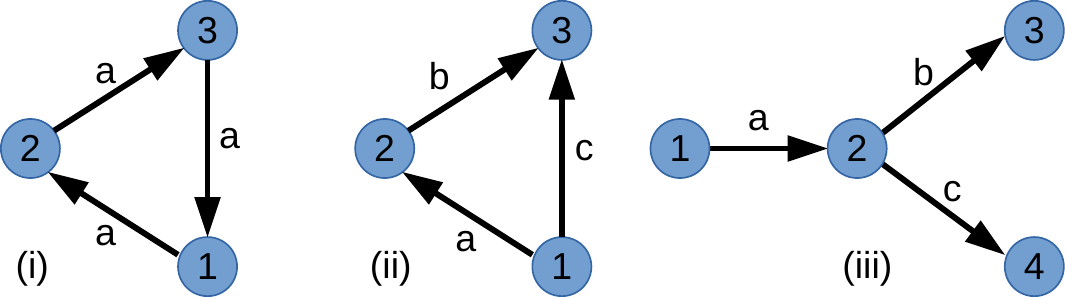}
	\end{center}
	\caption{Examples for \lemref{al:closedwalkid}. On the left, network (i) shows a simple directed closed walk which is strongly connected but with no reciprocated links. There is also only one path between any pair of nodes. The walks are the basic path concatenated with any number of the closed walks. If in (i) $a^3=e$ (e.g.\ for $S_3$) then this is a consistent group graph otherwise it is not (e.g.\ when the link labels come from the group $Z_2$). In the centre, network (ii) is a directed graph with no closed walks and which has no reciprocated links. The two walks from node $1$ to node $3$ are consistent only if $b \times a=c$. There are no closed walks in this example (ii) so \lemref{al:closedwalkid} puts no constraints on the labels showing that the converse of \lemref{al:closedwalkid} is not true. 
		A directed tree is shown on the right in network (iii) in which there is only one walk between any pair of connected nodes and there are no closed walks. This is a consistent and balanced group graph whatever group elements label the links and for any group. 
	}
	\label{af:closedwalkcnstcyex}
\end{figure}

Finally, given that semiwalks on group graphs are equivalent to walks on the symmetrised group graph, we can give an alternative definition of a balanced group graph.
\begin{definition}[Balanced Group Graph] \label{d:bgrpgrphsgg}
	A \tsedef{balanced group graph} is a group graph which has a consistent symmetrised group graph.
\end{definition}

Since all walks are also semiwalks on the group graph, but not vice versa, it is clear that balance imposes more constraints on the link labels of a group graph than consistency. It follows that a balanced group graph is always a consistent group graph, but the converse does not always hold as \figref{af:ggcnstntex} shows. 
	
	\subsection{Node Labels}\label{as:nodelabels}
	
	
	Balance in a group graph is a powerful constraint and it confers, amongst other properties, the ability to give nodes a well defined and consistent label along side the labels on the links. The consistency property of group graphs given in \defref{d:consistentgg} is a weaker constraint than balance so it should be of no surprise that it is not possible to assign node labels in the same way as with a balanced group graph. If, however, the underlying graph has a more constrained topology, then it turns out that along with consistency is enough to allow node labels to be assigned. This is the topic of this section. For simplicity, I will work with a network with a single weakly connected component.

	Working with directed graphs, what is required is that for each component there is at least one node that is connected to every other node by a walk.
	\begin{definition}[Source and sink nodes]\label{d:sourcesink}
		A \tsedef{source node} of a weakly connected component to be a node where there is a walk \emph{from} the source node to every other node in the component. Likewise, a \tsedef{sink node} of a weakly connected component is one where there is a walk from every other node in the component \emph{to} the sink node.  
	\end{definition}
	For a network with a single weakly connected component, the set of source nodes $\Ncal^{(source)}$ and sink nodes $\Ncal^{(sink)} $ may be defined formally as
	\beq
	\Ncal^{(source)} 
	=
	\big\{
	v\, | \, \forall u \in \Ncal \; \exists \; \Wcal(u,v) 
	\big\} 
	\, , \quad
	\Ncal^{(sink)} 
	=
	\big\{
	v\, | \, \forall u \in \Ncal \; \exists \; \Wcal(v,u)
	\big\} 
	\label{e:sourcesinkdef}
	\eeq
	A reminder that $\Wcal(v,u)$ is a walk that starts from node $u$ and ends at node $v$. The notation implies there is at least one such walk but there can also be many such walks.
	
	\begin{definition}[Rooted component and graph]\label{d:rootcomp}
		A weakly connected component with at least one source or at least one sink node is a \tsedef{rooted component}. A \tsedef{rooted graph} is one where all weakly connected components are rooted.
	\end{definition}
	In a strongly connected component, every node is both a source node and a sink node. 
	
	Group graphs have a group element assigned as a label to each link. When a group graph is consistent, then for any rooted component there is also a natural way to label the nodes with a group element. This can be summarized by the following lemma. 
	\begin{lemma}[Node labels in a consistent group graph] \label{al:nodelabelrcgg}
		For a consistent rooted group graph, the nodes may be labelled by group elements in a way that is consistent with the link labels. That is I may assign $g_v\in \Gcal$ for all $v \in \Ncal$ s.t.\ $g_v = \linkmap(\Wcal(v,u)) g_u$ for any walk $\Wcal(v,u)$ from node $u$ to node $v$. 
	\end{lemma}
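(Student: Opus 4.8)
The plan is to mimic the construction used in the proof of \lemref{l:nodelabel} for balanced group graphs, but to lean on the rooted hypothesis in place of balance. As there, I would treat each weakly connected component separately, so without loss of generality I assume $\Dcal$ has a single weakly connected component. By \defref{d:rootcomp} this component has a source node or a sink node; I would describe the argument in detail for a source node $r$, the sink case being handled symmetrically (see the last paragraph).

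First I would fix the source node $r$ and give it an arbitrary label $g_r \in \Gcal$. For every node $v$ there is at least one walk $\Wcal(v,r)$ from $r$ to $v$ by \defref{d:sourcesink}, and by consistency (\defref{d:consistentgg}) the label $\linkmap(\Wcal(v,r))$ does not depend on which such walk is chosen. So I can set $g_v := \linkmap(\Wcal(v,r))\, g_r$; this is a well-defined element of $\Gcal$ for every $v \in \Ncal$, and for $v = r$ it returns $g_r$ since the trivial walk has label $e$.

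Next I would verify the required identity. Let $\Wcal(v,u)$ be any walk from $u$ to $v$. Since $r$ is a source node there is a walk $\Wcal(u,r)$ from $r$ to $u$, and the concatenation $\Wcal(v,u)\circ\Wcal(u,r)$ is a walk from $r$ to $v$. Using the concatenation rule \eqref{e:concatwalklabel} for walk labels together with consistency applied to the two walks $\Wcal(v,u)\circ\Wcal(u,r)$ and $\Wcal(v,r)$ between $r$ and $v$,
\begin{multline*}
  g_v = \linkmap(\Wcal(v,r))\, g_r = \linkmap\big(\Wcal(v,u)\circ\Wcal(u,r)\big)\, g_r \\
  = \linkmap(\Wcal(v,u))\,\linkmap(\Wcal(u,r))\, g_r = \linkmap(\Wcal(v,u))\, g_u ,
\end{multline*}
which is exactly the claimed relation. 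I would also remark, as after \lemref{l:nodelabel}, that the choice of $g_r$ is free, so this in fact produces $|\Gcal|$ distinct consistent node labellings, and that the labelling is independent of which source (or sink) node is used as root, by the same cancellation argument.

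The main obstacle — and the reason the rooted hypothesis cannot be dropped — is precisely the step that builds a walk $\Wcal(u,r)$ from $r$ to each intermediate node $u$: consistency alone constrains only the labels of walks that genuinely exist between a given ordered pair of nodes, and in a general directed graph there can be node pairs joined by no directed walk in either direction (only by a semiwalk), so the composition $\Wcal(v,u)\circ\Wcal(u,r)$ need not exist and the verification would collapse. The rooted condition is exactly what supplies the walks $\Wcal(v,r)$ and $\Wcal(u,r)$ needed to make both the definition and the check go through. For a component with a sink node $s$ rather than a source, I would run the same argument using walks $\Wcal(s,v)$ from $v$ to $s$, taking $g_v := \big[\linkmap(\Wcal(s,v))\big]^{-1} g_s$ as the node label and using $(g_1 g_2)^{-1} = (g_2)^{-1}(g_1)^{-1}$ together with consistency of $\Wcal(s,v)\circ\Wcal(v,u)$ and $\Wcal(s,u)$ to recover $g_v = \linkmap(\Wcal(v,u))\, g_u$.
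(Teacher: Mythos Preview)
Your proof is correct and follows essentially the same construction as the paper: treat each weakly connected component separately, pick a source (or, failing that, a sink) node as root, assign it an arbitrary label, and propagate labels via walk labels, with consistency guaranteeing well-definedness. Your argument is in fact slightly more complete than the paper's, since you explicitly verify the relation $g_v = \linkmap(\Wcal(v,u))\, g_u$ for an \emph{arbitrary} walk via the concatenation identity \eqref{e:concatwalklabel}, whereas the paper's proof only argues that the labels are well defined given the root.
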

	Note this includes the special case that for any single link $(v,u)$ (a walk of length one) with link label $g_{vu}\in \Gcal$, I have that the relationship between the labels of the tail node $u$ and the head node $v$ is $g_v = g_{vu} g_u$. I can prove this lemma by giving an explicit method to construct these node labels.
	\begin{proof}
		We can treat each weakly-connected component independently. The algorithm given here is simply repeated for each weakly connected component. So without loss of generality we assume our network $\Dcal$ has a single weakly connected component. 
		
		Pick any one source node $r \in \Ncal^\mathrm{(source)} \subseteq \Ncal$ as the root node. This root node is assigned the group element $g_r \in \Gcal$ as the node label for $r$. This can be any element of the group.
		Now assign the node labels to be the group elements $g_v = g(\Wcal(v,r)) g_r$ where $\Wcal(v,r)$ is any walk from $r$ to $v$. Such a walk always exists  for all nodes $v$ by definition \eqref{e:sourcesinkdef} of a source node. The group label of any walk depends only on the initial and final nodes in a consistent group graph.  Thus $g_v$ is fixed and unique given $g_r$ provided we have at least one source node.
		
		If there is no source node, then there is at least one sink node since the graph is rooted. Choose any sink node $r \in \Ncal^\mathrm{(sink)} \subseteq \Ncal$ as the root node. The node label for $r$ is assigned to be the group element $g_r \in \Gcal$ which can be any element of the group.
		Now assign the node labels to be the group elements such that $g_v = g_r g(\Wcal(r,v)) $ where $\Wcal(r,v)$ is any walk from $v$ to $r$. Such a walk always exists for a sink node for all nodes $v$, see \eqref{e:sourcesinkdef}. The group label of any walk depends only on the initial and final nodes in a consistent group graph.  Thus $g_v$ is fixed and unique given $g_r$ provided we have at least one sink node.
	\end{proof}
	
	So we have now shown that a rooted consistent group graph has a set of node labels which are consistent with the link labels.  At this point we can call upon the results for balanced group graphs to deduce that we have in fact created a balanced group graph. 
	\begin{lemma}[Rooted consistent group graphs are balanced] \label{al:rcggbal}
		A consistent rooted group graph is a balanced group graph. 
	\end{lemma}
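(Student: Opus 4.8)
The plan is to feed the node labels produced by \lemref{al:nodelabelrcgg} into the machinery already built for balanced group graphs. Since rootedness, consistency and balance are all checked one weakly connected component at a time (a semiwalk never leaves a component), it suffices to treat a single rooted weakly connected component. For such a component \lemref{al:nodelabelrcgg} hands me group elements $g_v$, $v\in\Ncal$, with $g_v = \linkmap(\Wcal(v,u))\,g_u$ for every walk $\Wcal(v,u)$ from $u$ to $v$; applying this to a link, which is a walk of length one, gives $\linkmap\big((v,u)\big) = g_v (g_u)^{-1}$ for every $(v,u)\in\Lcal$.

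The next step is to check that this ``difference'' form survives symmetrisation. If $(v,u)\in\Lcal$ but the reciprocal link $(u,v)$ is not, then by \defref{d:sgrpgrph} the added reciprocal link carries $\linkmap\big((u,v)\big) = \big[\linkmap\big((v,u)\big)\big]^{-1} = \big[g_v(g_u)^{-1}\big]^{-1} = g_u(g_v)^{-1}$, again the label of the head node times the inverse of the label of the tail node, with the \emph{same} $g$'s. Hence every link $(v,u)\in\Lcalsymm$ of the symmetrised group graph $\Ggraphsymm$ satisfies $\linkmap\big((v,u)\big) = g_v(g_u)^{-1}$.

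Now take any semiwalk $\Scal=(v_n)_{n=0}^L$ of $\Ggraph$. By \defref{d:semiwalksgg} its label equals the label of the corresponding walk on $\Ggraphsymm$, whose links are $\ell_m=(v_m,v_{m-1})\in\Lcalsymm$, so, reading the product with the first link on the right as in \defref{d:walkgpelm},
\[
 \linkmap(\Scal)
 = \big[g_{v_L}(g_{v_{L-1}})^{-1}\big]\big[g_{v_{L-1}}(g_{v_{L-2}})^{-1}\big]\cdots\big[g_{v_1}(g_{v_0})^{-1}\big]
 = g_{v_L}\,(g_{v_0})^{-1},
\]
a telescoping product whose value depends only on the first node $v_0$ and the last node $v_L$. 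By \defref{d:bggsw} this is precisely the assertion that $\Ggraph$ is balanced; equivalently one may cite \lemref{l:nltocgg}, since I have exhibited the link labels in the form $\linkmap((v,u)) = g_v(g_u)^{-1}$ it requires (and then \corref{c:lldecomp} recovers the same decomposition).

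I do not expect a serious obstacle: once \lemref{al:nodelabelrcgg} is in hand the argument is essentially the telescoping identity already used to prove \lemref{l:nltocgg}. The only points deserving care are the two bookkeeping steps in the middle paragraphs — verifying that the reciprocal links added during symmetrisation really are compatible with node labels that were constructed using only walks in the original digraph, and keeping the left/right ordering of the walk-label convention straight so that the product genuinely telescopes — both of which are routine once the conventions of \defref{d:sgrpgrph} and \defref{d:semiwalksgg} are unwound.
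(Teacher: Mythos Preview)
Your proof is correct and follows essentially the approach the paper intends: the paper does not give a formal proof of this lemma but simply remarks that, having obtained node labels from \lemref{al:nodelabelrcgg}, ``we can call upon the results for balanced group graphs to deduce that we have in fact created a balanced group graph'' --- in other words, invoke \lemref{l:nltocgg}. You carry this out explicitly, including the check that the symmetrisation step of \defref{d:sgrpgrph} preserves the $g_v(g_u)^{-1}$ form of the link labels, which the paper leaves implicit.
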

	In general, consistent groups graphs are not balanced, as \figref{af:ggcnstntex} shows, but the rooted property is sufficient to ensure that consistency in the walks implies balance defined in terms of the semiwalks. The properties of consistent rooted group graphs then follow from the earlier analysis of balanced group graphs.

	\section{Unbalanced group graphs}\label{as:ubgg}
	
	One property of unbalanced group graphs uses the concept of conjugacy from group theory. Two group elements, say  $g_1,g_2 \in \Gcal$ are said to be \tsedef{conjugate} if there exists a third group element $h\in\Gcal$ such that $g_1 = h\times g_2 \times h^{-1}$.  This then defines \tsedef{conjugacy classes}. That is a partition of the set of elements into blocks known as conjugacy classes where each class contains all the elements that are conjugate to each other. For instance the classes for $Z_2$ are $C_e = \{e\}$ and $C_a = \{a\}$ while the classes for $S_3$ are $C_e = \{e\}$,  $C_a = \{a,a^2\}$ $C_b = \{b,ab,a^2b\}$.

	\begin{lemma}[Cycles Lie in Conjugacy classes]
		Each cycle in a group graph, consistent or not (balanced or unbalanced), lies in a conjugacy class of the group.
	\end{lemma}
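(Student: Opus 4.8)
The plan is to make precise the sense in which a cycle ``lies in'' a conjugacy class: a cycle as defined in \eqref{e:cycledef} comes with a distinguished base node (its coinciding first and last node $v_0=v_L$), and its label $\linkmap(\Ccal)$ depends on that choice; the content of the lemma is that rebasing the cycle at any other of its nodes changes the label only by conjugation, so all the labels obtained from one cyclic sequence of links lie inside a single conjugacy class. The whole proof then rests on the elementary group-theoretic fact that $xy$ and $yx$ are conjugate for any $x,y\in\Gcal$, since $yx = x^{-1}(xy)x$ (equivalently $xy=x(yx)x^{-1}$), which is available precisely because every element of a group is invertible.

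First I would fix a cycle $\Ccal=(v_n)_{n=0}^{L}$ with $v_0=v_L$ and pick any position $k\in\{1,2,\ldots,L-1\}$. Using the concatenation of walks in \eqref{e:walkconcat} I would split $\Ccal = \Ycal\circ\Xcal$, where $\Xcal$ is the initial segment running from $v_0$ to $v_k$ along the cycle and $\Ycal$ is the final segment running from $v_k$ to $v_L=v_0$; this is a legitimate concatenation because the last node of $\Xcal$ is $v_k$, the first node of $\Ycal$. By \eqref{e:concatwalklabel}, $\linkmap(\Ccal)=\linkmap(\Ycal)\times\linkmap(\Xcal)$. The cycle traversing the same links in the same cyclic order but based at $v_k$ is $\Ccal^{(k)}=\Xcal\circ\Ycal$ (valid since $\Ycal$ ends at $v_0$ and $\Xcal$ begins at $v_0$), and \eqref{e:concatwalklabel} gives $\linkmap(\Ccal^{(k)})=\linkmap(\Xcal)\times\linkmap(\Ycal)$.

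The key step is then to apply the identity above with $x=\linkmap(\Xcal)$ and $y=\linkmap(\Ycal)$, which yields
\beq
 \linkmap(\Ccal^{(k)})
 = \linkmap(\Xcal)\times\linkmap(\Ycal)
 = \linkmap(\Xcal)\times\big(\linkmap(\Ycal)\times\linkmap(\Xcal)\big)\times\big[\linkmap(\Xcal)\big]^{-1}
 = \linkmap(\Xcal)\times\linkmap(\Ccal)\times\big[\linkmap(\Xcal)\big]^{-1}\, .
\eeq
Thus the label of the rebased cycle is conjugate to $\linkmap(\Ccal)$, the conjugating element being $\linkmap(\Xcal)$, the label of the walk along the cycle from the old base node to the new one. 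Since $k$ was arbitrary, every rebasing of the cycle produces a label in the conjugacy class of $\linkmap(\Ccal)$; as any two cyclic presentations of the same cycle differ by such a rebasing, they all lie in one conjugacy class. I would close by disposing of the degenerate cases for completeness: the trivial cycle of length zero has label $e$ by the convention following \defref{d:walkgpelm}, and $\{e\}$ is its own conjugacy class; a self-loop admits only one base node, so there is nothing to show. I would also remark that consistency or balance was never used, so the statement holds for every group graph, and that the same concatenation argument applies verbatim to an arbitrary closed walk (rebased at any of its nodes) and, via \defref{d:semiwalksgg}, to semicycles.

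I do not expect a genuine mathematical obstacle here; the only real care needed is (i) stating the claim correctly — namely that a cycle's label is only well defined up to conjugacy, so that ``lies in a conjugacy class'' is the right formulation — and (ii) bookkeeping with the paper's right-to-left sequence convention when writing $\Ccal=\Ycal\circ\Xcal$ versus $\Ccal^{(k)}=\Xcal\circ\Ycal$, so that the order of the factors in \eqref{e:concatwalklabel} comes out right. The group theory is the single line $xy = x(yx)x^{-1}$.
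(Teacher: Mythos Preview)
Your proof is correct and follows essentially the same approach as the paper: both show that rebasing the cycle yields a conjugate label, with the paper moving the base point one step at a time (from $v_0$ to $v_1$) and then invoking transitivity of the conjugacy relation, whereas you jump directly to an arbitrary base point $v_k$ via the concatenation split $\Ccal=\Ycal\circ\Xcal$ and the identity $xy=x(yx)x^{-1}$. Your version is a mild streamlining of the same argument, and your bookkeeping with the paper's right-to-left conventions is correct.
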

	If you define the product of group elements round a cycle it will be different for each start/end-point but only difference by a similarity transformation so the group elements are equivalent up to this transformation i.e.\ they always lie in one conjugacy class.
	\begin{proof}
		Suppose we pick a node $v_0$ as the start (and end) of our cycle $\Ccal$ and we denote this walk as $\Ccal_0=(v_n)_{n=0}^{L}$. Then the group element $g(\Ccal_0)$ of this walk/cycle $\Ccal_0$ is the product of the group elements labelling the links in the walk in the order in which they appear in the cycle, see \defref{d:walkgpelm}. Remembering that the group elements are ordered from the first link on the right, last on the left, we have that
		\bea
		g(\Ccal_0) 
		&=& 
		g_L \times g_{L-1} \times \ldots \times g_{2} \times g_1  \, ,  
		\, ,  
		\quad 
		\link_n = (v_{n},v_{n-1}) 
		\, ,
		\quad  
		g_n = g(\link_n) 
		\, .
		\eea
		We can repeat this for the same cycle $\Ccal$ but now starting (and ending) from $v_1$. We denote this walk as $\Ccal_1 = (u_n)_{n=0}^{L}$ where $u_n = v_{n+1}$ for $n<L$, $u_L=u_0=v_1$.
		The group element with this version of the cycle is 
		\bea
		g(\Ccal_1) 
		&=& 
		g_1 \times g_{L}  \times g_{L-1} \times \ldots \times g_{2}  
		\\
		&=& 
		g_1 \times g(\Ccal_0) \times (g_{1})^{-1}  \, ,  
		\eea
		That is the two walks representing the cycle starting from neighbouring points differ by a similarity transformation. We can repeat this for all neighbouring nodes in the cycle to see that $g(\Ccal_n)  \sim g(\Ccal_{n+1})$ and from transitivity of the equivalence relation $\sim$ defined by similarity transformations, we see that every distinct walk representing a cycle can be a different group element but they must all be equivalent. This is the definition of a conjugacy class so each distinct cycle is characterised by one of the conjugacy classes of the group.  
	\end{proof}

	\section{The switching transformation}\label{as:switching}

	The \tsedef{switching transformation} \citep{AR68,Z82,Z89,CW86,CDD21} transforms one group graph $(\Dcal,G,\linkmap)$ into a second $(\Dcal,G,\linkmaptilde)$ with different link labels in such a way that it preserves the property of balance. More generally, this transformation preserves the dynamics, the time evolution, of the type of processes considered later since the network topology is unchanged. However, this does change other properties of group graphs and signed networks which are used for physical properties of networks in some practical examples. So it remains unclear how useful these switching transformations are in practice.

	\begin{definition}[Switching]\label{d:switching}
		A \tsedef{switching transformation} is a map from one group graph $(\Dcal,G,\linkmap)$ to a second $(\Dcal,G,\linkmaptilde)$ where $\linkmaptilde((v,u)) := \gamma(v)\linkmap((v,u)) (\gamma(u))^{-1}$ for each link $(v,u)$.
		Here $\gamma(v)$ is any set of group elements associated with each node $v$, so $\gamma$ is any arbitrary map $\gamma:\Ncal \to G$. The group graphs $(\Dcal,G,\linkmap)$ and $(\Dcal,G,\linkmaptilde)$ are said to be \tsedef{switching equivalent}.
	\end{definition}
	If $\gamma(v)=g$ is some fixed group element $g$ for all nodes then this is using the ``group conjugation'' transformation (see \appref{as:ubgg}). If we have a balanced group graph then we can use node labels $g(v)$ as the node switching maps $\gamma(v)$, a case we will consider below. However, the node map  $\gamma$ for switching transformations is much more general as it can associate \emph{any} group element to \emph{any} node. Thus there are a large number ($|\Gcal|^N$ where $N={|\Ncal|}$) of such switching transformations. 
	
	\begin{lemma}[Semiwalk switching]\label{l:swalkswitch}
		If two group graphs $\Ggraph=(\Dcal,G,\linkmap)$ and $\tilde{\Ggraph}=(\Dcal,G,\linkmaptilde)$ are switching equivalent then then the labels $\linkmap(\Scal)$ and $\linkmaptilde(\Scal)$ for any semiwalk $\Scal$ running from node $s$ to node $t$ are also switching related, i.e.\ $\linkmaptilde(\Scal) := \gamma(t)\linkmap(\Scal) (\gamma(s))^{-1}$.
	\end{lemma}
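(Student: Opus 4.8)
The plan is to reduce the statement to the corresponding claim for ordinary walks on the symmetrised group graph, and then prove that claim by a short induction on length using the concatenation rule \eqref{e:concatwalklabel}.

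First I would check that a switching transformation on $\Dcal$ with node map $\gamma$ induces a switching transformation with the \emph{same} $\gamma$ on the symmetrised group graph. Reciprocated links of $\Dcal$ retain their labels in the symmetrised graph (\defref{d:sgrpgrph}), so $\gamma$ acts on them by the defining formula of \defref{d:switching}. For an unreciprocated link $(v,u)\in\Lcal$ the added link $(u,v)\in\Lcalsymm$ carries $[\linkmap((v,u))]^{-1}$ in $\Ggraphsymm$ and $[\linkmaptilde((v,u))]^{-1}$ in the symmetrisation of $\tilde{\Ggraph}=(\Dcal,G,\linkmaptilde)$; using $\big(\gamma(v)\,\linkmap((v,u))\,\gamma(u)^{-1}\big)^{-1}=\gamma(u)\,[\linkmap((v,u))]^{-1}\,\gamma(v)^{-1}$ we see that the added link transforms exactly by the switching rule applied to the link $(u,v)$. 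Hence the symmetrised group graph of $\tilde{\Ggraph}$ is the $\gamma$-switching of $\Ggraphsymm$. (As with \lemref{l:bsgrpgrphsw}, this step, and the lemma, are to be read as applying to those cases in which the symmetrised group graph is defined; for a multigraph with conflicting link labels there is nothing to prove.)

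By \defref{d:semiwalksgg} together with \lemref{l:swsg}, the label of a semiwalk $\Scal$ on $\Ggraph$ equals the label of the corresponding walk on $\Ggraphsymm$, and likewise for $\tilde{\Ggraph}$. So it suffices to prove: if two group graphs are switching equivalent via $\gamma$ and $\Wcal$ is any walk from $s$ to $t$, then $\linkmaptilde(\Wcal)=\gamma(t)\,\linkmap(\Wcal)\,\gamma(s)^{-1}$. I would do this by induction on the length $L$ of $\Wcal$. For $L=0$ the walk is trivial, both labels equal $e$, and $\gamma(s)\,e\,\gamma(s)^{-1}=e$. For $L\geq 1$, write $\Wcal=(v_n)_{n=0}^L$ with $s=v_0$, $t=v_L$, and split off the last link $\ell_L=(v_L,v_{L-1})$ so that $\Wcal=\ell_L\circ\Wcal'$ with $\Wcal'=(v_n)_{n=0}^{L-1}$ a walk from $s$ to $v_{L-1}$. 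Then, using \eqref{e:concatwalklabel}, the definition of switching on $\ell_L$, and the induction hypothesis on $\Wcal'$,
\begin{align*}
\linkmaptilde(\Wcal)
&= \linkmaptilde(\ell_L)\times\linkmaptilde(\Wcal')
 = \big(\gamma(v_L)\,\linkmap(\ell_L)\,\gamma(v_{L-1})^{-1}\big)\big(\gamma(v_{L-1})\,\linkmap(\Wcal')\,\gamma(v_0)^{-1}\big)\\
&= \gamma(v_L)\,\linkmap(\ell_L)\,\linkmap(\Wcal')\,\gamma(v_0)^{-1}
 = \gamma(t)\,\linkmap(\Wcal)\,\gamma(s)^{-1},
\end{align*}
the last step again by \eqref{e:concatwalklabel}. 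This closes the induction.

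The only genuine subtlety — and the step I would flag as the main obstacle — is the compatibility of switching with symmetrisation in the first paragraph: one must verify that the reciprocal labels inserted when forming $\Ggraphsymm$ are themselves governed by the same node map $\gamma$, which reduces to the order-reversal identity $(g_1g_2)^{-1}=g_2^{-1}g_1^{-1}$ applied to $\gamma(v)\,\linkmap((v,u))\,\gamma(u)^{-1}$. Once that is in hand, the remainder is a routine telescoping induction in which the intermediate $\gamma(v_{L-1})^{-1}\gamma(v_{L-1})$ factors cancel.
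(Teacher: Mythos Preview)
Your proof is correct and rests on the same telescoping cancellation as the paper's argument. The paper proceeds by direct expansion: it writes out $\linkmap(\Scal)$ as the product $\linkmap(v_L,v_{L-1})\times\cdots\times\linkmap(v_1,v_0)$, inserts $e=\gamma(v_j)\gamma(v_j)^{-1}$ between adjacent factors, and regroups each triple $\gamma(v_m)\linkmap(v_m,v_{m-1})\gamma(v_{m-1})^{-1}$ into $\linkmaptilde(v_m,v_{m-1})$. Your induction on length is just a repackaging of this, splitting off one link at a time rather than inserting all the identities at once.

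The one place your write-up goes beyond the paper is the first paragraph, where you explicitly verify that switching commutes with symmetrisation, i.e.\ that the reciprocal labels $[\linkmap((v,u))]^{-1}$ inserted when forming $\Ggraphsymm$ also transform by the switching rule with the same $\gamma$. The paper's proof applies the formula $\linkmaptilde(v_m,v_{m-1})=\gamma(v_m)\linkmap(v_m,v_{m-1})\gamma(v_{m-1})^{-1}$ to every step of the semiwalk without pausing over the fact that some of those steps may traverse links only present in $\Lcalsymm$; your check via $(g_1g_2)^{-1}=g_2^{-1}g_1^{-1}$ fills that small gap and is a worthwhile addition.
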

	Since walks are examples of semiwalks, this lemma also holds for walks. 
	The proof works in the same way as that for \lemref{l:nltocgg}.
	\begin{proof}
		Suppose I have a semiwalk $\Scal = (v_n)_{n=0}^L$ from node $v_0$ to node $v_L$. Let the switching map be $\gamma(v) \equiv \gamma_v \in G$ for each node $v$. Then label of the semiwalk in the group graph $(\Dcal,G,\linkmap)$ is 
		\bea
		\linkmap(\Scal) 
		&=&
		\linkmap(v_L,v_{L-1}) \times 
		\linkmap(v_{L-1},v_{L-2}) \times 
		\ldots \times 
		\linkmap(v_{2},v_{1}) \times 
		\linkmap(v_{1},v_{0})    
		\eea
		Inserting $e = \gamma(v_j)\big(\gamma(v_j)\big)^{-1}$ between the link labels for the links starting/ending with node $v_j$ gives
		\bea
		\gamma(v_n) \times  
		\linkmap(\Scal) \times 
		\big(\gamma(v_0)\big)^{-1}
		&=&
		\gamma(v_n) \times 
		\linkmap(v_L,v_{L-1}) \times 
		\gamma(v_{L-1})\big(\gamma(v_{L-1})\big)^{-1} 
		\nnel
		&&
		\times
		\linkmap(v_{L-1},v_{L-2}) \times  
		\gamma(v_{L-2})\big(\gamma(v_{L-2})\big)^{-1} 
		\nnel
		&&
		\times
		\ldots \times 
		\linkmap(v_{2},v_{1}) \times  
		\gamma(v_1)\big(\gamma(v_1)\big)^{-1} 
		\nnel
		&&
		\quad \quad \quad \quad
		\times \linkmap(v_{1},v_{0}) \times
		\big(\gamma(v_0)\big)^{-1}
		\\
		&=&
		\linkmaptilde(v_L,v_{L-1}) \times 
		\linkmaptilde(v_{L-1},v_{L-2}) \times 
		\nnel
		&&
		\ldots \times 
		\linkmaptilde(v_{2},v_{1}) \times 
		\linkmaptilde(v_{1},v_{0}) 
		\\
		&=&
		\linkmaptilde(\Scal) 
		\eea
	\end{proof}
	From \lemref{l:swalkswitch} we can decuce that the balance property is preserved under any switching transformation
	\begin{lemma}[Balance and switching]\label{l:sbalancewitch}
		If two group graphs $\Ggraph=(\Dcal,G,\linkmap)$ and $\tilde{\Ggraph}=(\Dcal,G,\linkmaptilde)$ are switching equivalent then they are either both balanced or they are both unbalanced.
	\end{lemma}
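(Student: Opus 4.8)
The plan is to deduce this directly from the semiwalk switching relation of \lemref{l:swalkswitch} together with the semiwalk characterisation of balance in \defref{d:bggsw}. First I would fix the node switching map $\gamma:\Ncal\to G$ realising the equivalence, so that $\linkmaptilde(\Scal)=\gamma(t)\,\linkmap(\Scal)\,(\gamma(s))^{-1}$ for every semiwalk $\Scal$ running from a node $s$ to a node $t$. The whole argument then amounts to pushing the path-independence of $\linkmap$ through this conjugation-by-endpoints formula.

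Suppose $\Ggraph$ is balanced. Let $u,v\in\Ncal$ and let $\Scal(v,u)$ and $\Scal'(v,u)$ be any two semiwalks from $u$ to $v$ (if there are fewer than two such semiwalks there is nothing to check for that node pair). By \defref{d:bggsw} we have $\linkmap(\Scal(v,u))=\linkmap(\Scal'(v,u))$. Applying \lemref{l:swalkswitch} to both semiwalks, which have the same endpoints $s=u$ and $t=v$, gives
\beq
\linkmaptilde(\Scal(v,u))=\gamma(v)\,\linkmap(\Scal(v,u))\,(\gamma(u))^{-1}=\gamma(v)\,\linkmap(\Scal'(v,u))\,(\gamma(u))^{-1}=\linkmaptilde(\Scal'(v,u)) \, .
\eeq
Since $u$ and $v$ were arbitrary, the label of any semiwalk in $\tilde{\Ggraph}$ depends only on its endpoints, i.e.\ $\tilde{\Ggraph}$ is balanced.

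For the converse I would note that switching equivalence is symmetric: the switching transformation with node map $\gamma$ is inverted by the switching transformation with node map $v\mapsto(\gamma(v))^{-1}$, because $(\gamma(v))^{-1}\linkmaptilde((v,u))\gamma(u)=(\gamma(v))^{-1}\gamma(v)\linkmap((v,u))(\gamma(u))^{-1}\gamma(u)=\linkmap((v,u))$ for every link $(v,u)$. Hence $\tilde{\Ggraph}$ and $\Ggraph$ are again switching equivalent, and applying the previous paragraph in the other direction shows that if $\tilde{\Ggraph}$ is balanced then so is $\Ggraph$. Combining the two implications yields the stated equivalence: $\Ggraph$ is balanced if and only if $\tilde{\Ggraph}$ is, so they are either both balanced or both unbalanced.

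I do not expect any real obstacle; this is essentially a one-line consequence of \lemref{l:swalkswitch}. The only points that warrant an explicit word are that the definition of balance imposes no constraint on node pairs joined by zero or one semiwalk (so such pairs are automatically fine in both graphs), and that switching equivalence must be verified to be symmetric — which is exactly why the single computation exhibiting the inverse switching map suffices to get both directions from one argument.
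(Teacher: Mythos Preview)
Your proposal is correct and follows essentially the same route as the paper: the forward direction is identical, applying \lemref{l:swalkswitch} to push path-independence through the endpoint conjugation. The only minor difference is in the converse: the paper argues the contrapositive directly (if $\linkmap(\Scal)\neq\linkmap(\Scal')$ then cancelling $\gamma(t)$ and $(\gamma(s))^{-1}$ shows $\linkmaptilde(\Scal)\neq\linkmaptilde(\Scal')$), whereas you instead observe that switching equivalence is symmetric via the inverse node map $v\mapsto(\gamma(v))^{-1}$ and reuse the forward direction --- your version is slightly cleaner and makes explicit a structural fact the paper leaves implicit.
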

	\begin{proof}
		Suppose two semiwalks $\Scal$ and $\Scal'$ from the same node $s$ to the same node $t$ have the same label  $\linkmap(\Scal)=\linkmap(\Scal')$ in the original group graph $\Ggraph=(\Dcal,G,\linkmap)$. 
		Then the switched label $\linkmaptilde(\Scal)$ of the first semiwalk $\Scal$ becomes 
		$\linkmaptilde(\Scal) = \gamma(t) \linkmap(\Scal)(\gamma(s))^{-1}$. In the same way, 	
		the switched label $\linkmaptilde(\Scal')$ of the second semiwalk $\Scal'$ is
		$\linkmaptilde(\Scal') = \gamma(t) \linkmap(\Scal')(\gamma(s))^{-1}$.
		
		However $\linkmap(\Scal)=\linkmap(\Scal')$ for any semiwalks sharing the same starting and end nodes if the original group graph $\Ggraph$ is balanced.  Hence this means that all semiwalks sharing the same starting and end nodes in the switching equivalent graph $\tilde{\Ggraph}$ also always have the same label and so, by \defref{d:bggsw} of a balanced group graph the  switching equivalent graph $\tilde{\Ggraph}$ is balanced if the original group graph is balanced.
		
		The converse can be shown as follows. Suppose two semiwalks $\Scal$ and $\Scal'$ from the same node $s$ to the same node $t$ have a different label  $\linkmap(\Scal) \neq \linkmap(\Scal')$ in the original group graph $\Ggraph=(\Dcal,G,\linkmap)$. 
		The switched labels of these two semiwalks are still $\linkmaptilde(\Scal) = \gamma(t) \linkmap(\Scal)(\gamma(s))^{-1}$ and $\linkmaptilde(\Scal') = \gamma(t) \linkmap(\Scal')(\gamma(s))^{-1}$. However, if $\linkmaptilde(\Scal)=\linkmaptilde(\Scal')$ then  $\gamma(t) \linkmap(\Scal)(\gamma(s))^{-1}=\gamma(t) \linkmap(\Scal')(\gamma(s))^{-1}$. Pre-multiplying by $(\gamma(t) )^{-1}$ and post-multiplying by $\gamma(s)$ gives $\linkmap(\Scal)=\linkmap(\Scal')$ which contradicts our assumption. So  $\linkmaptilde(\Scal) \neq \linkmaptilde(\Scal')$ if $\linkmap(\Scal) \neq \linkmap(\Scal')$. Thus unbalanced group graphs are only ever switching equivalent to other unbalanced group graphs.	
	\end{proof}

	We noted earlier that we can use the node labels of a balanced group graph to define a switching transformation. We know from \corref{c:lldecomp} that $\linkmap((v,u)) = g(v)  (g(u))^{-1}$ using the node labels $g(v)$ of a balanced group graph. But if we consider the switching transformation with $\gamma(v)=g(v)$ then $\linkmaptilde((v,u)) = \gamma(v). g(v)  (g(u))^{-1} (\gamma(u))^{-1} = e$. Formally
	\begin{lemma}[Node labels and switching equivalence]\label{l:nodelabelswitch}
		If a group graph $\Ggraph=(\Dcal,G,\linkmap)$ is balanced then the node labels $g_v$ give the switching transformation with $\gamma(v) = g(v)$ such that all link labels $\linkmaptilde(\ell)$ are the identity in the new group graph $\tilde{\Ggraph}=(\Dcal,G,\linkmaptilde)$ after the switching transformation.
	\end{lemma}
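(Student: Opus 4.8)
The plan is to substitute the node--label decomposition of the link labels of a balanced group graph into the switching formula and then choose the switching map to be the node labelling itself, so that every link label collapses to the identity.

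First I would invoke the structure of a balanced group graph. Since $\Ggraph=(\Dcal,G,\linkmap)$ is balanced, \lemref{l:nodelabel} supplies node labels $g_v\in G$ for every $v\in\Ncal$, and by \thref{t:propbgg}.\ref{i:node} every link $(v,u)\in\Lcal$ has label
\beq
 \linkmap\big((v,u)\big) = (g_v)^{-1}\, g_u \, .
\eeq
Next I would recall \defref{d:switching}: any map $\gamma:\Ncal\to G$ induces a switching transformation sending $\linkmap$ to $\linkmaptilde$ with $\linkmaptilde\big((v,u)\big) := \gamma(v)\,\linkmap\big((v,u)\big)\,(\gamma(u))^{-1}$, and the node labels of a balanced graph are a perfectly admissible such map. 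So taking $\gamma(v):=g_v$ defines a group graph $\tilde\Ggraph=(\Dcal,G,\linkmaptilde)$ switching equivalent to $\Ggraph$.

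The computation is then a one-line cancellation: for every link $(v,u)\in\Lcal$,
\beq
 \linkmaptilde\big((v,u)\big)
 = g_v\,\big((g_v)^{-1} g_u\big)\,(g_u)^{-1}
 = \big(g_v (g_v)^{-1}\big)\big(g_u (g_u)^{-1}\big)
 = e \, ,
\eeq
so every link of $\tilde\Ggraph$ carries the identity, as claimed. I would close with a brief remark that a group graph with all link labels equal to $e$ is trivially balanced --- by \defref{d:walkgpelm} the label of any semiwalk is a product of copies of $e$ and hence $e$, so all semiwalks between a fixed pair of nodes share the same label --- which is exactly consistent with \lemref{l:sbalancewitch}: switching preserves balance, and the identity-labelled representative is the canonical member of the switching class of $\Ggraph$.

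The only point needing care --- and the single place a careless write-up goes wrong --- is keeping the ordering and the inverses straight in a non-abelian $G$: one must use the link-label decomposition in the orientation $(g_v)^{-1}g_u$ (head-node label inverted, placed on the left) so that inserting $\gamma(v)=g_v$ on the left and $(\gamma(u))^{-1}=(g_u)^{-1}$ on the right telescopes correctly; with the mirror-image convention for the decomposition one would instead take $\gamma(v)=(g_v)^{-1}$. Either way the substance is unchanged: the node labels are precisely the switching map that trivialises all link labels, which is what \lemref{l:swalkswitch} already anticipates at the level of semiwalk labels.
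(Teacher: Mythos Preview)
Your proof is correct and follows exactly the same route as the paper: substitute the node-label decomposition of the link labels into the switching formula of \defref{d:switching} and observe the telescoping cancellation. Your closing caveat about orientation is well placed --- the paper itself is not fully consistent between \corref{c:lldecomp}, which gives $\linkmap((v,u))=g_v(g_u)^{-1}$, and \thref{t:propbgg}.\ref{i:node}, which gives $(g_v)^{-1}g_u$ --- and you have correctly identified that under the former convention the switching map making everything collapse to $e$ is $\gamma(v)=(g_v)^{-1}$ rather than $\gamma(v)=g_v$.
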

	A corollary of this lemma is the form found in the literature, for example \citep{Z89,CDD21}.
	\begin{corollary}[Balance and switching equivalence]\label{c:balswitch}
		If a group graph $\Ggraph=(\Dcal,G,\linkmap)$ is switching equivalent to a graph where all link labels are the identity, then the group graph is balanced. That is if there exists $\gamma: \Ncal \to G$ such that
		$\gamma(v)\linkmap((v,u)) (\gamma(u))^{-1} = e$ for all links $(v,u) \in \Lcal$ then $\Ggraph$ is balanced.
	\end{corollary}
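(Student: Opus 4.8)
The statement is an immediate consequence of the machinery already assembled for the switching transformation, so the plan is essentially to string together two facts. First I would observe that the hypothesis says precisely that $\Ggraph=(\Dcal,G,\linkmap)$ is switching equivalent (\defref{d:switching}, via the node map $\gamma$) to the group graph $\tilde{\Ggraph}=(\Dcal,G,\linkmaptilde)$ in which every link carries the identity label, $\linkmaptilde(\ell)=e$ for all $\ell\in\Lcal$. Next I would note that $\tilde{\Ggraph}$ is balanced: the symmetrised group graph of $\tilde{\Ggraph}$ again has only identity labels (the inverse of $e$ is $e$), so by \defref{d:semiwalksgg} the label of \emph{any} semiwalk in $\tilde{\Ggraph}$ is a product of identity elements and hence equals $e$; in particular all semiwalks between a fixed pair of nodes carry the same label, which is \defref{d:bggsw} of balance. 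Finally, \lemref{l:sbalancewitch} states that switching-equivalent group graphs are either both balanced or both unbalanced, so balance of $\tilde{\Ggraph}$ forces balance of $\Ggraph$, which is the claim.

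Alternatively, and slightly more explicitly, I could bypass \lemref{l:sbalancewitch} and argue directly from \lemref{l:swalkswitch}: for any semiwalk $\Scal$ from a node $s$ to a node $t$, that lemma gives $\linkmaptilde(\Scal)=\gamma(t)\,\linkmap(\Scal)\,(\gamma(s))^{-1}$; since $\linkmaptilde(\Scal)=e$ as computed above, rearranging yields $\linkmap(\Scal)=(\gamma(t))^{-1}\gamma(s)$. The right-hand side depends only on the endpoints $s,t$, so any two semiwalks from $s$ to $t$ have equal label, and $\Ggraph$ is balanced by \defref{d:bggsw}. I would probably present the short version using \lemref{l:sbalancewitch} as the main line, since it matches the way the corollary is framed as recovering "the form found in the literature".

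There is really no substantive obstacle here; the only things to be careful about are the two small points flagged above — that the label of a semiwalk is defined through the symmetrised group graph (so one must check the all-identity graph stays all-identity under symmetrisation, which is trivial), and that switching equivalence is genuinely an equivalence relation (the inverse switching uses $\gamma(v)\mapsto(\gamma(v))^{-1}$), so that \lemref{l:sbalancewitch} may be applied in the direction "balanced $\tilde{\Ggraph}$ $\Rightarrow$ balanced $\Ggraph$". Neither requires more than a sentence.
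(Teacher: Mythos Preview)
Your proposal is correct and matches the paper's approach: the paper presents \corref{c:balswitch} without an explicit proof, as an immediate consequence of the switching machinery just developed (in particular \lemref{l:sbalancewitch}), which is precisely the lemma you invoke in your main line. Your alternative direct argument via \lemref{l:swalkswitch} is also valid and is essentially equivalent to observing that the hypothesis is a restatement of \thref{t:propbgg}.\ref{i:node} with $g_v=(\gamma(v))^{-1}$, which the paper itself notes in \secref{s:summarybgg}.
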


	\Lemref{l:swalkswitch} tells us is that whenever we are only interested in the labels of walks or semiwalks on a group graph, then there is essentially no difference between a very large number of group graphs with distinct link labels, those $|\Gcal|^{N}$ ($N={|\Ncal|}$) group graphs related by switching transformations. In particular, for the dynamical processes discussed below, these group graphs will all have the same time evolution as that is controlled by the topology of the network, not by the group structure. Essentially this is as if each switching transformation is changing the coordinates used at each node for some dynamical process but does not changes the actual dynamics.
	
	Given that so many group graphs are related to one another by switching transformations, it is not surprising to learn that these switching transformation do destroy some structure which is of interest.  From \lemref{l:nodelabelswitch} and \corref{c:balswitch} we deduce that every switching equivalent group graph with a distinct set of link labels needs a different switching transformation to reach the group graph with all link labels equal to the identity, and so has a different set of node labels. While we have seen there is a simple relationship between $|G|$ such sets of node labels, that still leaves $|\Gcal|^{N-1}$ distinct sets of node labels and so $|\Gcal|^{N-1}$ distinct balanced group graphs. One way to see how these switching equivalent balanced group graphs are different is to look at the example in \figref{f:bggswVL}. This shows how one one switching transinformation applied to the balanced $S_3$ group graph of \figref{f:bggex}.ii produces a completely different set of blocks, nodes communities, as defined by the node labels.  Since these often correspond to meaningful groups, such as groups of mutual friends, the fact that switching transformations usually completely change these blacks shows that they may not be very relevant in some practical situations.
	
	\begin{figure}[htb]
		\begin{center}
			\includegraphics[width=0.9\textwidth]{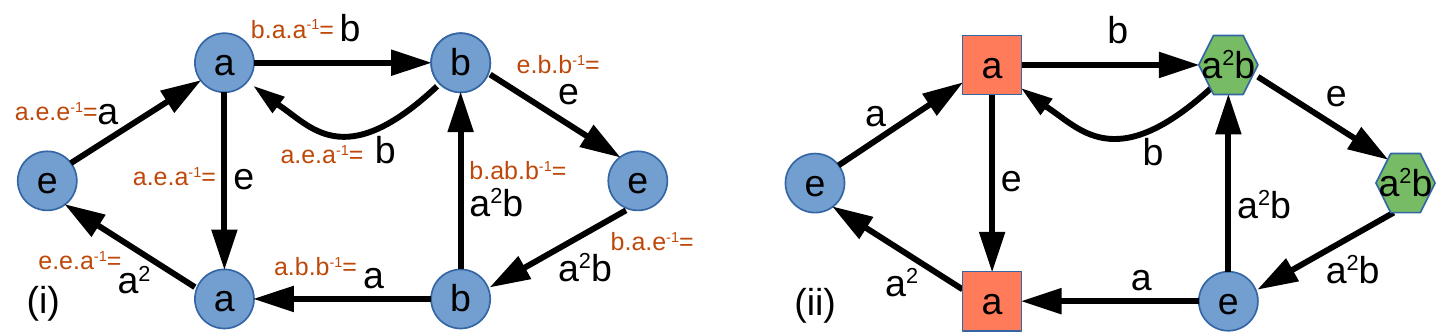}
		\end{center}
		\caption{A balanced $S_3$ group graph which is switching equivalent to that of \figref{f:bggex}.ii. On the left in (i) the symbol inside the nodes shows the group element $\gamma(v)$ used for each node in the switching transformation. The switching transformation $\linkmaptilde((v,u)) := \gamma(v)\linkmap((v,u)) (\gamma(u))^{-1}$ for each link is shown in a small red font next to the current link label $\linkmaptilde((v,u))$ shown in black beside each link. For instance $\linkmaptilde((3,5)) = \gamma(3)\linkmap((5,3)) (\gamma(u))^{-1} = b.ab.(b)^{-1} =a^2b$ using the node numbering from \figref{f:bggex}.ii. Note that this remains a balanced group graph after the switching transformation from \figref{f:bggex}.ii illustrating \lemref{l:sbalancewitch}.
			\\
			On the right in (ii) is the same group graph as (i) but now the node labels $g_v$ of \lemref{l:nodelabel} are shown inside each node symbol. The node shape and colour in (ii) indicates membership of one particular block of the node partition which respects the group structure. Note that the node partition of the switched group graph here is completely different from that found in the original group graph as shown in \figref{f:bggexVL}.ii or \figref{f:ggcnstntexVLrl}.ii. 
		}
		\label{f:bggswVL}
	\end{figure}

	\section{Stochastic matrices and monoids}\label{as:stochmatmonoid}
	
	\subsection{Monoids}\label{as:monoid}
	
	Groups are useful when we have a reversible process on a network because every group element has a unique inverse which is used to represent this reversibility in group graphs. However, many processes of interest on a graph are not reversible such as diffusion. If we drop the requirement for an inverse from the axioms that define a group, we have what is known as a monoid.
	
	\begin{definition}[A monoid]
		A \tsedef{monoid} $\Mcal$ is a set of elements and a `multiplication' law $\times : \Mcal \to \Mcal$ that obeys three axioms: closure ($g_1 \times g_2 \in \Mcal$), associativity ($g_1 \times (g_2 \times g_3) = (g_1 \times g_2) \times g_3$) and identity ($\exists \; e \in \Mcal$ s.t.\ $e \times g = g \times e =g$) where these hold for any elements $g,g_1,g_2,g_3 \in \Mcal$. 
	\end{definition}
	
	The lack of an inverse means monoids are associated with processes with no reverse, always decreasing or always moving forward. An example of a representation of a monoid is a set of Markovian matrices $\Mcal = \{ \Pmat | \sum_{i} P_{ij} =1 \}$ under matrix multiplication. The unit matrix is Markovian and it represents the identity element, $\Dmat(e) = \unitmat$. Associativity comes with matrix multiplication. 
	
	Closure comes as follows. Let $\Pmat(a)$ and $\Pmat(b)$ be Markovian matrices representing monoid elements $a,b \in \Mcal$. Then we have that
	\beq
	\sum_i [\Pmat(ab)]_{ik}
	=
	\sum_i [\Pmat(a)\Pmat(b)]_{ik}
	=
	\sum_i \sum_k [\Pmat(a)]_{ik} [\Pmat(b)]_{kj}
	=
	\sum_k  [\Pmat(b)]_{kj}
	=1
	\eeq
	Thus $\Pmat(ab)=\Pmat(a)\Pmat(b)$ is also Markovian and we have closure.
	
	The inverse matrix is not usually Markovian (the identity matrix shows some inverses are in the set) so there is no inverse in general and so these matrices do not form a group.
	
	An simple example of a monoid of three elements is a simple extension of the $Z_2$ group. This is defined by the multiplication table
	\beq
	\begin{tabular}{c|ccc}
		& $e$    & $a$    & $b$    \\ \hline
		&        &        &        \\[\dimexpr-\normalbaselineskip+2pt]
		$e$    & $e$    & $a$    & $b$    \\
		$a$    & $a$    & $e$  & $b$    \\
		$b$    & $b$    & $b$    & $b$    \\
	\end{tabular}
	\label{ae:monoid3}
	\eeq
	One two-dimensional representation of this monoid \eqref{ae:monoid3} in terms of three stochastic matrices is
	\beq
	\Dmat(e) =
	\begin{pmatrix}
		1 & 0 \\
		0 & 1
	\end{pmatrix}
	\, ,
	\quad
	\Dmat(a) =
	\begin{pmatrix}
		0 & 1 \\
		1 & 0
	\end{pmatrix}
	\, ,
	\quad
	\Dmat(b) =
	\begin{pmatrix}
		1/2 & 1/2 \\
		1/2 & 1/2 
	\end{pmatrix}
	\, .
	\label{ae:monoid3rep2d}
	\eeq
	Note in particular that $\Dmat(b)$ is not unitary and it is not even invertible.
	These types of process described by general stochastic matrices are not reversible, hence the lack of a unitary representation, but they are often useful when representing processes occurring on a network. So monoid graphs rather than group graphs might be of interest.
	
	\subsection{Stochastic matrices, monoids and groups}\label{as:stochmat}
	
	A stochastic matrix $\Smat$ is a square matrix whose entries lie between zero and one, $0 \leq S_{ij}\leq 1$ and whose columns add to one $\sum_{i=1}^d S_{ij} =1$. The set of all $d$-by-$d$ stochastic matrices $\Scal(d)$ forms a monoid under matrix multiplication. 
	
	The identity element is the unit matrix which is a stochastic matrix. Associativity is a property of matrix multiplication. This leaves closure which can be shown as follows. 
	
	If we have real matrices $\Smat$ and $\Tmat$ whose columns sum to one, such as stochastic matrices, then the sum of the columns of the product matrix $\Smat\Tmat$ are also always one, 
	\beq
	\sum_{i=1}^d [\Smat\Tmat]_{ik} 
	= \sum_{i=1}^d  \sum_{j=1}^d  S_{ij} T_{jt} 
	= \sum_{j=1}^d  T_{jt} 
	=1 \, .
	\label{ae:colsumone}
	\eeq
	At the same time, the entries of the product matrix $\Smat \Tmat$ are non-negative since they are the sum of products of non-negative entries of $\Smat$ and $\Tmat$. However, we also know that since $S_{ij} \leq 1$ then
	\beq
	[\Smat\Tmat]_{ik} 
	= \sum_{j=1}^d  S_{ij} T_{jt} 
	\leq \sum_{j=1}^d  T_{jt} =1
	\eeq
	So entries in the product matrix $\Smat\Tmat$ again lie between zero and one inclusive so $\Smat\Tmat$ is a stochastic matrix if $\Smat$ and $\Tmat$ are stochastic matrices; $\Smat\Tmat\in \Scal$ if $\Smat,\Tmat \in \Scal(d)$, (closure).
	
	For instance, the monoid $\Scal(2)$ of two dimensional stochastic matrices under matrix multiplication may be specified as the set $\{ \Smat(a,b) | \, a,b \in \Rbb, \; 0 \leq a, b \leq 1\}$  where
	\beq
	\Smat(a,b) =
	\begin{pmatrix}
		a & (1-b) \\
		(1-a) & b 
	\end{pmatrix}  \, .
	\label{e:Sabdef}
	\eeq
	Closure in this example is expressed by $\Smat(a_1,b_1)\Smat(a_2,b_2)=\Smat(a,b)$ with $a=a_1a_2+(1-b_1)(1-a_2)$ and $b=b_1b_2+(1-a_1)(1-b_2)$.

	The inverse of a stochastic matrix may not exist. Examples include would the matrix $S_{ij} =1/d$ or indeed any stochastic matrix with two identical columns. So the set of $d$-by-$d$ stochastic matrices $\Scal(d)$ does not form a group under matrix multiplication. 
	For example $\Scal(2)$ we see that
	\beq
	\big(\Smat(a,b)\big)^{-1} =
	\frac{1}{1-a-b}\begin{pmatrix}
		-b & (1-b) \\
		(1-a) & -a 
	\end{pmatrix}  \, .
	\eeq
	So any two-dimensional stochastic matrix where $a+b=1$ has no inverse (an example where two columns are identical).
	
	However, the inverse matrix of many stochastic matrices does exist. If the matrix exists, the columns will sum to one but the values of the entries while real are no longer restricted to lie between zero and one inclusive. To show the columns of the inverse of any stochastic matrix sum to one, should the inverse exist, we start from $\Smat.\Smat^{-1}=\unitmat$. Then we see that using only the property that the column sum of $\Smat$ is one we have that
	\bea
	\sum_{i=1}^d [\Smat.\Smat^{-1}]_{ik} = \sum_{i=1}^d \delta_{ik}  = 1
	\\
	\Rightarrow \quad 
	1 = \sum_{i=1}^d \sum_{j=1}^d S_{ij} [\Smat^{-1}]_{jk} 
	= \sum_{j=1}^d [\Smat^{-1}]_{jk} \, .
	\label{ae:invcolsumone}
	\eea
	A standard property of real matrices is that the inverse has real entries (this can be shown by looking at how such inverses are computed). 
	
	So we can consider a set $\Scalhat(d)$ of real $d$-by-$d$ matrices whose columns sum to one and which have an inverse which I will call general stochastic matrices. These form do form a group with the subset $\Scal(d)$ forming a monoid.
	
	Identity is the unit matrix, and is in this set. Associativity is a property of matrix multiplication. Every element has an inverse (by definition) and, as we showed above in \eqref{ae:invcolsumone}, the inverse has columns which sum to one.  Note the proof in \eqref{ae:invcolsumone} holds for the more general case where $\Smat \in \Scalhat$.
	We have closure in $\Scalhat(d)$ as the product matrix $\Smat.\Tmat$ is clearly a matrix of real values. The product matrix has an inverse $\Tmat^{-1}.\Smat^{-1}$ which exists since $\Smat^{-1}$ and $\Tmat^{-1}$ exist. The columns of the columns of the product $\Smat\Tmat$ sum to one as the proof in \eqref{ae:colsumone} only requires that the columns of $\Smat$ and $\Tmat$ each sum to one which we have in $\Scalhat(d)$. 
	
	Thus the general stochastic matrices, the set $\Scalhat(d)$ of real $d$-by-$d$ matrices whose columns sum to one and which have an inverse, do form a group.
	For instance, the group of two-dimensional generalised stochastic matrices $\Scalhat(2)$ may be specified as the set $\{ \Smat(a,b)| a,b \in \Rbb, \; a+b \neq 1  \}$ where $\Smat(a,b)$ was defined in \eqref{e:Sabdef}.

	\section{Adjacency matrices for group graphs}\label{as:adjmat}
	
	There is a close connection between a \tsedef{gain graph}, a \tsedef{voltage graph}, a \tsedef{biassed graph} and the group graphs defined here. I will examine this relationship here.
	
	A gain graph is built from an algebra $A$ over a field $F$ by adding an extra multiplication rule to a vector space. So we will define our notation by first noting the definition of these two structures.

	\subsection{Background}
	
	\begin{definition}[Vector Space]\label{d:vectorspace}
		A \tsedef{vector space} $\Vcal$ over a field $\Fcal$ is set of objects $\{\vvec\}$ known as \tsedef{vectors} with two  operations, vector addition denoted by ``$+$'' and scalar multiplication denoted by ``$.$''. These obey the following rules
		\begin{enumerate}
			\item Associativity of vector addition $\uvec + (\vvec + \wvec) = (\uvec + \vvec) + \wvec$.
			\item 
			Commutativity of vector addition $\uvec + \vvec = \vvec + \uvec$.
			\item 
			Identity element of vector addition. There exists an element denoted $\zerovec \in \Vcal$, called the \tsedef{zero vector}, such that $\vvec + \zerovec  = \vvec$ for all vectors $\vvec \in V$.
			\item \label{i:idvecadd}
			Inverse elements of vector addition. For every $\vvec \in V$, there exists an inverse element denoted $-\vvec \in V$, called the \tsedef{additive inverse} of $\vvec$, such that $\vvec + (-\vvec) = \zerovec$.
			\item 
			Compatibility of scalar multiplication with field multiplication $a.(b.\vvec) = (ab).\vvec$.
			\item \label{i:idscalarmult}
			Identity element of scalar multiplication	$1.\vvec = \vvec$, where $1 \in \Fcal$ denotes the multiplicative identity in the field $\Fcal$.
			\item
			Distributivity of scalar multiplication with respect to vector addition $a.(\uvec + \vvec) = (a.\uvec) + (a.\vvec)$
			\item
			Distributivity of scalar multiplication with respect to field addition $(a + b).\vvec = (a.\vvec) + (b.\vvec)$
		\end{enumerate}
		Here $\uvec$, $\vvec$ and $\wvec$ are all arbitrary vectors so $\uvec,\vvec,\wvec \in \Vcal$ while the elements of the field, known as \tsedef{scalars}, $a$ and $b$ are arbitrary elements so $a,b \in \Fcal$. 
	\end{definition}	
	
	Notation. There are two distinct types of multiplication here. The multiplication between any two elements $a$ and $b$ of the field is denoted with out any explicit symbol as $ab$. Formally this multiplication without a symbol is a map from $\Fcal \times \Fcal \to \Fcal$. This is strictly different from the multiplication between a scalar $a\in \Fcal$ from the field and any vector $\vvec \in \Vcal$ from the vector space to give a new vectors denoted as $a.\vvec \in \Vcal$. However, it is common to drop the explicit ``$.$'' notation and write $a\vvec \equiv a.\vvec$ when the context and other notation makes it clear that $a$ is a scalar and $\vvec$ is a vector.
	
	We do not give the full definition of a field here as we will restrict our work to example to the field of complex numbers $\Fcal=\Cbb$ which also covers the common cases where scalars are limited further to be real numbers or simply integers. We will just highlight here that these scalars, the elements of fields, include both a zero element $0 \in \Cbb = \Fcal$ and an identity element $1 \in \Cbb = \Fcal$ as needed by the axioms of a field. By restricting ourselves to examples using the field of complex numbers, these $0$ and $1$ are simply the natural integers zero and one as the notation suggests. 
	
	However, the notations 0 and 1 are often used to represent similar zero and identity elements in different constructions. For example, the $1$ in point \ref{i:idscalarmult} of the definition of an algebra is the unit element of the field, that is this \emph{scalar} is the natural number one $1 \in \Cbb = \Fcal$. However the $\zerovec$ in point \ref{i:idvecadd} is the zero \emph{vector} $\zerovec \in \Vcal$ while this can always be given in terms of the zero element of the field, the number zero $0 \in \Fcal$, that is $\zerovec = 0 \vvec$. While these two objects $0 \in\Cbb$ and $\zerovec \in \Vcal$ are formally distinct, they are closely related. Most of the time it will be clear from the context which zero or identity element is referred to but equally, on occasion it can pay dividends to be very precise about what type of object your symbols ``0'' and ``1'' refer to.    
	
	Algebra is a term that is used to describe a type of mathematics involving the manipulation of abstract objects, rather than numbers or geometrical concepts. However, here we will use the term ``algebra'' as a shorthand for a specific structure built on top of a vector space.
	\begin{definition}[Algebra]
		An algebra $\Acal$ over a field $\Fcal$ is a vector space over the field $\Fcal$ with an additional binary operation denoted here by $\times$ mapping a pair of algebra elements back onto the algebra, $\times : \Acal \times \Acal \to \Acal$. The operation $\times$ has to obey three identities for all elements, vectors, in the algebra, $\xvec, \yvec, \zvec \in \Acal$ and for all elements (the ``scalars'') $a$ and $b$ from the field, $a,b \in \Fcal$:
		\begin{enumerate}
			\item Right distributivity: $(\xvec + \yvec) \times \zvec = (\xvec \times \zvec) + (\yvec \times \zvec)$.
			\item Left distributivity:  $\zvec \times (\xvec + \yvec) = (\zvec \times \xvec) + (\zvec \times \yvec)$.
			\item Compatibility with scalars: $(a.\xvec)\times (b.\yvec) = (ab) . (\xvec \times \yvec)$.
		\end{enumerate}
	\end{definition}
	These three axioms define what is called a \tsedef{bilinear binary operation}. At this stage we now have three types of multiplication: scalar multiplication $ab\in\Fcal$, scalar multiplication $a.\vvec \in \Vcal$, and this bilinear binary operation $\times$ that is added by the definition of the algebra, $\uvec \times \vvec \in \Acal$. Again, when the context is clear, it is not uncommon to drop the explicit multiplication symbol used when two elements are combined from any combination of scalar and vector.
	
	We will focus on one particular algebra for our discussion of gain graphs which is the algebra of a group over the complex numbers, denoted as $\Cbb \Gcal$. Here the field used is that of complex numbers hence the $\Cbb$ in the notation. The group multiplication law is the one used for the multiplication $\times$ needed for the algebra. However, a group is not apriori a vector space as there is no multiplication by scalars drawn from a field and we need a second binary operation between group elements to play the role of addition of vectors. The simple solution used here is that we just define these two operation to exist such that they obey the rules required by the vector space and algebra definitions. That is the elements of our algebra are given by
	\beq
	\xvec = \sum_{g \in \Gcal} x_g . g = x_e .e + x_a.a + \ldots \in \Cbb \Gcal
	\label{ae:vecdef}
	\eeq
	where $x_g \in \Cbb = \Fcal$ are the scalars and $g \in \Gcal$ are the group elements which play the role of the vectors in the vector space. One way to think about it is that the scalars $x_g$ are playing the role of the coordinates of the vector and the group elements $g$ are the basis vectors. This is because the vector addition and scalar multiplication rules follow exactly as the notation suggests.

	First the required vector addition operation, ``$+$'' of \defref{d:vectorspace}, is defined as follows 
	\bea
	\xvec + \yvec 
	= 
	\left(\sum_{g \in \Gcal} x_g . g \right) 
	+
	\left(\sum_{h \in \Gcal} y_h . h \right)
	&:=&
	\sum_{g \in \Gcal} (x_g +y_g) . g 
	\eea
	The addition of two elements of the field $\Fcal$ in the factor $(x_g +y_g)$ is simply the addition rule that comes with the field, i.e.\ here just the standard way of adding two complex numbers. 
	
	The multiplication between algebra elements, the bilinear binary operation denoted by ``$\times$'', is simply built on the group multiplication law, which, in a slight abuse of notation, we will also denote by $\times$. Remembering that we have to include the scalars then the multiplication rule of the algebra $\Cbb \Gcal$ for two arbitrary vectors is given by 
	\bea
	\xvec \times \yvec = 
	\left(\sum_{g \in \Gcal} x_g . g \right) 
	\times 
	\left(\sum_{h \in \Gcal} y_h . h \right)
	&:=& 
	\sum_{g,h \in \Gcal} (x_h y_g) . (g \times h) 
	\label{e:CGmult}
	\eea
	The $(x_h y_g)$ is the multiplication of two scalars, two elements of a field, and uses the scalar multiplication which comes with the field, simply the standard rule for multiplying two complex numbers in our case. The $(g \times h)$ uses the group multiplication law. Finally the ``$.$'' indicates the scalar-vector multiplication defined as part of a vector space in \defref{d:vectorspace}.

	\subsection{Adjacency matrix network representation}
	
	Finally we come to the construction of networks, graphs, based on the algebra $\Cbb \Gcal$. This is similar to the approach of \citet{CDD21} but the algebra used here is different.
	
	One approach to networks is to represent them in terms of their adjacency matrix, $\Amat$, where the entry $A_{ji}$ is a complex number which represents the weight or value associated with an edge from node $v_i$ to node $v_j$. In most cases, the edge weights $A_{ji}$ are real, often non-negative integers and indeed even limited to be zero (no edge) or one (unweighted edge present). An entry of zero always represents no edge.  This representation describes any network where only one edge is allowed from any given node $v_i$ to any given node $v_j$, i.e.\ a network without multiedges. Self-loops, weighted edges and directed edges are all encoded nicely by this adjacency matrix representation. 
	
	Typical network analysis then requires that expressions such as $\sum_{j=1}^N A_{ij}A_{jk}$ or the adjacency matrix acting on a vector $\wvec$, so expressions such as $\sum_j A_{ij}w_j$, are all well behaved. When the values of the matrices and vectors are complex numbers (including reals, integers and binary values) we have traditional linear algebra expressions and such expressions are that are widely exploited in network science.
	
	These matrix and vector expressions only require that the entries can be added and multiplied in an appropriate way. So the adjacency matrix entries, the edge weights, can be taken from any field, not just the complex numbers, and much of the linear algebra results from network analysis will still hold. 
	
	However, another option exists where the multiplication and addition needed for network analysis using the adjacency matrix can be based on the bilinear operations of an algebra.  This is one way to look G-gain groups. 
	So suppose now that we work with an adjacency matrix $\Amat$ and node vectors $\wvec$ taking values from the algebra $\Cbb \Gcal$, that is $A_{ji}\in \Cbb \Gcal$ and $w_j \in \Cbb \Gcal$. 
	
	For instance, for a finite group 
	we can interpret matrix multiplication in terms of the rules for multiplication $\times$ of two algebra elements in $\Cbb \Gcal$ given in \eqref{e:CGmult}. We define 
	\beq
	A_{ji} = \sum_g A_{ji}^{(g)} g 
	\, , \quad
	w_{j}  = \sum_g  w_{j}^{(g)} g 
	\eeq
	where  $A_{ji}^{(g)}, w_{j}^{(g)} \in \Cbb$ and $g \in \Gcal$. Note that the lack of an edge is the zero element of the algebra (the identity under the addition), that is, in a slight abuse of notation, the algebra element $O = \sum_g c_g g \in \Cbb \Gcal$ where $c_g=0$ is used to represent no edge present. 
	Then we can use the $\Cbb \Gcal$ algebra rules to evaluate expressions such as
	\bea
	[\Amat\Amat]_{ik} 
	&=& 
	\sum_{j=1}^N A_{ij} \times A_{jk}
	\\
	&=& 
	\sum_{g,h} \left(\sum_{j=1}^N A_{ij}^{(g)} A_{jk}^{(h)} \right).(g \times h) 
	\in \Cbb \Gcal \, .
	\eea
	
	One feature of note with this approach is that a zero ``value'' in an adjacency matrix can play two roles. Here a zero in the adjacency matrix is a shorthand for the zero vector in the vector space, where the complex coordinates are all zero $x_g=0$ in \eqref{ae:vecdef}. Such an entry in the adjacency matrix represents the lack of any relationship between two nodes in one direction. However, we can have groups where the number zero is also representing one of the elements of the group. A good example is the group of integers under addition where the number zero represents the identity element of the group $e$. For instance, this group is used to represent the differences in status, formal and perceived, between individuals in an organisation. In this case, if there is a working relationship but it is between two people of equal status then we use the group element $e$ represented by the number zero as the link label. The entry in the adjacency matrix is the vector $\xvec = \sum_g x_g g = 1.D(e)$ so $x_e=1$ but we use $D(e)=0$, a zero, to label the link. For this group in this representation, the notation ``0'' can mean two very different things.

%
%


\bibliographystyle{abbrvnat} 

\end{document}